\definecolor{Red}{HTML}{E53E30}  %
\definecolor{Green}{HTML}{00AD69}  %
\definecolor{Blue}{HTML}{2171b5}
\definecolor{Purple}{HTML}{652F6C}  %
\theoremstyle{plain}               
\newtheorem{thm}{Theorem}[section]
\newtheorem{fact}{Fact}[section]
\newtheorem{lem}{Lemma}[section]
\newtheorem{cor}{Corollary}[section]
\newtheorem{prop}{Proposition}[section]
\newtheorem{defi}{Definition}[section]
\definecolor{Green}{HTML}{00AD69}  %
\def\beq{\begin{equation}}
\def\eeq{\end{equation}}
\def\bq{\begin{quote}}
\def\eq{\end{quote}}
\def\ben{\begin{enumerate}}
\def\een{\end{enumerate}}
\def\bit{\begin{itemize}}
\def\eit{\end{itemize}}
\def\lb{\left(}
\def\rb{\right)}
\def\l|{\left|}
\def\r|{\right|}
\newcommand\C{\mathbbm{C}}
\newcommand\M{\mathcal{M}}
\newcommand{\ketbra}[1]{|#1\rangle\langle#1|}
\newcommand{\tr}[1]{\text{tr}\lb#1\rb}
\newcommand{\rl}[2]{S\lb#1\|#2\rb}
\newcommand{\setC}{\mathbb{C}}
\newcommand{\cO}{\mathcal{O}}
\newcommand{\tcO}{\tilde{\mathcal{O}}}
\begin{document}

\title{Fast and robust quantum state tomography from few basis measurements}

\author{Daniel Stilck Fran\c{c}a}
\affiliation{QMATH, Department of Mathematical Sciences, University of Copenhagen, DK 2100}

\author{Fernando G.S L. Brand\~{a}o}
\affiliation{AWS Center for Quantum Computing, Pasadena, CA 91125}
\affiliation{Institute for Quantum Information and Matter, California Institute of Technology, Pasadena,~CA~91125}

\author{Richard Kueng}
\affiliation{Institute for Integrated Circuits, Johannes Kepler University Linz, AT 4040}

\begin{abstract}
Quantum state tomography is a powerful, but resource-intensive, general solution for numerous quantum information processing tasks. This motivates the design of robust tomography procedures that use relevant resources as sparingly as possible. Important cost factors include the number of state copies and measurement settings, as well as classical postprocessing time and memory. In this work, we present and analyze an online tomography algorithm designed to optimize all the aforementioned resources at the cost of a worse dependence on accuracy.
The protocol is the first to give provably optimal performance in terms of rank and dimension for state copies, measurement settings and memory.
Classical runtime is also reduced substantially and numerical experiments demonstrate a favorable comparison with other state-of-the-art techniques.
Further improvements are possible by executing the algorithm on a quantum computer, giving a quantum speedup for quantum state tomography.
\end{abstract}

\maketitle
\flushbottom

\section{Motivation}

Quantum state tomography is the task of reconstructing a classical description of a quantum state from experimental data.
This problem has a long and rich history \cite{Gross2013} and remains a useful subroutine for building, calibrating and controlling quantum information processing devices. 
Over the last decade, unprecedented advances in the experimental control of quantum architectures have pushed traditional estimation techniques 
to the limit of their capabilities. This is mainly due to a fundamental curse of dimension: the dimension of state space grows exponentially in the number of qudits, i.e.\ a quantum system comprised of $n$ $d$-dimensional qudits is characterized by a density matrix $\rho$ of size $D=d^n$. 
The impact of this scaling behavior is further amplified by the probabilistic nature of quantum mechanics (``wave-function collapse''). Information about the state is only accessible via measuring the system. An informative quantum measurement is destructive and only yields probabilistic outcomes. Hence, many identically prepared samples of the quantum state are required to estimate even a single parameter of the underlying state. Characterizing the full state of a quantum system necessitates accurate estimation of many such parameters. 
Storing and processing the measurement data also requires
substantial amounts of classical memory and computing power -- another important practical bottleneck.
To summarize: the curse of dimension and wave-function collapse have severe implications that necessitate the design of extremely resource-efficient protocols.

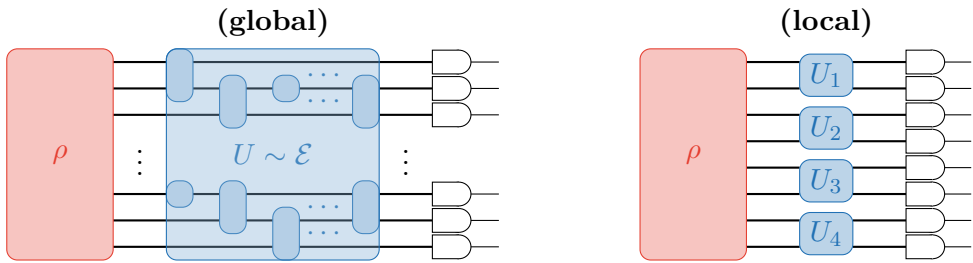
\begin{figure}
    \centering
\begin{tabular}{lcr}
\begin{tikzpicture}[baseline,scale=0.7]
\foreach \y in {-1.75,-1.25,-0.75,0.75,1.25,1.75}
{
\draw[thick] (-3,\y) -- (3,\y);
\draw (3.5,\y) circle (0.225);
\fill[white] (3,\y-0.225) rectangle (3.5,\y+0.225);
\draw (3.5,\y-0.225) -- (3,\y-0.225) -- (3,\y+0.225) -- (3.5,\y+0.225);
\draw (3.725,\y) -- (4.25,\y);
}
\node at (-2.5,0) {$\vdots$};
\node at (2.5,0) {$\vdots$};
\node at (0,0) {\textcolor{Blue}{$U \sim \mathcal{E}$}};
\draw[rounded corners,Red] (-5,-2) rectangle (-3,2);
\fill[rounded corners,Red,opacity=0.3]
(-5,-2) rectangle (-3,2);
\node at (-4,0) {\textcolor{Red}{$\rho$}};
\fill[white, rounded corners] (-2,1.0) rectangle (-1.5,2);
\draw[Blue, rounded corners] (-2,1) rectangle (-1.5,2);
\fill[Blue, rounded corners,opacity=0.2] (-2,1) rectangle (-1.5,2);
\fill[white, rounded corners] (-1,0.5) rectangle (-0.5,1.5);
\fill[Blue, rounded corners,opacity=0.2] (-1,0.5) rectangle (-0.5,1.5);
\draw[Blue, rounded corners] (-1,0.5) rectangle (-0.5,1.5);
\fill[white, rounded corners] (0,1) rectangle (0.5,1.5);
\fill[Blue, rounded corners,opacity=0.2] (0,1) rectangle (0.5,1.5);
\draw[Blue, rounded corners] (0,1) rectangle (0.5,1.5);
\fill[white, rounded corners] (1.5,0.5) rectangle (2,1.5);
\fill[Blue, rounded corners,opacity=0.2] (1.5,0.5) rectangle (2,1.5);
\draw[Blue, rounded corners] (1.5,0.5) rectangle (2,1.5);
\node at (1,1) {\textcolor{Blue}{$\cdots$}};
\node at (1,1.5) {\textcolor{Blue}{$\cdots$}};
\fill[white,rounded corners] (-2,-1) rectangle (-1.5,-0.5);
\fill[Blue,rounded corners,opacity=0.2] (-2,-1) rectangle (-1.5,-0.5);
\draw[Blue,rounded corners] (-2,-1) rectangle (-1.5,-0.5);
\fill[white,rounded corners] (-1,-1.5) rectangle (-0.5,-0.5);
\fill[Blue,rounded corners,opacity=0.2] (-1,-1.5) rectangle (-0.5,-0.5);
\draw[Blue,rounded corners] (-1,-1.5) rectangle (-0.5,-0.5);
\fill[white,rounded corners] (0,-2) rectangle (0.5,-1);
\fill[Blue,rounded corners,opacity=0.2] (0,-2) rectangle (0.5,-1);
\draw[Blue,rounded corners] (0,-2) rectangle (0.5,-1);
\fill[white,rounded corners] (1.5,-1.5) rectangle (2,-0.5);
\fill[Blue,rounded corners,opacity=0.2] (1.5,-1.5) rectangle (2,-0.5);
\draw[Blue,rounded corners] (1.5,-1.5) rectangle (2,-0.5);
\node at (1,-1) {\textcolor{Blue}{$\cdots$}};
\node at (1,-1.5) {\textcolor{Blue}{$\cdots$}};
\fill[white,rounded corners,opacity=0.2] (-2,-2) rectangle (2,2);
\fill[Blue,rounded corners,opacity=0.2] (-2,-2) rectangle (2,2);
\draw[Blue,rounded corners] (-2,-2) rectangle (2,2);
\node at (0,2.5) {\textbf{(global)}};
\end{tikzpicture}
& \hspace{1cm} &
\begin{tikzpicture}[baseline,scale=0.7]
\foreach \y in {-1.75,-1.25,-0.75,-0.25,0.25,0.75,1.25,1.75}
{
\draw[thick] (-1.5,\y) -- (-0.5,\y);
\draw[thick] (0.5,\y)-- (1.5,\y);
\draw (2,\y) circle (0.225);
\fill[white] (1.5,\y-0.225) rectangle (2,\y+0.225);
\draw (2,\y-0.225) -- (1.5,\y-0.225) -- (1.5,\y+0.225) -- (2,\y+0.225);
\draw (2.225,\y) -- (2.725,\y);
}
\foreach \y in {-2,-1,0,1}
{
\draw[Blue, rounded corners] (-0.5,\y+0.1) rectangle (0.5,\y+0.9);
\fill[Blue, rounded corners,opacity=0.3] (-0.5,\y+0.1) rectangle (0.5,\y+0.9);
}
\draw[rounded corners,Red] (-3.5,-2) rectangle (-1.5,2);
\fill[rounded corners,Red,opacity=0.3]
(-3.5,-2) rectangle (-1.5,2);
\node at (-2.5,0) {\textcolor{Red}{$\rho$}};
\node at (0,1.5) {\textcolor{Blue}{$U_1$}};
\node at (0,0.5) {\textcolor{Blue}{$U_2$}};
\node at (0,-0.5) {\textcolor{Blue}{$U_3$}};
\node at (0,-1.5) {\textcolor{Blue}{$U_4$}};
\node at (0,2.5) {\textbf{(local)}};
\end{tikzpicture}
\end{tabular}
    \caption{\emph{Basis measurement primitive.} Global measurements (right) require implementing a global unitary that affects all qubits prior to measuring in the computational basis. A $k$-local measurement primitive only allows for unitaries that affect groups of $k$ (geometrically) local qubits; see the left-hand side for a visualization with $k=2$.}
    \label{fig:illustration}
\end{figure}

In this work, we focus on 
reconstructing the complete density matrix $\rho$ from single-copy measurements.
This is an actual restriction, as it excludes some of the most powerful tomography techniques known to this date \cite{Wright2016, Haah2017}. While very efficient in terms of state copies, these procedures are very demanding in terms of quantum hardware -- an actual implementation would require exponentially long quantum circuits that act
collectively on all the copies of the unknown state stored in a quantum memory.

We also adopt a measurement primitive that mimics the layout of modern quantum information processing devices. 
Apply a unitary $U$ to the unknown state $\rho \mapsto U \rho U^\dagger$ and perform measurements in the computational basis $\left\{|i \rangle:\;i=1,\ldots,D\right\}$. Fixing $U$ and repeating this procedure many times allows for estimating the associated outcome distribution:
\begin{equation}
\left[p_U (\rho) \right]_i = \langle i| U \rho U^\dagger |i \rangle \quad \text{for $i=1,\ldots,D$.} \label{eq:diagonal-measurement}
\end{equation}
This outcome distribution characterizes the diagonal elements of $U \rho U^\dagger$. In general, access to a single diagonal is insufficient to determine $\rho$ unambiguously. Instead, multiple repetitions of this basic measurement primitive are necessary. 
We refer to Fig.~\ref{fig:illustration} for an illustration.
Different ensembles $\mathcal{E}$ of accessible unitary transformations give rise to different basis measurement primitives.
When employed to perform state tomography -- i.e.\ reconstruct an unknown state $\rho$ up to accuracy $\epsilon$ in trace distance --
the following fundamental scaling laws apply to \emph{any} (single-copy) basis measurement primitive and \emph{any} tomographic procedure:
\begin{enumerate}
    \item[i.] The \emph{number of basis measurement settings} $M$ must scale at least linearly with the (effective) target rank $r = \mathrm{rank}(\rho)$: $M = \Omega (r)$. This corresponds to estimating a total of $DM = \Omega (rD)$ parameters \cite{Heinosaari2013,Kech2017}.
    \item[ii.] The \emph{sampling rate} $N$, i.e.\ the number of independent state copies required to obtain sufficient data, must depend on rank, dimension and desired accuracy: $N = \Omega \left( D r^2 /\epsilon^2\right)$ \cite{Haah2017}.
    \item[iii.] The \emph{classical storage} $S$ is bounded by dimension times target rank: $S = \Omega (rD)$.
\end{enumerate}
Constraint iii.\ follows from a simple parameter counting argument -- specifying a general $D \times D$-matrix with rank $r$ requires (order) $rD$ parameters -- while i.\ and ii.\ reflect fundamental limitations that
have only been identified comparatively recently. 
These bounds cover three of the four most relevant cost parameters. For the last one we are not aware of a nontrivial rigorous lower bound:
\begin{enumerate}
\item[iv.] The \emph{classical runtime} associated with processing the measurement data to produce an estimated state $\sigma_\star$ should be as fast as possible.
\end{enumerate}

The last decade has seen the development of several procedures that provably optimize (at least) some of these four cost factors up to logarithmic factors in the ambient dimension. We refer to Table~\ref{tab:comparison} for a detailed tabulation of resource requirements. For now, we content ourselves with emphasizing that existing procedures have been designed to either minimize the number of measurement settings (compressed sensing approaches~\cite{Gross2010,liu2011universal,kueng_low_2017}) or the required number of samples per measurement (least-squares approaches~\cite{Sugiyama2013,guta_fast_2018}). Neither of these approaches seems to be well-suited for optimizing classical postprocessing memory and time.
Finally, we point out that currently available quantum technologies are not perfect \cite{Preskill2018}. Practical tomography procedures should be \emph{robust} with respect to imperfections, most notably state preparation and measurement errors.

\section{Overview of results}\label{sec:overview}

\begin{table*}
\begin{center}
{\footnotesize
\begin{tabular}{|c|c|c|c|c|c|}
\hline
  & \textbf{meas.\ primitive}  & \textbf{basis settings} & \textbf{state copies}  & \textbf{runtime} & \textbf{memory} \\
\hline
\hline
lower bounds & arbitrary & $\geq r$ & $ \geq Dr^2 \epsilon^{-2}$ & $ \geq Dr^2 \epsilon^{-2}$ & $ \geq Dr$ \\
 \hline
 CS \cite{Voroninski2013}
 & Haar & $r$ & unknown & $D^4$ & $D^3$  \\
 \hline
 CS 
 \cite{kueng2015} & Clifford & $D^{2/3} r$ & unknown & 
 $D^4$ & $D^3$ \\
 \hline
 PLS \cite{guta_fast_2018} & 2-design & $D$ & $D r^2 \epsilon^{-2}$ & $D^3$ & $D^2$ \\
 \hline
 this work & 4-design & $r \epsilon^{-2}$ & $ D r^2 \epsilon^{-4}$ & $D^2 r^{5/2} \epsilon^{-5}$ & $D r \epsilon^{-2}$\\
 \hline
 this work & Clifford & $ r^3 \epsilon^{-2}$ & $D r^4 \epsilon^{-4}$ & $D^2r^6 \epsilon^{-5}$ & $Dr^2 \epsilon^{-2}$\\
 \hline
\end{tabular}
}
\end{center}
\caption{
\emph{Resource scaling for state tomography protocols based on global measurements (single copy):} Here, $D$ denotes the Hilbert space dimension, $r$ is the rank of the target state and $\epsilon$ is the desired precision (in trace distance). We have suppressed constants, as well as logarithmic dependencies in $D$ and $r$.
The first row summarizes known fundamental lower bounds, while the label ``unknown'' indicates a lack of rigorous theory support.
}
\label{tab:comparison}
\end{table*}

In this work, we develop a robust algorithm for almost resource-optimal quantum state tomography from (single-copy) basis measurements
that comes with rigorous convergence guarantees.
The theoretical results are closely related to quantum state distinguishability \cite{Holevo1973,Helstrom1969,ambainis_wise_2007,Matthews_2009} and strongest for global measurement primitives (Fig.~\ref{fig:illustration}, left) that are sufficiently generic.
In the regime of low target rank $r$, the proposed method improves upon state-of-the art techniques at the cost of a worse dependence on target accuracy $\epsilon$. 
The actual numbers are summarized in Table~\ref{tab:comparison}. 
The required number of basis measurement setting matches results from compressed sensing~\cite{Gross2010,liu2011universal,kueng_low_2017} -- a technique that has been specifically designed to optimize this cost function -- while the required number of state copies is comparable to projected least squares~\cite{Sugiyama2013,guta_fast_2018} -- which is known to be (almost) optimal in this regard. Classical runtime and memory cost are also reduced substantially.
We also obtain rigorous results for $k$-local measurement primitives (Fig.~\ref{fig:illustration}, right), but the obtained theoretical numbers only become competitive if the locality parameter $k$ is sufficiently large.
We believe that this shortcoming is an artifact of poor constants and refer to App.~\ref{sub:local-measurements} for details. 

\subsection{Algorithm and theoretical runtime guarantee}

The tomography algorithm -- which we call \emph{Hamiltonian updates} -- is based on a variant of the versatile mirror-descent meta-algorithm \cite{Tsuda2005,Bubeck2015}, see also \cite{Brandao2019sdp}. Mirror descent and its cousin, matrix multiplicative weights, have led to considerable progress in algorithm design across several disciplines. Prominent examples include fast semidefinite programming solvers ~\cite{Hazan2006,Arora2007,Steurer2015,Apeldoorn2017,Brandao2017a,Brandao2017b,Brandao2019sdp}, quantum prediction techniques like shadow tomography~\cite{Aaronson_2018}, the online learning methods of~\cite{Aaronson2019} and the tomography protocol of~\cite{Youssry_2019}.
The algorithm design is summarized in Algorithm~\ref{alg:motivation}.
The key idea is to maintain and iteratively update a guess for the unknown state.
The sequence of guess states is parametrized by Hamiltonians
\begin{align*}
\sigma_t =& \frac{\exp (-H_t)}{\mathrm{tr} (\exp (-H_t))} \quad \text{for} \quad  t=0,1,2,\ldots 
& \text{(Gibbs / thermal state)}
\end{align*}
and initialized to an infinite temperature state $\sigma_0 = \mathbb{I}/D$ (maximum entropy principle). At each subsequent iteration, we choose a unitary rotation $U\sim \mathcal{E}$ \emph{at random} from a fixed ensemble, estimate the outcome distribution \eqref{eq:diagonal-measurement} of the rotated target state $U\rho U^\dagger$ and compare it to the predicted outcome distribution of the current guess $\sigma_t$. If the two outcome distributions differ by more than mere statistical fluctuations, $\sigma_t$ is an inadequate guess for $\rho$.

\begin{algorithm}[t]
\label{alg:HUtomo}
\caption{\textit{Hamiltonian Updates for quantum 
state tomography }
}
\begin{algorithmic}[H]
\State \textbf{Input:} error tolerance $\epsilon$,  number of loops $L$.
\State \textbf{Initialize:} $t=0$, $H_t=0$, \textsc{convergence}=\textsc{false}
\While{\textsc{convergence}=\textsc{false}}
\State compute $\sigma_t = \exp(-H_t)/\mathrm{tr}(\exp(-H_t))$  \Comment current guess for the state $\rho$
\State select random basis measurement
$\left\{ U|i \rangle \! \langle i|U^\dagger \right\}$
\State compute 
outcome statistics $[p_i]$ of $\sigma_t$ 
 \Comment classical computation
\State estimate  
outcome statistics $[q_i]$ of $\rho$  \Comment quantum measurement
\State \textbf{check} if $[p_i]$ and $[q_i]$ are $\epsilon$-close in $\ell_1$ distance
\If{\textsc{no}}{ set
$P=\sum_{p_i>q_i} |i \rangle \! \langle i|$} 
\Comment{collect outcomes for which $p_i>q_i$}
\State Set $\eta=\frac{1}{8}\|p-q\|_{\ell_1}$
\State $H_{t+1} \gets H_t + \eta U^\dagger P U$ \Comment energy penalty for mismatch (in this basis)
\State update $\sigma_{t+1} = \exp(-H_{t+1})/\mathrm{tr}(\exp(-H_{t+1}))$ 
\State  $t\gets t+1$ \Comment{update counter of number of iterations}

\ElsIf{\textsc{yes}}  \Comment current guess may be close to $\rho$
\State check $L$ additional random bases \Comment{suppress likelihood of false positives}
\If {$\ell_1$ distance is always $<\epsilon$ } \Comment current guess is likely to be close
\State{\textbf{set} \textsc{convergence=true}}
\EndIf
\EndIf

\EndWhile

\State \textbf{Output:} $H_t$
\end{algorithmic}
\label{alg:motivation}
\end{algorithm}

We then update the guess state $\sigma_t \mapsto \sigma_{t+1}$ by including a small energy penalty in the associated Hamiltonian that penalizes the observed mismatch and repeat. 
Heuristically, it is reasonable to expect that this update rule makes progress as long as each newly selected basis provides actionable advice, i.e.\ discrepancies in the outcome distributions. Things get more interesting when this is not the case. 
Predicted and estimated outcome distribution can be very close for two reasons (i): the current iterate $\sigma_t$ is close to the unknown target
(\emph{convergence}); (ii.) the current basis measurement cannot properly distinguish between $\sigma_t$ and $\rho$, even though they are still far apart (\emph{false positive}). It is imperative to protect against wrongfully terminating the procedure due to the occurrence of a false positive. Hamiltonian Updates (Algorithm~\ref{alg:motivation}) 
suppresses the likelihood of wrongfully terminating by checking closeness in (up to) $L$ additional random bases.
The required size of such a control loop depends on the measurement primitive. 
Broadly speaking, generic measurement ensembles -- like Haar-random unitary transformations -- are very unlikely to produce false positives; while highly structured ensembles -- 
like mutually unbiased bases
-- can be much more susceptible.
The following relation introduces two ensemble-dependent summary parameters that capture this effect:
\begin{equation}
 \mathrm{Pr}_{U \sim \mathcal{E}} \left[ \| p_U (\rho) - p_U (\sigma_t) \|_{\ell_1} \geq \theta_{\mathcal{E}}(\rho,\sigma_t)\|\rho-\sigma_t\|_2\right] 
\geq  \tau_{\mathcal{E}}(\rho,\sigma_t). \label{eq:tomo-sound-intro}
\end{equation}
The parameter $\theta_{\mathcal{E}}(\rho,\sigma_t)$ relates an observed discrepancy in outcome distributions (measured in $\ell_1$ distance) to the Frobenius distance in state space. As detailed below, it captures the minimal progress we can expect from a successful update $\sigma_t \mapsto \sigma_{t+1}$. The second parameter $\tau_{\mathcal{E}}(\rho,\sigma_t)$ lower bounds the probability of observing an outcome discrepancy that appropriately reflects the current stage of convergence. This parameter controls the size of the control loop. It is desirable to choose both parameters as large as possible, but there is a trade-off (making $\theta_{\mathcal{E}}(\rho,\sigma_t)$ larger necessarily diminishes $\tau_{\mathcal{E}}(\rho,\sigma_t)$) and both depend heavily on the measurement ensemble. 
%
One of our main theoretical contributions is a rigorous convergence guarantee for Hamiltonian updates (Algorithm~\ref{alg:motivation}) that only depends on the ambient dimension $D$, the target rank $r = \mathrm{rank}(\rho)$, as well as the worst-case ensemble parameters
\begin{align}\label{equ:measurement_ensemble_par}
\theta_{\mathcal{E}}(\rho) = \max_{\sigma \text{ state}} \theta_{\mathcal{E}}(\rho,\sigma) \quad \text{and} \quad  
\tau_{\mathcal{E}}(\rho) = \max_{\sigma \text{ state}} \tau_{\mathcal{E}}(\rho,\sigma). 
\end{align}

\begin{thm}[informal statement] \label{thm:main-informal}
Fix a measurement primitive $\mathcal{E}$, a desired accuracy $\epsilon$ and let $\rho$ be a rank-$r$ target state.
With high probability, Algorithm~\ref{alg:motivation}
requires at most $T=\mathcal{O}\left(r \log (D)/ (\theta_{\mathcal{E}}(\rho) \epsilon)^2\right)$ steps -- each with a control loop of size $L=\mathcal{O}(\log (T)/\tau_{\mathcal{E}}(\rho))$ -- to produce an output $\sigma_{\star}$ that obeys $\| \rho-\sigma_{\star} \|_1 \leq \epsilon$.
\end{thm}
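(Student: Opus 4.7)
The plan is to analyze Algorithm~\ref{alg:motivation} as a matrix-multiplicative-weights / mirror-descent iteration with the quantum relative entropy $\Phi_t := S(\rho\|\sigma_t)$ as its Lyapunov potential. Since $\sigma_0=\mathbb{I}/D$, one starts from $\Phi_0=\log D-S(\rho)\le\log D$, so the argument reduces to showing that every non-terminating iteration decreases $\Phi_t$ by a controllable amount, while the termination criterion (together with its control loop) certifies the desired trace-norm closeness with high probability.

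First I would carry out the per-update decrease. Writing $\sigma_t=e^{-H_t}/Z_t$ and using $S(\rho\|\sigma_t)=-S(\rho)+\mathrm{tr}(\rho H_t)+\log Z_t$, the update $H_{t+1}=H_t+\eta U^\dagger P U$ yields $\Phi_{t+1}-\Phi_t=\eta\,\mathrm{tr}(\rho U^\dagger P U)+\log(Z_{t+1}/Z_t)$. Applying Golden--Thompson together with the exact identity $e^{-\eta P'}=\mathbb{I}-(1-e^{-\eta})P'$ for the projector $P':=U^\dagger P U$ gives $Z_{t+1}\le Z_t\bigl(1-(1-e^{-\eta})\,\mathrm{tr}(\sigma_t P')\bigr)$. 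Substituting $\mathrm{tr}(\sigma_t P')=\sum_{p_i>q_i}p_i$ and $\mathrm{tr}(\rho P')=\sum_{p_i>q_i}q_i$, noting that the difference is exactly $\tfrac12\|p-q\|_{\ell_1}$, and expanding $1-e^{-\eta}\ge\eta-\eta^2/2$ then produce, for the algorithmic step size $\eta=\|p-q\|_{\ell_1}/8$, the clean bound $\Phi_t-\Phi_{t+1}\ge c\,\|p-q\|_{\ell_1}^2$ with an absolute constant $c>0$.

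Next I would import the ensemble bound~\eqref{eq:tomo-sound-intro} to turn this into the advertised iteration count. The algorithm is effectively run with an internal threshold $\epsilon'\asymp\theta_\mathcal{E}(\rho)\,\epsilon/\sqrt{r}$; so long as the termination check fails, the observed gap satisfies $\|p-q\|_{\ell_1}\ge\epsilon'$, which forces $\Phi_t-\Phi_{t+1}=\Omega(\theta_\mathcal{E}(\rho)^2\epsilon^2/r)$, and telescoping against $\Phi_0\le\log D$ caps the number of updates at $T=O(r\log D/(\theta_\mathcal{E}(\rho)\epsilon)^2)$. Conversely, when the termination check and $L$ subsequent control-loop checks all succeed, the contrapositive of~\eqref{eq:tomo-sound-intro} implies $\|\rho-\sigma_t\|_2\lesssim\epsilon/\sqrt{r}$ except with probability $(1-\tau_\mathcal{E}(\rho))^L$, so choosing $L=O(\log T/\tau_\mathcal{E}(\rho))$ and a union bound over the $T$ iterations suppresses the total failure probability to a constant. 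Reporting the nearest rank-$r$ density matrix to $\sigma_t$ as $\sigma_\star$ keeps $\mathrm{rank}(\rho-\sigma_\star)\le 2r$, and the conversion $\|\cdot\|_1\le\sqrt{2r}\|\cdot\|_2$ upgrades the Frobenius bound to the announced $\|\rho-\sigma_\star\|_1\le\epsilon$.

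The hardest part is this rank-aware norm conversion: the Gibbs iterates $\sigma_t$ are always full-rank, so the naive inequality $\|\cdot\|_1\le\sqrt{D}\|\cdot\|_2$ would cost a factor $\sqrt{D/r}$ and destroy the scaling. Recovering the correct $\sqrt{r}$ factor requires exploiting that the \emph{target} is rank $r$---either by post-processing $\sigma_t$ as above or by embedding the rank constraint into the potential-function analysis---and every other ingredient (Golden--Thompson, Hoeffding/DKW control of the empirical outcome estimates $[q_i]$, and the union bound over the $L$ control-loop checks) is essentially standard once this rank reduction is in place.
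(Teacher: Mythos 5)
Your proposal is correct, and its skeleton --- the relative-entropy potential $S(\rho\|\sigma_t)$ initialized at $\le \log D$, a per-update decrease of order $\|p-q\|_{\ell_1}^2$, telescoping to bound $T$, and a union bound over $(1-\tau_{\mathcal{E}}(\rho))^L$ failure events for the control loop --- coincides with the paper's (Lemma~\ref{lem:update-rule}, Propositions~\ref{prop:convergence} and~\ref{prop:convergence-accuracy}, Theorem~\ref{thm:main-detail}). Two steps differ. First, you establish the per-step decrease via Golden--Thompson and the exact identity $e^{-\eta P'}=\mathbb{I}-(1-e^{-\eta})P'$ for a projector, which is the classical multiplicative-weights route; the paper instead uses Peierls--Bogoliubov together with a perturbation bound $\tfrac{1}{2}\|\sigma_{t+1}-\sigma_t\|_{\mathrm{tr}}\le 2\eta$ on the Gibbs state. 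Both yield an $\Omega(\|p-q\|_{\ell_1}^2)$ decrease and nothing downstream depends on the choice. Second, and more substantively, you handle the trace-norm conversion by truncating the final Gibbs iterate to its nearest rank-$r$ density matrix so that $\mathrm{rank}(\rho-\sigma_\star)\le 2r$. This can be made to work (you still need to control the renormalization of the truncated matrix, e.g.\ via $1-\mathrm{tr}(\sigma_t^{(r)})\le\|\rho-\sigma_t^{(r)}\|_1$, which only costs constants), but your diagnosis that the full-rankness of $\sigma_t$ \emph{forces} such post-processing or a modified potential is not right: Lemma~\ref{lem:stablerankfrobenius} of the paper gives $\|\rho-\sigma\|_1\le 2\sqrt{\min\{\mathrm{rank}(\rho),\mathrm{rank}(\sigma)\}}\,\|\rho-\sigma\|_2$ for \emph{any} $\sigma$, because the Helstrom projector onto the positive part of $\rho-\sigma$ has rank at most $\mathrm{rank}(\rho)$. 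This one-sided bound lets the algorithm output the full-rank Gibbs state unchanged and is also what enables the paper's extension to effective rank (Corollary~\ref{cor:effectiverank}). Everything else --- the internal threshold $\varepsilon\asymp\theta_{\mathcal{E}}(\rho)\epsilon/\sqrt{r}$, the resulting $T=\mathcal{O}\left(r\log (D)/(\theta_{\mathcal{E}}(\rho)\epsilon)^2\right)$, and $L=\mathcal{O}(\log (T)/\tau_{\mathcal{E}}(\rho))$ --- matches the paper.
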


This convergence guarantee is also stable with respect to imperfect implementations. 
In particular, we only need to estimate measurement outcome statistics to a certain degree of accuracy: $\mathcal{O}\left(D r /(\theta_{\mathcal{E}}(\rho) \epsilon)^2 \right)$ measurement repetitions suffice for each basis.
This implies that the total number of measurement settings and state copies are bounded by
\begin{align}
M =& TL \simeq \mathcal{O} \left( r \log (D)/(\tau_{\mathcal{E}}(\rho)\theta_{\mathcal{E}}(\rho)^2 \epsilon^2)\right)  &\text{(measurement settings)}, \label{eq:basis-complexity} \\
N \simeq& \mathcal{O} \left( D r^2 \log (D)/(\tau_{\mathcal{E}}(\rho) \theta_{\mathcal{E}} (\rho)^4 \epsilon^4) \right) & \text{(sample complexity)}. 
\label{eq:sample-complexity}
\end{align}
To increase readability, we have suppressed the logarithmic contribution in $T$.

\subsection{Connections to quantum state distinguishability}

The bounds for $M$ in Eq.~\eqref{eq:basis-complexity} and $N$ in Eq.~\eqref{eq:sample-complexity}
are characterized by worst-case ensemble parameters \eqref{eq:tomo-sound-intro}. These are intimately related to quantum state distinguishability: how good is a fixed measurement primitive $\mathcal{E}$ at distinguishing state $\rho$ from state $\sigma$ in the single-shot limit?
Ambainis and Emerson~\cite{ambainis_wise_2007} showed that the optimal probability of successful discrimination is given by $
p_{\mathrm{succ}}=\frac{1}{2}+\frac{1}{4}\mathbb{E}_{U \sim \mathcal{E}} \|p_U (\rho) - p_U (\sigma) \|_{\ell_1}
$ and achieved by the maximum likelihood rule, see also \cite{Matthews_2009}. It is possible to relate this bias to the Frobenius distance in state space:
\begin{equation*}
\mathbb{E}_{U \sim \mathcal{E}} \|p_U (\rho) - p_U (\sigma) \|_{\ell_1} \geq \lambda_{\mathcal{E}}(\rho,\sigma) \| \rho - \sigma \|_2.
\end{equation*}
The proportionality constant $\lambda_{\mathcal{E}} (\rho,\sigma)$ measures how well the measurement primitive is equipped to distinguish $\rho$
 from $\sigma$. It is closely related to the ensemble parameters defined in Eq.~\eqref{eq:tomo-sound-intro} and has been the subject of considerable attention in the community. Tight bounds have been derived for a variety of measurement primitives, such as
Haar random unitaries and approximate 4-designs \cite{ambainis_wise_2007,Matthews_2009}, random Clifford unitaries \cite{kueng_2016_distinguishing} and $k$-local (approximate) 4-designs \cite{Lancien_2013}. 
A simple probabilistic arguments allows for converting these assertion into lower bounds on both $\theta_{\mathcal{E}}(\rho)$ and $\tau_{\mathcal{E}}(\rho)$.
Inserting these bounds into Eq.~\eqref{eq:basis-complexity} and Eq.~\eqref{eq:sample-complexity} then implies the measurement and sample complexity assertions advertised in Table~\ref{tab:comparison}. We refer to Appendix~\ref{sec:examples} for a detailed case-by-case analysis and content ourselves with with an overview.
We start with the strongest measurement primitive: Haar random unitaries and 
approximate 4-designs  achieve $\theta_{\mathcal{E}}(\rho), \tau_{\mathcal{E}}(\rho)=\mathrm{const}$ for any target state. Hence, $M=\mathcal{O}(r \log (D))/\epsilon^2)$ basis settings and $N = \mathcal{O}(D r^2 \log (D)/\epsilon^4)$ state copies suffice. Clifford random measurements achieve $\theta_{\mathcal{E}}(\rho)\sim r^{-\frac{1}{2}}, \tau_{\mathcal{E}}(\rho)\sim r^{-2}$. That is, they only have a slightly worse dependency on the rank, but perform as well as Haar measurements in terms of the ambient dimension. 
On the other hand, more local measurement settings defined by unitaries acting on at most $k$ qubits have $\theta_{\mathcal{E}}(\rho)\sim \mathrm{exp}(-\cO(n/k)), \tau_{\mathcal{E}}(\rho)\sim \mathrm{exp}(-\cO(n/k))$, showing an (exponentially) worse dependency on the number of qudits when compared to Haar measurements. 
Empirical studies below do, however, suggest a much more favorable performance in practice.

This scaling highlights both a core strength and a core weakness of Hamiltonian updates.
In terms of dimension $D$ and rank $r$, these numbers saturate  fundamental lower bounds on any tomographic procedure up to a logarithmic factor.
However, the number of measurement settings also depends inverse quadratically on the accuracy. In turn, the  accuracy enters as $\epsilon^{-4}$, not $\epsilon^{-2}$ in the sample complexity. Thus, high accuracy solutions do not only require many samples, but also many basis measurement settings.
This drawback is a consequence of a ``curse of mirror descent (or multiplicative weights)". These meta-algorithms are very efficient in terms of problem dimension, but scale comparatively poorly in accuracy \cite{Arora2007}.
However, inverse polynomial scaling in accuracy $\epsilon$ is an unavoidable feature of quantum state tomography. Hence, tomography is a reasonable setting to apply algorithms that trade dimensional dependency for accuracy. Moreover, for most applications, it suffices to recover the state up to precision $\epsilon=\cO(\text{polylog}(D)^{-1})$.

\section{Summary and comparison to relevant existing work}

We propose a variant of mirror descent~\cite{Tsuda2005,Bubeck2015} to obtain resource-efficient algorithms for quantum state tomography.
In recent years, mirror descent and its cousins have been extensively used to obtain fast SDP solvers ~\cite{Hazan2006,Arora2007,Steurer2015,Apeldoorn2017,Brandao2017a,Brandao2017b,Brandao2019sdp}, to develop prediction algorithms like shadow tomography~\cite{Aaronson_2018}, the online learning methods of~\cite{Aaronson2019} and the tomography protocol of~\cite{Youssry_2019}. 
Key advantages of such an approach are resource efficiency, as well as intrinsic resilience towards noise.
Empirical studies summarized in Fig.~\ref{fig:noise} confirm these theoretical assertions.
A downside is, however, that the number of iterations may depend on the desired target accuracy $\epsilon$.
We focus on obtaining a $\epsilon$-approximation in trace distance of 
a $D$-dimensional state $\rho$ from (random) basis measurements on i.i.d.~copies (\emph{global classical description}).
Our goal is to optimize the different resources required for that task. These include the number of state copies (sample complexity), the cost for processing measurement data (classical postprocessing), as well as the associated memory cost.
The multipronged resource efficiency of our results becomes particularly pronounced if the underlying target state has (approximately) low-rank $r \ll D$. 
This is a natural assumption in most applications, but can also be relaxed to states  with low R\'enyi entropy, see App.~\ref{sec:effective-rank}.

Thus, our results are similar in spirit to the tomography algorithms based on compressed sensing (CS)~\cite{Gross2010,liu2011universal,Flammia2012,Riofr_o_2017,kueng_low_2017}, or projected least squares (PLS)~\cite{Sugiyama2013,guta_fast_2018}. These also focus on rigorous and (nearly) optimal sample complexity in the low-rank regime combined with efficient postprocessing. 
Table~\ref{tab:comparison} summarizes the resources required for these protocols, as well as our new results. These compare favorably with existing methods. 
 We note that for approximate $4$-design measurements,
 both sample complexity and memory -- as functions of $D$ and $r$ -- are essentially optimal~\cite{Wright2016,Haah2017}. Compared to existing approaches, we obtain significant savings in both runtime and memory. Moreover, as pointed out in \cite{Youssry_2019}, there are also  qualitative advantages.

Current schemes that minimize the number of basis settings~\cite{Voroninski2013, kueng2015} are only known to do so with perfect knowledge of the underlying measurement outcomes. This will never be the case in practice, due to statistical fluctuations. Thus, to the best of our knowledge, our work is the first to rigorously obtain recovery guarantees with imperfect knowledge of outcomes and basis settings that only scale logarithmically with the ambient dimension and linearly with rank (albeit with the extra $\epsilon$ dependency). 

The focus of this work differs from other recent applications of mirror descent to quantum learning~\cite{Aaronson2019,Aaronson_2018,Brandao2017b}. Broadly speaking, these works focus on obtaining a classical description of the state -- a shadow -- that approximately reproduces a fixed set of target observables. 
This is a different and weaker form of recovery. Moreover, these works prioritize sample complexity; not necessarily classical postprocessing resources.
Minimizing these classical resources is a core focus of this work.

Having said this, the idea of using (variants of) mirror descent for quantum state (and process) tomography is not completely new. Similar ideas were proposed in Refs.\ \cite{ferrie2015self-guided,granade2017adaptive} and have been experimentally tested \cite{ferrie2016experimental,ferrie2020experimental}. 
More recently,
Youssry, Tomamichel and Ferrie 
proposed and analyzed state tomography based on matrix exponentiated gradient descent \cite{Youssry_2019}. 
They focused on the practically relevant case of local (single-qubit) Pauli measurements and established
convergence to the target state as the number of samples goes to infinity. They also pointed out conceptual advantages, such as online implementation and noise-robustness. 
The results presented here add to this promising picture. 
We equip (a variant of) mirror descent with rigorous performance guarantees in the non-asymptotic setting, optimize actual implementations and establish robustness in a more general setting. Moreover, our results apply to any measurement procedure that is capable of distinguishing arbitrary pairs of quantum states.

We also want to point out that the method presented here could also be implemented on a quantum computer. This would result in substantial runtime savings -- a quantum speedup for quantum state tomography.
Suppressing polylogarithmic terms, a runtime of order $\tcO(D^{\frac{3}{2}}r^3\epsilon^{-9})$
suffices to obtain a \emph{classical} description of the target state. We refer to App.~\ref{sec:quantumimplementation} for details and proofs. To the best of our knowledge, this is the first quantum speedup for low-rank tomography beyond the results of Kerenidis and Prakash~\cite{kerenidis2018quantum} which cover pure, real  target states ($r=1$) exclusively and work under the stronger assumption of access to a controlled unitary that prepares the state. 

Finally, we want to emphasize that the proposed reconstruction procedure can be empowered by advantageous measurement structure.
Storage-efficiency stems from the fact that we can keep track of the Hamiltonian -- not the associated Gibbs state -- which inherits structure from the underlying measurement procedure. Runtime savings are achieved by only exponentiating the Hamiltonian approximately and exploiting fast matrix-vector multiplication. 
We refer to App.~\ref{sec:classicalpostprocessing}
for details and content ourselves here with a vague, but instructive, analogy: View \textsc{Hamiltonian Updates} (Algorithm~\ref{alg:motivation}) as an adaptive cool-down procedure. We start with a Gibbs state at infinite temperature and, at each step, we cool down the system in a controlled fashion that guides the thermal state towards the unknown target. Importantly, each update is small and the number of total cooling steps is also benign. Hence, we never truly leave the moderate temperature regime and avoid computational bottlenecks that typically only arise at low temperatures. In turn, the output of our algorithm is in the form of a Hamiltonian whose Gibbs state is  close to the target state. 
A list of Gibbs state eigenvalues and corresponding eigenvectors can be obtained by block Krylov iterations, see App.~\ref{sec:convertingfromgibbstousual}. Runtime and memory cost of this conversion procedure can never exceed those of Algorithm~\ref{alg:motivation}.

\section{Numerical experiments}\label{sec:numerical}

\begin{figure}
  \centering
  \includegraphics[width=0.5\columnwidth]{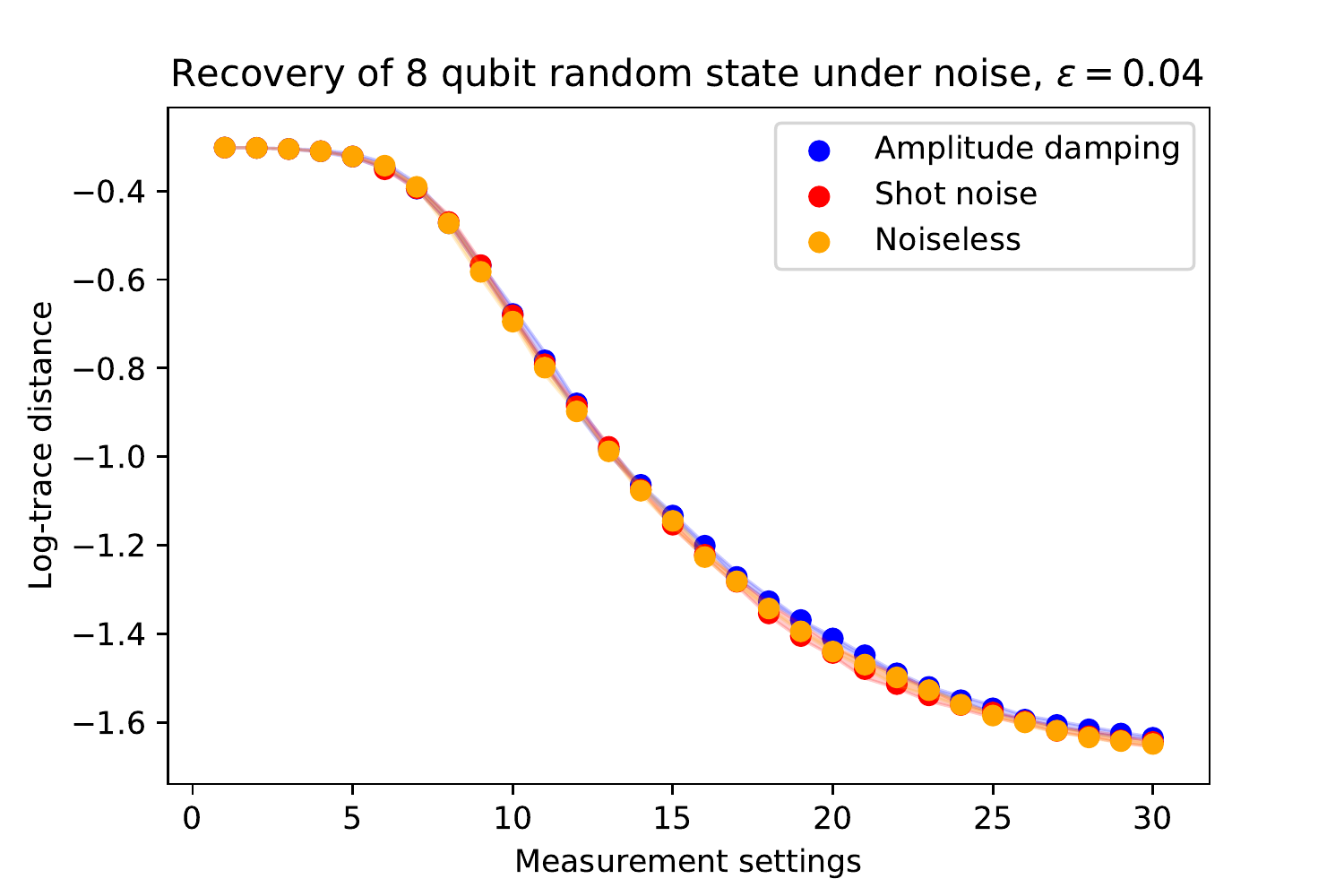}
\caption{
\emph{Convergence of Algorithm~\ref{alg:motivation} for different noise models.} 
We consider Haar-random global  measurements of a 8-qubit pure target state with target accuracy
$\epsilon =0.04$. 
Different colors track convergence for different noise models: (blue) amplitude damping noise with parameter $\epsilon/4$;
(red) white noise with standard deviation $\epsilon/4$ that mimics one-shot noise; 
(orange) zero noise. All logarithms are base 10 and the shaded area indicate $25\%$ and $75\%$ quartiles, estimated from 20 samples.
}
\label{fig:noise}
\end{figure}

We complement our theoretical assertions with empirical test evaluations for systems comprised of up to $12$ qubits. The results look promising and may establish Algorithm~\ref{alg:motivation} as a practical tool for quantum state tomography. We remark that our numerical implementation has two additional details when compared with the one described in Algorithm~\ref{alg:motivation}. Although these modifications do not change the asymptotic runtime analysis of the algorithm, they can substantially reduce runtime and sample complexity in practice.

The first alteration we do is to recycle the last measurement data after a successful update. More precisely, after each update $\sigma_t\to\sigma_{t+1}$, we then check if the new iteration $\sigma_{t+1}$ is still distinguishable from $\rho$ under the previous measurement basis. Only if this is not the case, we move on to sample a new measurement setting. 
Otherwise, we re-use the already known measurement basis to drive another update in the same direction.
We observe empirically that this minor modification has very desirable consequences. It leads to a much faster convergence throughout early stages of the algorithm and, by extension, reduces the number of required measurement settings significantly. 

What is more, this recycling procedure cannot change the asymptotic scaling of the algorithm. To see this, note that the modification can only affect postprocessing complexity. Indeed, it clearly does not require us to sample more states or measurement settings. Finding another violation can only bring us closer to the state in relative entropy. And the postprocessing time can only double in the worst case. This worst case scenario happens when after updating every basis once, we have already converged in that basis and checking again does not lead to further convergence. We will refer to this variation as the \emph{last step recycling strategy}. It is explained in detail in the appendix (Algorithm~\ref{alg:HUtomo_last}).

Other variations of this basic principle come to mind. For instance, we need not stop at testing the current iteration against the previous measurement basis. We can also test it against all measurements that have already accumulated. This variation can further reduce the (total) number of basis settings required to converge. Fig.~\ref{fig:cs15} confirms this intuition. 
 However, this strategy comes at the expense of an increase in the computational complexity of the post-processing. We refer to this strategy as the \emph{complete recycling strategy}.
 
 \begin{figure}
\centering
\begin{tabular}{lr}
\includegraphics[width=0.5\columnwidth]{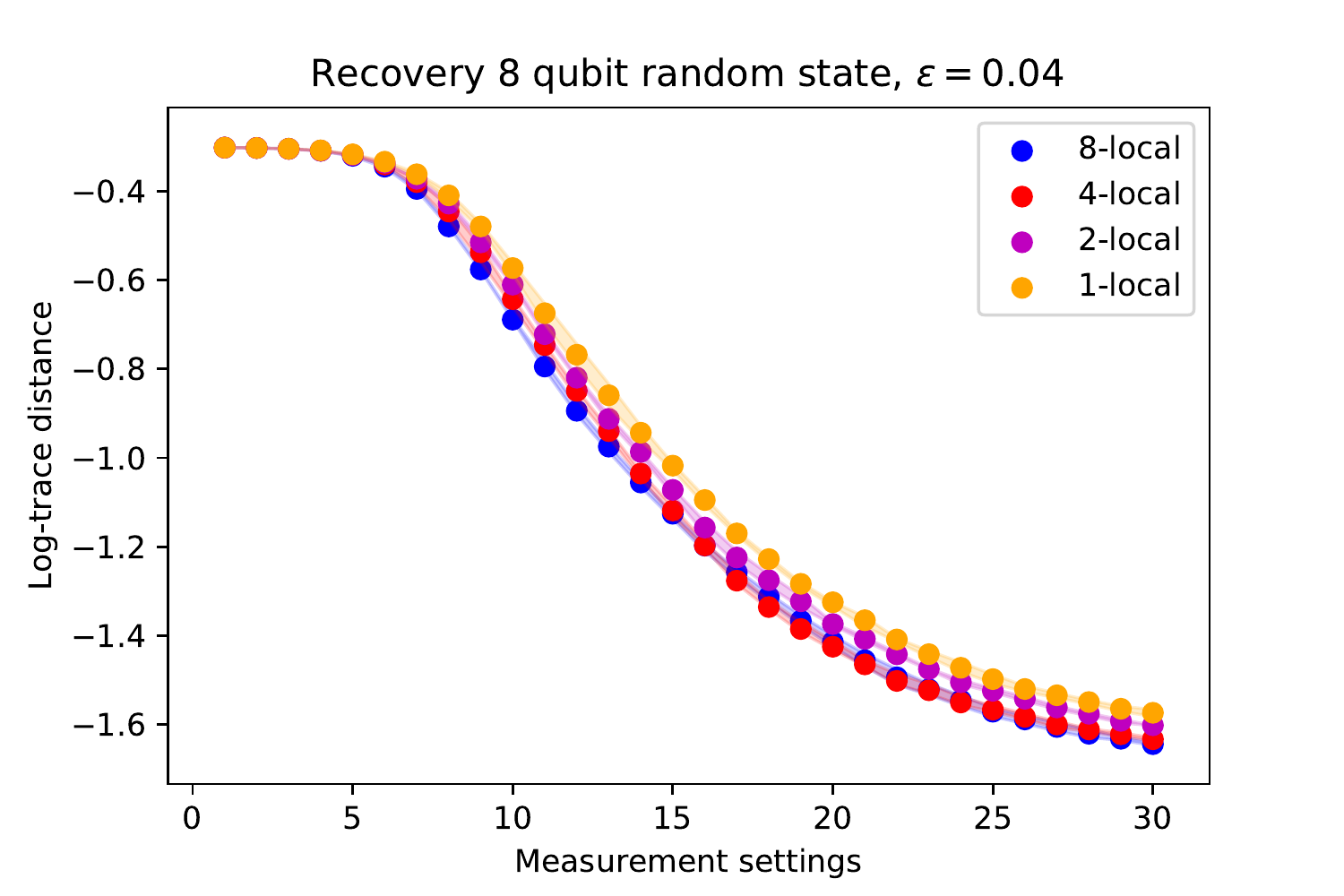}
&
  \includegraphics[width=0.5\columnwidth]{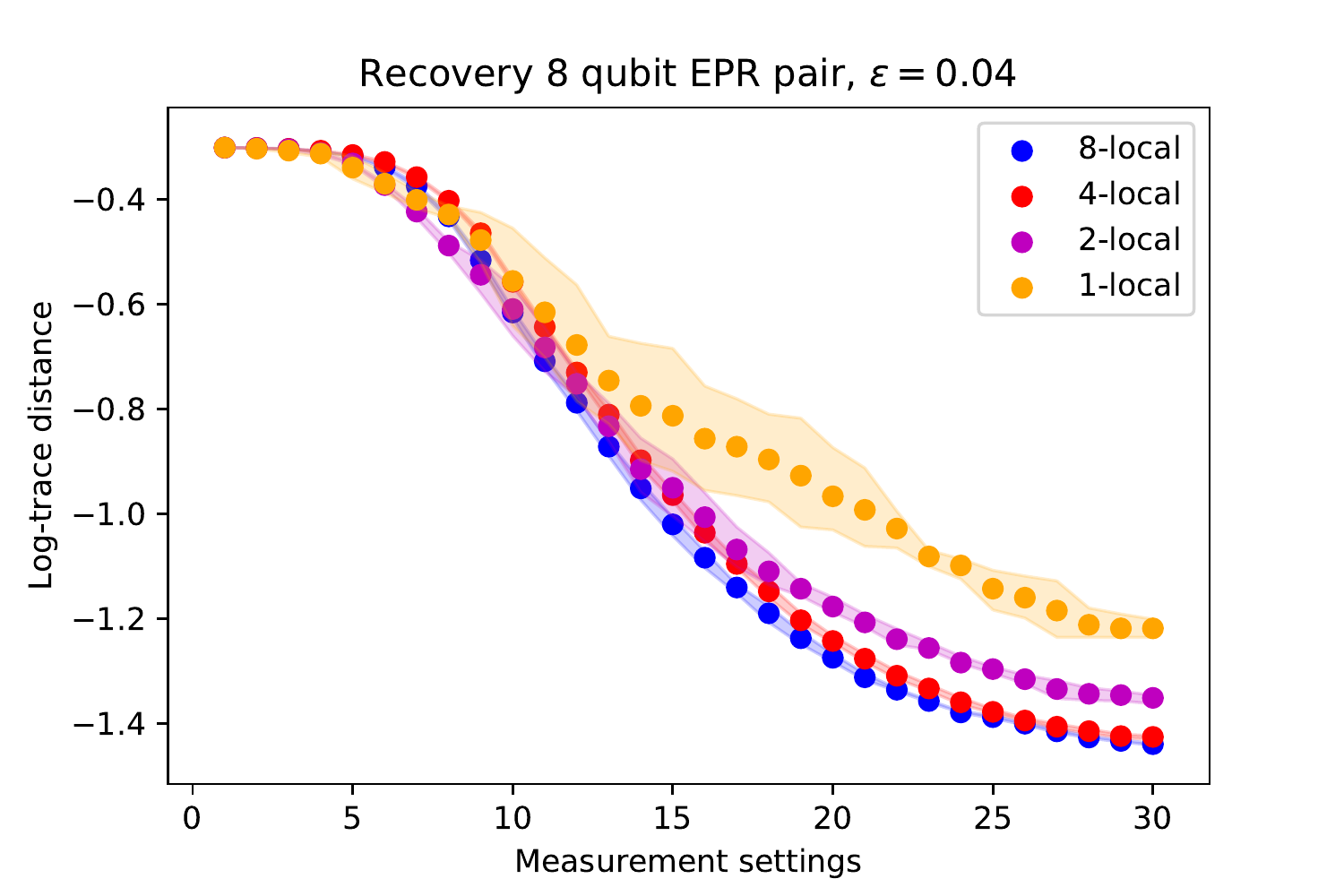}
\end{tabular}
\caption{
\emph{Convergence of Algorithm~\ref{alg:motivation} for different measurement localities.}
Different colors track convergence (in logarithmic trace distance) for 8-qubit  basis measurements with different localities and target accuracy $\epsilon=0.04$. Individual basis measurements are subject to white noise with standard deviation $\epsilon/4$.
(Left) Reconstruction of a generic pure target state. (Right) Reconstruction of a  highly structured target state (EPR/Bell state).
All logarithms are base 10 and the shaded area indicate $25\%$ and $75\%$ quartiles, estimated from 20 samples.
\label{fig:random+epr}
}
\end{figure}

Apart from these practical improvements, we have also tested desirable fundamental properties of Algorithm~\ref{alg:motivation}. Chief among them is noise resilience. As advertised in Sec.~\ref{sec:overview} and proved in App.~\ref{sec:robust}, the performance of the algorithm under arbitrary noise of bounded intensity is indistinguishable from the noiseless case. This feature is empirically confirmed by Fig.~\ref{fig:noise}. For detecting a random pure state on 8 qubits, different noise sources -- such as shot noise and amplitude damping -- affect convergence in a very mild fashion only (\emph{robustness}). 
It is also interesting to note that the convergence in trace norm appears to be polynomial for the first measurements and then switches to an exponential phase.

Another interesting figure of merit is measurement locality. 
The assertions that underpin Algorithm~\ref{alg:motivation} do, in principle, extend to local measurement primitives. But, as detailed in App.~\ref{sub:local-measurements}, the resulting numbers look rather pessimistic and scale unfavorably with measurement locality $k$.
Empirical studies do paint a much more favorable picture, see Fig.~\ref{fig:random+epr}.
The two subplots address reconstruction of a typical 8-qubit target state (left), as well as a highly structured one (right).
A direct comparison lends credence to a conjecture voiced in App.~\ref{sub:local-measurements} below: generic or typical states are easier to reconstruct with local measurements than highly structured ones. We intend to address this gap between worst-case and average-case performance in future work.

\begin{figure}[htp!]
  \label{fig:epr10}
\centering
\begin{tabular}{cc}
\includegraphics[width=0.5\columnwidth]{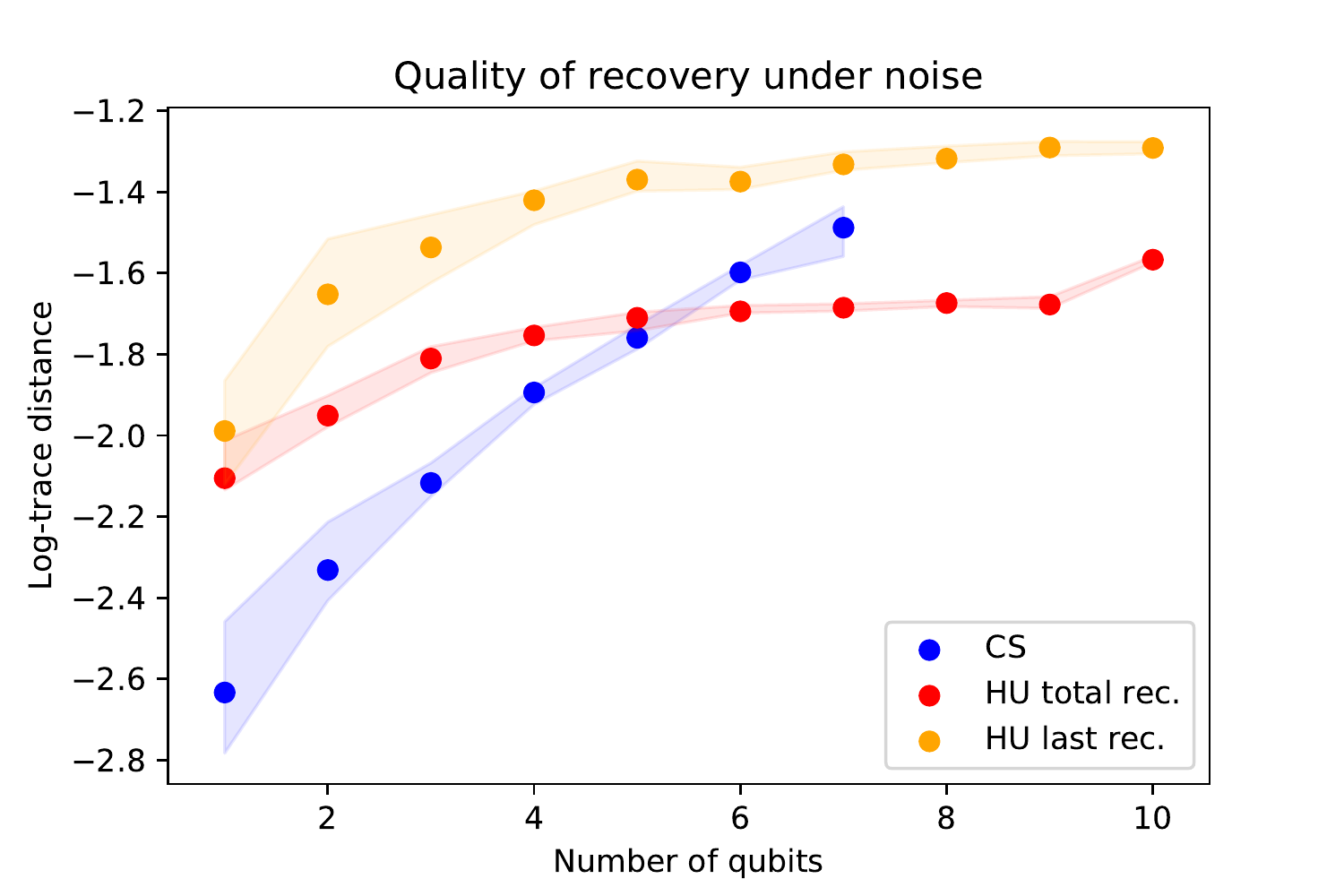} &
\includegraphics[width=0.5\columnwidth]{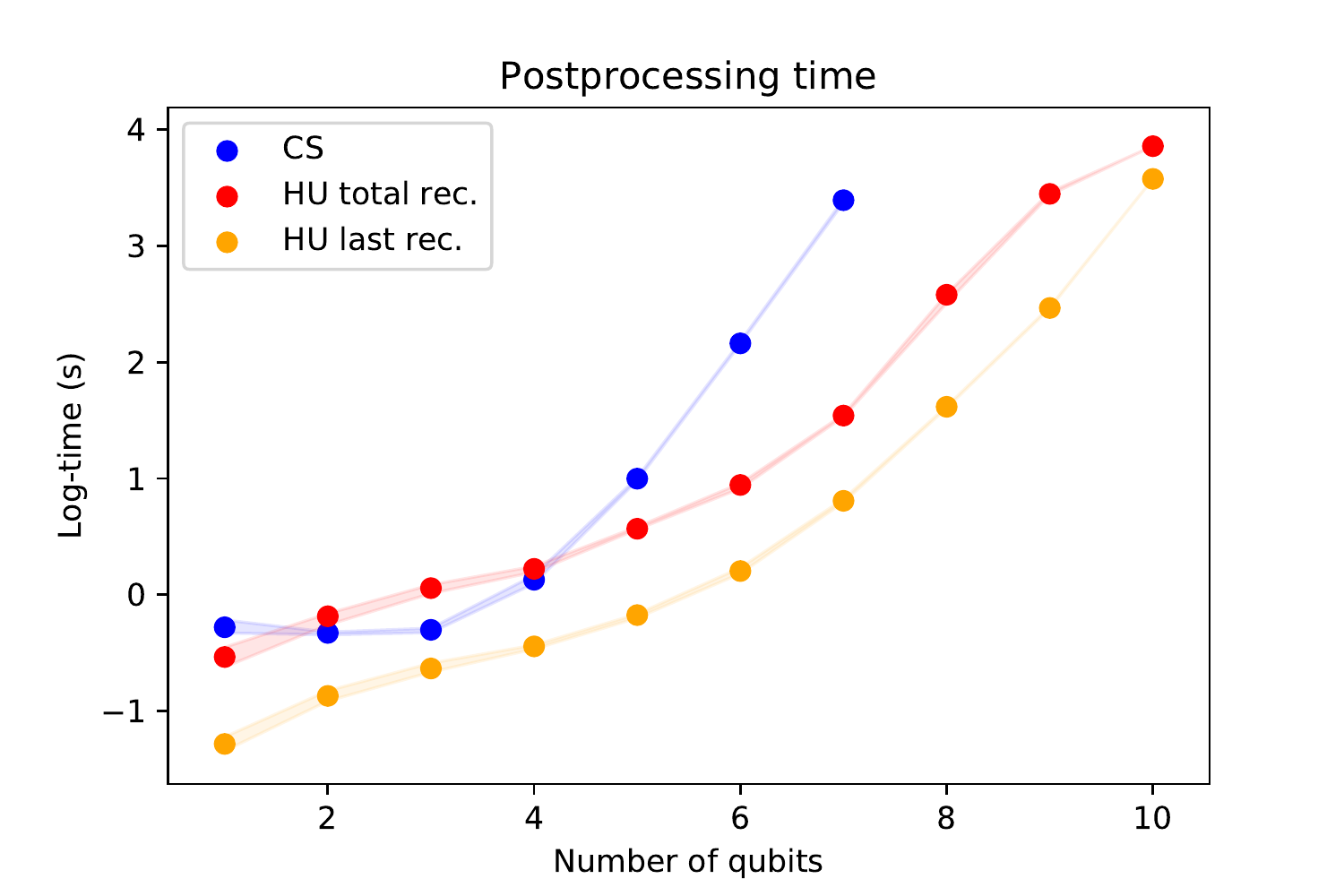}
\end{tabular}
\caption{\emph{Comparison between Algorithm~\ref{alg:motivation} and compressed sensing (CS) tomography.}
(Left) Reconstruction of a random $n$-qubit pure state from 15 globally random basis measurements corrupted by amplitude damping noise ($p=0.005$). Different colors track the logarithmic trace distance error achieved by either compressed sensing (blue) or variants of Algorithm~\ref{alg:motivation} (orange and red) for $\epsilon=0.01$. Shaded regions indicate the $25-75$ percentiles over 20 independent runs.
(Right) Empirical runtime for executing (naive implementations of) the three different reconstruction procedures on a conventional laptop. CVX \cite{cvx} --  a standard solver for semidefinite programs -- could not go beyond $7$ qubits. 
}
\label{fig:cs15}
\end{figure}

Last but not least, we compare Algorithm~\ref{alg:motivation} against the state of the art regarding tomography from very few basis measurements. Compressed sensing \cite{Gross2010,Flammia2012,kueng2015,kueng_low_2017} has  been designed to fit a low rank solution to the observed measurement data by also minimizing the nuclear norm over the cone of positive semidefinite matrices:
\begin{align}
\text{minimize}_{X \succeq 0} \quad  \mathrm{tr}(X) \quad \text{subject to} \quad \quad \sum\nolimits_{i=1}^M \|\hat{p}_{U_i}(\rho) - p_{U_i}(X) \|_{\ell_2}^2 \leq \epsilon. \label{eq:cs}
\end{align}
Fig.~\ref{fig:cs15} compares Algorithm~\ref{alg:motivation} with compressed sensing (CS).
CS is contingent on solving a semidefinite program. We used CVX~\cite{cvx}, a standard SDP solver, in Python. Algorithm~\ref{alg:motivation} has also been implemented in Python. 
Open source code is available at~\cite{dsfranca2020}.
We see that Hamiltonian Updates is more noise-resilient than CS. The rightmost plot also underscores the importance of memory improvements.
A high-end desktop computer already struggles to solve SDP~\eqref{eq:cs} for $8$ qubits (even though the extrapolated computation time Fig.~\ref{fig:cs15} still seems reasonable),
while 10 qubits (and more) have not been a problem for Algorithm~\ref{alg:motivation}.
We believe that Fig.~\ref{fig:cs15} conveys both quantitative and qualitative advantages of Hamiltonian Updates over CS methods. This seems particularly noteworthy, because we compared both procedures for pure target states ($\mathrm{rank}(\rho)=1$) -- a use-case tailor-made for CS approaches.
We also stress that the implementation of the algorithm used to generate this data was not optimized, there is room for further improvements.

Let us conclude with the most important take-away from Figs.~\ref{fig:noise},~\ref{fig:random+epr} and~\ref{fig:cs15}. The theoretical assertions from Sec.~\ref{sec:overview} carry over to practice.  Moreover, recycling of data ensures that the number of measurement settings remains small even if we try to characterize the state up to high precision. 
Our theoretical results suggest that order $10^5$ algorithm iterations, and thus also measurement settings, might be required to obtain a $\epsilon=10^{-2}$-approximation of a pure state in dimension $D=2^{10}$. But our numerics demonstrate 
that already order $10^{1}$ suffice to achieve convergence. 
The main theoretical drawbacks of Algorithm~\ref{alg:motivation} -- most notably, the poor scaling in accuracy -- may be a non-issue in practical use cases. These findings establish our algorithm as a rare instance of a method that is provably (essentially) optimal and has a competitive performance in practice.

\paragraph*{Data and code availability.}
Source data and code are available for this paper~\cite{dsfranca2020}. All other data that support the plots within this paper and other findings of this study are available from the corresponding author upon reasonable request.

\paragraph*{Acknowledgements.}

We thank C.\ Ferrie, T.\ Grurl, C.\ Lancien,  R.\ K\"onig and J.A.\ Tropp for valuable input and helpful discussions.
F.B. and R.K. acknowledge funding from the US National Science Foundation (PHY1733907). 
The Institute for Quantum Information and Matter is an NSF Physics Frontiers Center.
D.S.F. acknowledges financial support from VILLUM FONDEN via the QMATH Centre of Excellence (Grant no. 10059).

\bibliographystyle{myalpha}
\bibliography{tomography}

\cleardoublepage

\appendix

\begin{figure}
    \centering
    \begin{tabular}{|c|c|c|}
    \hline
    \textbf{choose random basis} & \textbf{compare %
    statistics} & \textbf{update} \\
    \hline
    \hline
        \begin{tikzpicture}[baseline,scale=0.6]
    \fill[Blue,opacity=0.2] (0,0) circle(3);
    \draw (0,0) circle(3);
    \draw[fill=Blue] (0,3) circle (0.15);
    \draw[fill=Blue] (0,-3) circle (0.15);
    \draw[thick,Blue] (0,-3) -- (0,3);
    \draw[fill=Green] (45:3) circle(0.15);
    \node at (45:3.4) {\textcolor{Green}{$\rho$}};
    \draw[fill=Red] (0,0) circle (0.15);
    \node at (0.5,0) {\textcolor{Red}{$\sigma_0$}};
    \end{tikzpicture}
    &
    \begin{tikzpicture}[baseline=0.5cm,scale=0.6]
    \draw[thick,magenta, rounded corners, opacity=0.3] (0.25, -2.3) rectangle (1.75,3.25);
    \begin{scope}[yshift=1.5cm]
    \fill[Red] (-1.5,0) rectangle (-0.5,1);
    \fill[Red] (0.5,0) rectangle (1.5,1);
    \draw[thick,->] (-2,0)-- (2,0);
    \draw[thick,->] (-2,0) -- (-2,2);
    \node[rotate=90] at (-2.5,1) {{\small $\mathrm{Pr}[\pm | \textcolor{Red}{\sigma_0}]$}};
    \node at (-1,-0.5) {$+$};
    \node at (1,-0.5) {$-$};
    \end{scope}
    \begin{scope}[yshift=-1.5cm]
    \fill[Green] (-1.5,0) rectangle (-0.5,2*0.85355339059);
    \fill[Green] (0.5,0) rectangle (1.5,2-2*0.85355339059);
    \draw[thick,->] (-2,0)-- (2,0);
    \draw[thick,->] (-2,0) -- (-2,2);
    \node[rotate=90] at (-2.5,1) {{\small $\mathrm{Pr}[\pm | \textcolor{Green}{\rho}]$}};
    \node at (-1,-0.5) {$+$};
    \node at (1,-0.5) {$-$};
    \end{scope}
    \end{tikzpicture}
    &
    \begin{tikzpicture}[baseline,scale=0.6]
    \fill[Blue,opacity=0.2] (0,0) circle(3);
    \draw (0,0) circle(3);
     \draw[fill=Red,opacity=0.3] (0,0) circle (0.15);
    \node[opacity=0.3] at (0.5,0) {\textcolor{Red}{$\sigma_0$}};
    \draw[thick,opacity=0.3,->] (0,0.15) -- (0,3*0.462-0.15);
    \draw[fill=Green] (45:3) circle(0.15);
    \node at (45:3.4) {\textcolor{Green}{$\rho$}};
    \draw[fill=Red,opacity=0.3] (0,0) circle (0.15);
    \node[opacity=0.3] at (0.5,0) {\textcolor{Red}{$\sigma_0$}};
    \draw[fill=Red] (0,3*0.462) circle (0.15);
    \node at (0.5,3*0.462) {\textcolor{Red}{$\sigma_1$}};
    \end{tikzpicture} \\
    \hline
    $\left\{|+\rangle,|-\rangle \right\}$-basis &
    $\mathrm{Pr} \left[ - | \textcolor{Red}{\sigma_0} \right] > \mathrm{Pr} \left[ - |\textcolor{Green}{\rho}\right]$
    & $\sigma_1 \propto \exp \left( - \eta |- \rangle \! \langle -| \right)$
 \\
 \hline
 \hline
 
       \begin{tikzpicture}[baseline,scale=0.6]
      \fill[white] (0,-3) circle (0.15);
    \fill[Blue,opacity=0.2] (0,0) circle(3);
    \draw (0,0) circle(3);
    \draw[thick,Blue,<->] (-3,0) -- (3,0);
    \draw[fill=Blue] (-3,0) circle (0.15);
     \draw[fill=Blue] (3,0) circle (0.15);
    \draw[fill=Green] (45:3) circle(0.15);
    \node at (45:3.4) {\textcolor{Green}{$\rho$}};
    \draw[fill=Red] (0,3*0.462) circle (0.15);
    \node at (0.5,3*0.462) {\textcolor{Red}{$\sigma_1$}};
    \end{tikzpicture} 
    &
    \begin{tikzpicture}[baseline=0.5cm,scale=0.6]
    \draw[thick,magenta, rounded corners, opacity=0.3] (0.25, -2.3) rectangle (1.75,3.25);
    \begin{scope}[yshift=1.5cm]
    \fill[Red] (-1.5,0) rectangle (-0.5,1);
    \fill[Red] (0.5,0) rectangle (1.5,1);
    \draw[thick,->] (-2,0)-- (2,0);
    \draw[thick,->] (-2,0) -- (-2,2);
    \node[rotate=90] at (-2.5,1) {{\small $\mathrm{Pr}[0/1 | \textcolor{Red}{\sigma_1}]$}};
    \node at (-1,-0.5) {$0$};
    \node at (1,-0.5) {$1$};
    \end{scope}
    \begin{scope}[yshift=-1.5cm]
    \fill[Green] (-1.5,0) rectangle (-0.5,2*0.85355339059);
    \fill[Green] (0.5,0) rectangle (1.5,2-2*0.85355339059);
    \draw[thick,->] (-2,0)-- (2,0);
    \draw[thick,->] (-2,0) -- (-2,2);
    \node[rotate=90] at (-2.5,1) {{\small $\mathrm{Pr}[0/1 | \textcolor{Green}{\rho}]$}};
    \node at (-1,-0.5) {$0$};
    \node at (1,-0.5) {$1$};
    \end{scope}
    \end{tikzpicture}
    
    &
    
    \begin{tikzpicture}[baseline,scale=0.6]
    \fill[Blue,opacity=0.2] (0,0) circle(3);
    \draw (0,0) circle(3);
    \draw[fill=Green] (45:3) circle(0.15);
    \node at (45:3.4) {\textcolor{Green}{$\rho$}};
    \draw[fill=Red,opacity=0.3] (0,3*0.462) circle (0.15);
    \node[opacity=0.3] at (0.5,3*0.462) {\textcolor{Red}{$\sigma_1$}};
    \draw[opacity=0.3,->] (0+0.15,3*0.462) -- (3*0.43-0.15,3*0.43);
    \draw[fill=Red] (3*0.43,3*0.43) circle (0.15);
    \node at (3*0.43+0.5,3*0.43) {\textcolor{Red}{$\sigma_2$}};
    \end{tikzpicture}
    \\
    \hline
    
    $\left\{|0\rangle, |1 \rangle \right\}$-basis 
    &
    $\mathrm{Pr} \left[ 1| \textcolor{Red}{\sigma_1} \right] > \mathrm{Pr} \left[ 1| \textcolor{Green}{\rho} \right]$
    &
    $\sigma_2 \propto \exp \left( -\eta |- \rangle \! \langle-| - \eta |1 \rangle \! \langle 1|\right)$\\
    \hline
     \hline
 
       \begin{tikzpicture}[baseline,scale=0.6]
      \fill[white] (0,-3) circle (0.15);
    \fill[Blue,opacity=0.2] (0,0) circle(3);
    \draw (0,0) circle(3);
    \draw[thick,Blue,<->] (-3,0) -- (3,0);
    \draw[fill=Blue] (-3,0) circle (0.15);
     \draw[fill=Blue] (3,0) circle (0.15);
    \draw[fill=Green] (45:3) circle(0.15);
    \node at (45:3.4) {\textcolor{Green}{$\rho$}};
    \draw[fill=Red] (3*0.43,3*0.43) circle (0.15);
    \node at (3*0.43+0.5,3*0.43) {\textcolor{Red}{$\sigma_2$}};
    \end{tikzpicture} 
    &
    \begin{tikzpicture}[baseline=0.5cm,scale=0.6]
    \draw[thick,magenta, rounded corners, opacity=0.3] (0.25, -2.3) rectangle (1.75,3.25);
    \begin{scope}[yshift=1.5cm]
    \fill[Red] (-1.5,0) rectangle (-0.5,0.715*2);
    \fill[Red] (0.5,0) rectangle (1.5,0.285*2);
    \draw[thick,->] (-2,0)-- (2,0);
    \draw[thick,->] (-2,0) -- (-2,2);
    \node[rotate=90] at (-2.5,1) {{ \small $\mathrm{Pr}[0/1 | \textcolor{Red}{\sigma_2}]$}};
    \node at (-1,-0.5) {$0$};
    \node at (1,-0.5) {$1$};
    \end{scope}
    \begin{scope}[yshift=-1.5cm]
    \fill[Green] (-1.5,0) rectangle (-0.5,2*0.85355339059);
    \fill[Green] (0.5,0) rectangle (1.5,2-2*0.85355339059);
    \draw[thick,->] (-2,0)-- (2,0);
    \draw[thick,->] (-2,0) -- (-2,2);
    \node[rotate=90] at (-2.5,1) {{\small $\mathrm{Pr}[0/1 | \textcolor{Green}{\rho}]$}};
    \node at (-1,-0.5) {$0$};
    \node at (1,-0.5) {$1$};
    \end{scope}
    \end{tikzpicture}
    
    &
    
    \begin{tikzpicture}[baseline,scale=0.6]
    \fill[Blue,opacity=0.2] (0,0) circle(3);
    \draw (0,0) circle(3);
    \draw[fill=Green] (45:3) circle(0.15);
    \node at (45:3.4) {\textcolor{Green}{$\rho$}};
    \draw[fill=Red,opacity=0.3] (3*0.43,3*0.43) circle (0.15);
    \node[opacity=0.3] at (3*0.43+0.5,3*0.43) {\textcolor{Red}{$\sigma_2$}};
    \draw[opacity=0.3,->] (3*0.43+0.15,3*0.43) -- (3*0.722-0.15,3*0.360);
    \draw[fill=Red] (3*0.722,3*0.360) circle (0.15);
    \node at (3*0.722+0.5,3*0.360) {\textcolor{Red}{$\sigma_3$}};
    \end{tikzpicture}
    \\
    \hline
    
    $\left\{|0\rangle, |1 \rangle \right\}$-basis 
    &
    $\mathrm{Pr} \left[ 1| \textcolor{Red}{\sigma_2} \right] > \mathrm{Pr} \left[ 1| \textcolor{Green}{\rho} \right]$
    &
    $\sigma_3 \propto \exp \left( -\eta |- \rangle \! \langle -| - 2\eta |1 \rangle \! \langle 1|\right)$\\
    \hline
     \hline
 
       \begin{tikzpicture}[baseline,scale=0.6]
           \fill[Blue,opacity=0.2] (0,0) circle(3);
    \draw (0,0) circle(3);
    \draw[thick,Blue,<->] (0,-3) -- (0,3);
    \draw[fill=Blue] (0,-3) circle (0.15);
     \draw[fill=Blue] (0,3) circle (0.15);
    \draw[fill=Green] (45:3) circle(0.15);
    \node at (45:3.4) {\textcolor{Green}{$\rho$}};
 \draw[fill=Red] (3*0.722,3*0.360) circle (0.15);
    \node at (3*0.722+0.5,3*0.360) {\textcolor{Red}{$\sigma_3$}};
    \end{tikzpicture} 
    &
    \begin{tikzpicture}[baseline=0.5cm,scale=0.6]
    \draw[thick,magenta, rounded corners, opacity=0.3] (0.25, -2.3) rectangle (1.75,3.25);
    \begin{scope}[yshift=1.5cm]
    \fill[Red] (-1.5,0) rectangle (-0.5,0.680*2);
    \fill[Red] (0.5,0) rectangle (1.5,0.320*2);
    \draw[thick,->] (-2,0)-- (2,0);
    \draw[thick,->] (-2,0) -- (-2,2);
    \node[rotate=90] at (-2.5,1) {{\small$\mathrm{Pr}[\pm | \textcolor{Red}{\sigma_3}]$}};
    \node at (-1,-0.5) {$+$};
    \node at (1,-0.5) {$-$};
    \end{scope}
    \begin{scope}[yshift=-1.5cm]
    \fill[Green] (-1.5,0) rectangle (-0.5,2*0.85355339059);
    \fill[Green] (0.5,0) rectangle (1.5,2-2*0.85355339059);
    \draw[thick,->] (-2,0)-- (2,0);
    \draw[thick,->] (-2,0) -- (-2,2);
    \node[rotate=90] at (-2.5,1) {{\small $\mathrm{Pr}[+ | \textcolor{Green}{\rho}]$}};
    \node at (-1,-0.5) {$+$};
    \node at (1,-0.5) {$-$};
    \end{scope}
    \end{tikzpicture}
    
    &
    
    \begin{tikzpicture}[baseline,scale=0.6]
    \fill[Blue,opacity=0.2] (0,0) circle(3);
    \draw (0,0) circle(3);
    \draw[fill=Green] (45:3) circle(0.15);
    \node at (45:3.4) {\textcolor{Green}{$\rho$}};
     \draw[fill=Red,opacity=0.3] (3*0.722,3*0.360) circle (0.15);
    \node[opacity=0.3] at (3*0.722+0.5,3*0.360) {\textcolor{Red}{$\sigma_3$}};
    \draw[opacity=0.3,->] (3*0.722-0.05,3*0.360+0.15) -- (3*0.628+0.02,3*0.628-0.15);
    \draw[fill=Red] (3*0.628,3*0.628) circle (0.15);
    \node at (3*0.628+0.5,3*0.628-0.2) {\textcolor{Red}{$\sigma_4$}};
    \end{tikzpicture}
    \\
    \hline
    
    $\left\{|+\rangle, |- \rangle \right\}$-basis 
    &
    $\mathrm{Pr} \left[ -| \textcolor{Red}{\sigma_3} \right] > \mathrm{Pr} \left[ -| \textcolor{Green}{\rho} \right]$
    &
    $\sigma_4 \propto \exp \left( -2\eta |- \rangle \! \langle-| - 2\eta |1 \rangle \! \langle 1|\right)$\\
    \hline
        \end{tabular}
    \caption{\emph{Illustration of Algorithm~\ref{alg:motivation} for random Clifford measurements of a single-rebit state.} }
    \label{fig:illustration_rebit}
\end{figure}

\clearpage

\part*{Appendix}

\paragraph*{Roadmap}

Fig.~\ref{fig:illustration_rebit} provides a single-``rebit'' illustration of the proposed algorithm.
App.~\ref{sec:convergence} provides the convergence analysis of the algorithm and highlights how it relates to the number of required measurement settings and state copies.
App.~\ref{sec:examples} supplies 
concrete runtime bounds for different basis measurement primitives (4-design, Clifford, mutually unbiased bases and $k$-local 4-design).
Noise-robustness is established in App.~\ref{sec:robust}, while
App.~\ref{sec:classicalpostprocessing} explains how to perform classical postprocessing efficiently. A possible implementation on a quantum computer is provided in App.~\ref{sec:quantumimplementation}.
App.~\ref{sec:convertingfromgibbstousual} completes the postprocessing analysis (classical \& quantum) by providing a way to efficiently convert the algorithm output -- a Hamiltonian --  to a list of eigenvalues and eigenvectors.
Additional details and background can be found in  App.~\ref{sec:effective-rank} (effective rank) and App.~\ref{sec:fastvecmulti} (fast matrix-vector multiplication).

\section{Convergence analysis for Hamiltonian Updates}\label{sec:convergence}

In this section, we provide rigorous runtime and convergence guarantees for quantum state tomography with Hamiltonian Updates. This algorithm is based on mirror descent and a more detailed version of this algorithm is presented in Algorithm~\ref{alg:HUtomo_last}.
In order to understand convergence to the desired target state $\rho$, we need to specify a suitable distance measure.
Mirror descent with the von Neumann entropy as potential, and its cousins, quantify convergence in terms of the quantum relative entropy:
\begin{equation}
S(\rho \| \sigma) = \mathrm{tr} \left(\rho ( \log (\rho) - \log (\sigma) \right). 
\end{equation}
This choice of distance measure plays nicely with iterative updates inside a matrix exponential. Initialization with the maximally mixed state 
$\sigma_0 = \exp(0)/\mathrm{tr}(\exp(0))=\tfrac{1}{D}\mathbb{I}$ also begets an intuitive motivation. The relative entropy between (any) target $\rho$ and $\sigma_0$ is bounded by the logarithm of the ambient dimension:
\begin{equation}
S(\rho \|\sigma_0) \leq \log (D) \quad \text{for any state $\rho$}.
\label{eq:initialization}
\end{equation}
This is a suitable starting point. 
Hamiltonian Updates is designed to ensure that each iteration makes constant progress towards the target (in relative entropy).

\begin{lem} \label{lem:update-rule}
Fix a Hamiltonian $H_t$ and set $H_{t+1}=H_t+\eta P$, where $P$ is an orthoprojector and $\eta \in [0,1]$. Then, the Gibbs states $\sigma_t = \exp (-H_t)/\mathrm{tr}(\exp (-H_t))$ and $\sigma_{t+1} = \exp (-H_{t+1})/\mathrm{tr}(\exp(-H_{t+1})$ obey
\begin{equation*}
S(\rho \| \sigma_{t+1}) - S(\rho \|\sigma_t) \leq \eta \left( 2\eta + \mathrm{tr} \left( P (\rho-\sigma_t) \right) \right) \quad \text{for any state $\rho$.}
\end{equation*}
The r.h.s.\ is negative, provided that $\eta < \tfrac{1}{2}\mathrm{tr} \left( P (\sigma_t-\rho)\right)$.
\end{lem}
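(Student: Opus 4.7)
The plan is to start from the definition of the quantum relative entropy and use the fact that both $\sigma_t$ and $\sigma_{t+1}$ are Gibbs states, which makes their logarithms easy to manipulate. Writing $Z_t = \mathrm{tr}(\exp(-H_t))$ and $Z_{t+1} = \mathrm{tr}(\exp(-H_{t+1}))$, we have $\log \sigma_t - \log \sigma_{t+1} = (H_{t+1}-H_t) + \log(Z_{t+1}/Z_t) = \eta P + \log(Z_{t+1}/Z_t)\, \mathbb{I}$. Plugging this into $S(\rho\|\sigma_{t+1})-S(\rho\|\sigma_t) = \mathrm{tr}(\rho(\log\sigma_t - \log\sigma_{t+1}))$ and using $\mathrm{tr}(\rho)=1$ yields the identity
\begin{equation*}
S(\rho\|\sigma_{t+1})-S(\rho\|\sigma_t) = \eta\, \mathrm{tr}(\rho P) + \log(Z_{t+1}/Z_t).
\end{equation*}
So the task reduces to upper-bounding $\log(Z_{t+1}/Z_t)$ by something of the form $-\eta\,\mathrm{tr}(\sigma_t P)+O(\eta^2)$.

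To estimate the log-partition ratio I would invoke the Golden--Thompson inequality, $\mathrm{tr}(e^{A+B}) \leq \mathrm{tr}(e^A e^B)$, with $A=-H_t$ and $B=-\eta P$. The key structural advantage of working with a projector is that the exponential collapses: $\exp(-\eta P) = \mathbb{I} + (e^{-\eta}-1)P$, so that $\mathrm{tr}(e^{-H_t}e^{-\eta P}) = Z_t\bigl(1+(e^{-\eta}-1)\mathrm{tr}(\sigma_t P)\bigr)$. Combining with $\log(1+x)\leq x$ gives
\begin{equation*}
\log(Z_{t+1}/Z_t) \;\leq\; (e^{-\eta}-1)\,\mathrm{tr}(\sigma_t P).
\end{equation*}

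The remaining work is a scalar estimate. Splitting $e^{-\eta}-1 = -\eta + (e^{-\eta}-1+\eta)$, a short convexity argument (or Taylor remainder) shows $e^{-\eta}-1+\eta \leq \eta^2/2 \leq 2\eta^2$ for $\eta\in[0,1]$, and since $0\leq \mathrm{tr}(\sigma_t P)\leq 1$ this yields $\log(Z_{t+1}/Z_t) \leq -\eta\,\mathrm{tr}(\sigma_t P) + 2\eta^2$. Substituting back into the identity from the first paragraph gives $S(\rho\|\sigma_{t+1})-S(\rho\|\sigma_t) \leq \eta\,\mathrm{tr}(P(\rho-\sigma_t)) + 2\eta^2$, which is exactly the advertised bound. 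The final statement that the right-hand side is negative whenever $\eta<\tfrac12 \mathrm{tr}(P(\sigma_t-\rho))$ is then immediate algebra.

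The main (mild) obstacle is really just the careful bookkeeping in going from $(e^{-\eta}-1)\mathrm{tr}(\sigma_t P)$ to the stated form: one must resist the temptation to drop $\mathrm{tr}(\sigma_t P)$ prematurely on the linear term while keeping the quadratic slack. Everything else is standard matrix-exponential manipulation, with Golden--Thompson doing the only nontrivial lifting and the projector identity for $\exp(-\eta P)$ making the resulting trace exactly computable.
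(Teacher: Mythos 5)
Your proof is correct, but it takes a genuinely different route from the paper's. The paper bounds the log-partition-function ratio via the Peierls--Bogoliubov inequality, which produces the term $-\mathrm{tr}\left((H_{t+1}-H_t)\sigma_{t+1}\right)$ involving the \emph{updated} Gibbs state; it must then pay to swap $\sigma_{t+1}$ back to $\sigma_t$, invoking an external perturbation lemma (\cite[Lem.~16]{Brandao2017a}) that controls $\|\sigma_{t+1}-\sigma_t\|_{\mathrm{tr}}$ by $2(\mathrm{e}^{\eta}-1)$, combined with H\"older; this swap is the source of the $2\eta^2$ slack. You instead apply Golden--Thompson and exploit the projector identity $\exp(-\eta P)=\mathbb{I}+(\mathrm{e}^{-\eta}-1)P$, which makes $\mathrm{tr}(e^{-H_t}e^{-\eta P})$ exactly computable in terms of $\mathrm{tr}(\sigma_t P)$ — so the bound lands directly on the \emph{current} iterate with no perturbation step, and the quadratic slack comes only from the scalar Taylor remainder $\mathrm{e}^{-\eta}-1+\eta\leq \eta^2/2$. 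Your argument is self-contained (no external Gibbs-state stability lemma) and in fact yields the sharper constant $\eta^2/2$ in place of $2\eta^2$; the one thing the paper's route buys is that it does not rely on the increment being a scalar multiple of a projector, so it would survive more general update rules, whereas your collapse of $\exp(-\eta P)$ is specific to orthoprojectors — which is all the lemma requires. All of your intermediate steps check out, including the positivity of $1+(\mathrm{e}^{-\eta}-1)\mathrm{tr}(\sigma_t P)$ needed for $\log(1+x)\leq x$, and the careful retention of $\mathrm{tr}(\sigma_t P)$ on the linear term.
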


\begin{proof}
The matrix logarithm in relative entropies plays nicely with the matrix exponential associated with Gibbs states:
\begin{align}\label{equ:first-step}
S(\rho \| \sigma_{t+1}) - S (\rho \| \sigma_t) 
=& \mathrm{tr} \left( \rho (H_{t+1}-H_{t}) \right)
+ \log \left( \frac{\mathrm{tr}(\exp (-H_{t+1}))}{\mathrm{tr}(\exp (- H_t))} \right).
\end{align}
Let us bound the second term of Eq.~\eqref{equ:first-step}. The Peierls-Bogoliubov inequality states that
\begin{align}\label{equ:Peierls-Bogoliubov}
    &-\log\lb\frac{\mathrm{tr}(\exp (-H_{t}))}{\mathrm{tr}(\exp (- H_{t+1}))} \rb=-\log\lb\frac{\mathrm{tr}(\exp (-H_{t+1}+H_{t+1}-H_{t}))}{\mathrm{tr}(\exp (- H_{t+1}))}\rb\leq -\tr{(H_{t+1}-H_{t})\sigma_{t+1}}
\end{align}
Inserting Eq.~\eqref{equ:Peierls-Bogoliubov} into Eq.~\eqref{equ:first-step}, we get
\begin{align*}
    \mathrm{tr} \left( \rho (H_{t+1}-H_{t}) \right)
+ \log \left( \frac{\mathrm{tr}(\exp (-H_{t+1}))}{\mathrm{tr}(\exp (- H_t))} \right)\leq \mathrm{tr} \left( (\rho -\sigma_{t+1})(H_{t+1}-H_{t}) \right).
\end{align*}
It then follows that 
\begin{align*}
    \mathrm{tr} \left( (\rho -\sigma_{t+1})(H_{t+1}-H_{t}) \right)=\mathrm{tr} \left( (\rho -\sigma_{t})(H_{t+1}-H_{t}) \right)+\mathrm{tr} \left( (\sigma_t -\sigma_{t+1})(H_{t+1}-H_{t}) \right).
\end{align*}
Let us now bound the second term. \cite[Lem.~16]{Brandao2017a} implies 
$\tfrac{1}{2}\| \sigma_{t+1} - \sigma_t \|_{tr} \leq  \left( \exp (\eta \|P \|)-1\right) = \mathrm{e}^\eta-1 \leq  \eta+\eta^2 \leq 2 \eta$. Together with H\"older and inserting $H_{t+1}-H_{t}=\pm \eta P$ we then obtain
\begin{align*}
    \mathrm{tr} \left( (\sigma_t -\sigma_{t+1})(H_{t+1}-H_{t}) \right)\leq \frac{\eta}{2}\| \sigma_{t+1} - \sigma_t \|_{tr}\|P\|\leq 2\eta^2.
\end{align*}
By the definition of $H_{t+1}$ we further obtain
\begin{align*}
    \mathrm{tr} \left( (\rho -\sigma_{t})(H_{t+1}-H_{t}) \right)= \eta\mathrm{tr} \left( (\rho -\sigma_{t})P \right)
\end{align*}
and the claim follows.
\end{proof}

\begin{algorithm}[t]
\caption{\textit{Hamiltonian Updates for quantum 
state tomography with last step recycling.}
}
\begin{algorithmic}[H]
\State \textbf{Input:} error tolerance $\epsilon$,  number of loops $L$.
\State initialize: $t=0$, $H_t=0$, \textsc{convergence}=\textsc{false}
\While{\textsc{convergence}=\textsc{false}}
\State compute $\sigma_t = \exp(-H_t)/\mathrm{tr}(\exp(-H_t))$  \Comment current guess for the state $\rho$
\State select random basis measurement
$\left\{ U|i \rangle \! \langle i|U^\dagger \right\}$
\State compute %
outcome statistics $[p_i]$ of $\sigma_t$ 
 \Comment classical computation
\State estimate %
outcome statistics $[q_i]$ of $\rho$  \Comment quantum measurement
\State Set \textsc{Basis match}=\textsc{false}
\While{\textsc{Basis match}=\textsc{false}}
\State \textbf{check} if $[p_i]$ and $[q_i]$ are $\epsilon$-close in $\ell_1$ distance
\If{\textsc{no}}{ set
$P=\sum_{p_i>q_i} |i \rangle \! \langle i|$} 
\Comment{collect outcomes for which $p_i>q_i$}
\State Set $\eta=\frac{1}{8}\|p-q\|_{\ell_1}$
\State $H_{t+1} \gets H_t + \eta U^\dagger P U$ \Comment energy penalty for mismatch (in this basis)
\State update $\sigma_{t+1} = \exp(-H_{t+1})/\mathrm{tr}(\exp(-H_{t+1}))$ 
\State update outcome statistics $[p_i]$ of $\sigma_{t+1}$  \Comment recycling the measurement data
\State  $t\gets t+1$ \Comment{update number of updates counter}

\ElsIf{\textsc{yes}}\Comment current guess may be close to $\rho$
\State Set \textsc{Basis match}=\textsc{true} \Comment this basis does not provide updates anymore
\State check $L$ additional random bases \Comment{suppress likelihood of false positives}
\If {$\ell_1$ distance is always $<\epsilon$ } \Comment current guess is likely to be close
\State{\textbf{set} \textsc{convergence=true}}
\EndIf
\EndIf
\EndWhile
\EndWhile

\State \textbf{Output:} $H_t$
\end{algorithmic}
\label{alg:HUtomo_last}
\end{algorithm}

Lemma~\ref{lem:update-rule} ensures that every successful iteration in Algorithm~\ref{alg:HUtomo_last} makes constant progress towards the target, provided that $2 \eta$ (step size) is smaller than the observed measurement outcome discrepancy.
For $\left[q_i\right] = \langle i| U \rho U^\dagger |i \rangle$ and $\left[p_i \right] = \langle i| U \sigma U^\dagger |i \rangle$, the construction of the projector ensures
\begin{equation*}
\mathrm{tr}(U^\dagger PU (\rho-\sigma_t)) =
\sum_{p_i >q_i}(p_i - q_i ) =
\tfrac{1}{2}\sum_{i}|p_i - q_i | \leq \tfrac{\epsilon}{2}.
\end{equation*}
Combined with Eq.~\eqref{eq:initialization}, this readily implies a worst-case bound on the maximum number of iterations.

\begin{prop} \label{prop:convergence}
Fix a desired target accuracy $\epsilon$ and set $\eta=\epsilon/8$ (constant step size).
Then,
Algorithm~\ref{alg:HUtomo_last} terminates after at most $T=\lceil32\log (D)/\epsilon^2 \rceil$ steps. 
\end{prop}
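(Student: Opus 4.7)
The plan is a standard potential-function argument with $S(\rho\|\sigma_t)$ as the Lyapunov function. Lemma~\ref{lem:update-rule} already tells us that each update decreases the relative entropy by at least $-\eta\bigl(2\eta+\mathrm{tr}(P(\rho-\sigma_t))\bigr)$, so it suffices to (i) identify the trace term in terms of the observed $\ell_1$ discrepancy, (ii) plug in the chosen step size $\eta=\epsilon/8$ to lower-bound the per-step progress by a constant depending only on $\epsilon$, and (iii) telescope using the fact that $S(\rho\|\sigma_0)\le\log(D)$ and $S(\rho\|\sigma_t)\ge 0$ at all times.

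First I would unpack the trace term. By construction $P=\sum_{p_i>q_i}|i\rangle\!\langle i|$ and the probability vectors satisfy $p_i=\langle i|U\sigma_t U^\dagger|i\rangle$, $q_i=\langle i|U\rho U^\dagger|i\rangle$. Hence
\begin{equation*}
\mathrm{tr}\bigl(U^\dagger P U(\rho-\sigma_t)\bigr)=\sum_{p_i>q_i}(q_i-p_i)=-\tfrac{1}{2}\|p-q\|_{\ell_1},
\end{equation*}
where the last identity is the standard variational characterization of total variation distance. An update is only performed when the closeness check fails, which guarantees $\|p-q\|_{\ell_1}>\epsilon$, so the trace term is at most $-\epsilon/2$ at every update step.

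Combining this with Lemma~\ref{lem:update-rule} at the constant step size $\eta=\epsilon/8$ yields, for each update,
\begin{equation*}
S(\rho\|\sigma_{t+1})-S(\rho\|\sigma_t)\le \frac{\epsilon}{8}\left(\frac{\epsilon}{4}-\frac{\epsilon}{2}\right)=-\frac{\epsilon^2}{32}.
\end{equation*}
Thus every successful iteration inside the \textsc{while} loop decreases the relative entropy by at least $\epsilon^2/32$. Iterating this bound and invoking the initialization estimate $S(\rho\|\sigma_0)=\log(D)-S(\rho)\le\log(D)$ together with the nonnegativity $S(\rho\|\sigma_T)\ge 0$ gives
\begin{equation*}
T\cdot\frac{\epsilon^2}{32}\le S(\rho\|\sigma_0)-S(\rho\|\sigma_T)\le\log(D),
\end{equation*}
so the algorithm must exit the update loop after at most $T\le\lceil 32\log(D)/\epsilon^2\rceil$ successful updates, which is the claimed bound.

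There is no real obstacle here beyond bookkeeping: the only place one has to be careful is the sign of $\mathrm{tr}(P(\rho-\sigma_t))$, which requires matching the convention that $P$ collects the outcomes on which the \emph{model} $\sigma_t$ overestimates $\rho$ (so the trace is negative, and the energy penalty pushes the Gibbs state in the correct direction). Note also that the bound counts \emph{updates}, not basis draws; bases that already match are not counted by this proposition and their number is controlled separately by the false-positive analysis of the control loop.
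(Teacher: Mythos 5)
Your proof is correct and follows essentially the same route as the paper: apply Lemma~\ref{lem:update-rule} with the projector identity $\mathrm{tr}(U^\dagger PU(\rho-\sigma_t))=-\tfrac{1}{2}\|p-q\|_{\ell_1}\le -\epsilon/2$, obtain per-update progress $-\epsilon^2/32$, and telescope against $S(\rho\|\sigma_0)\le\log(D)$ and nonnegativity of relative entropy. Your sign bookkeeping for the trace term is in fact cleaner than the paper's inline display (which states the identity with the roles of $\rho$ and $\sigma_t$ swapped), and your closing remark that the bound counts updates rather than basis draws matches the paper's own interpretation of the contradiction argument.
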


\begin{proof}
For the sake of this argument, we assume that the algorithm doesn't terminate prematurely. 
The choice of step size together with Lemma~\ref{lem:update-rule} ensures that the $T$th iterate in Algorithm~\ref{alg:HUtomo_last} obeys
\begin{equation*}
S(\rho \|\sigma_T) - S(\rho \|\sigma_0)
=\sum_{t=0}^{T-1} \left( S(\rho \| \sigma_{t+1}) - S(\rho\| \sigma_t ) \right) \leq T \eta \left( 2 \eta - \tfrac{1}{2}\epsilon \right) = - \tfrac{\epsilon^2}{32}T.
\end{equation*}
Combined with Eq.~\eqref{eq:initialization} this implies
\begin{equation}
S(\rho \|\sigma_T) \leq S(\rho \|\sigma_0) - \tfrac{\epsilon^2}{32}T \leq \log (D) - \tfrac{\epsilon^2}{32}T.
\label{eq:convergence-aux}
\end{equation}
For $T \geq \lceil \frac{32 \log(D)}{\epsilon^2} \rceil$, the r.h.s.\ becomes negative -- an apparent contradiction to the nonnegativity of quantum relative entropy. 
To appropriately resolve this conflict, we need to take into account that each update in Algorithm~\ref{alg:HUtomo_last} is contingent on finding a basis measurement that is capable of distinguishing the current iterate $\sigma_t$ from $\rho$ (up to accuracy $\epsilon$). 
Viewed from this angle, Rel.~\eqref{eq:convergence-aux} simply states that it is impossible to find more than $T=\lceil \frac{32 \log(D)}{\epsilon^2} \rceil$ consecutive basis measurements that meet the update condition ($\sum_i |p_i -q_i| >\epsilon$). In other words: Algorithm~\ref{alg:HUtomo_last} terminates after at most $\lceil \frac{32 \log(D)}{\epsilon^2} \rceil$ steps.
\end{proof}

Proposition~\ref{prop:convergence} is an adaptation of standard convergence analysis arguments that is valid for any measurement primitive. It states that at some point, it becomes impossible to find \emph{any} new basis measurement that is capable of accurately distinguishing the current iterate $\sigma_T$ from the target. It does not address the problem of how to find suitable measurements and how one should actually check the current stage of convergence. 
Hamiltonian Updates is based on a simple routine to check both of them. Start with a basis measurement primitive $\mathcal{E}$ that is well-equipped for distinguishing the target $\rho$ from the current iterate $\sigma_t$. This is characterized by the ensemble parameters $\theta_\mathcal{E}(\rho)$ and $\tau_{\mathcal{E}}(\rho)$ defined in Eq.~\eqref{eq:tomo-sound-intro}. 
Sample $L$ basis measurements at random and compare the outcome distributions. If we find a noticeable discrepancy ($\sum_i |p_i - q_i|>\epsilon$), we use this basis to perform an update. If all $L$ pairs of outcome distributions are $\epsilon$-close, we conclude that it is likely that the algorithm has converged and $\sigma_t$ is close to $\rho$.
This stopping condition is supported by a simple probabilistic argument based on Eq.~\eqref{eq:tomo-sound-intro}.  Suppose that the current iterate obeys $\|\rho-\sigma_t \|_2 \geq \epsilon/\theta_{\mathcal{E}}(\rho)$, i.e.\ convergence has not been achieved yet. Then, the probability of failing to detect this discrepancy with $L$ independently sampled basis measurements is bounded by $\left(1-\tau_{\mathcal{E}}(\rho)\right)^L$. 
This highlights that the size of the control loop $L$ exponentially suppresses the probability of a false positive in step $t$ of the algorithm. 
For $\delta \in [0,1]$, the explicit (and ensemble-dependent) choice 
\begin{equation*}
L = \lceil \log (T)\log (1/\delta)/\tau_{\mathcal{E}}(\rho) \rceil
\quad \text{ensures} \quad 
(1-\tau_{\mathcal{E}}(\rho))^L \leq \delta/T,
\end{equation*}
where $T=\lceil 32 \log (D)/\epsilon^2 \rceil$ is the bound on the maximum number of iterations from Proposition~\ref{prop:convergence}. A union bound over all (actual) steps then ensures that the probability of incurring at least one false positive throughout -- i.e.\ $\|\rho-\sigma_t \|_2 \geq \epsilon/\theta_{\mathcal{E}}(\rho)$, but we fail to detect this discrepancy with $L$ independent basis measurements -- is bounded by $\delta$. 
Taking the contrapositive of this assertion and combining it with Proposition~\ref{prop:convergence} -- the algorithm must terminate after at most $T$ steps --  completes the convergence analysis. 

\begin{prop} \label{prop:convergence-accuracy}
Fix an (unknown) target state $\rho$ and a basis measurement primitive with parameters $\theta_{\mathcal{E}}(\rho),\tau_{\mathcal{E}}(\rho)$, as well as accuracy $\epsilon$ and error probability $\delta$.
Then, choosing $L=\lceil \frac{\log (T)\log (1/\delta)}{\tau_{\mathcal{E}}(\rho)} \rceil$ for the size of the control loop in Algorithm~\ref{alg:HUtomo_last} ensures that 
 the output $\sigma_\star$ of Algorithm~\ref{alg:HUtomo_last} obeys
\begin{equation*}
\|\sigma_\star - \rho \|_2 \leq  \epsilon / \theta_{\mathcal{E}}(\rho).
\quad \text{with probability at least $1-\delta$.}
\end{equation*}
Here, $T=\lceil 32 \log (D)/\epsilon^2 \rceil$ denotes the upper bound on the maximum number of updates within Algorithm~\ref{alg:HUtomo_last}.
\end{prop}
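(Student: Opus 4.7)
The plan is to combine the deterministic progress bound from Proposition~\ref{prop:convergence} with a probabilistic analysis of the verification loop. First, I would invoke Proposition~\ref{prop:convergence} to fix an a priori upper bound $T = \lceil 32\log(D)/\epsilon^2\rceil$ on the total number of successful update steps executed by Algorithm~\ref{alg:HUtomo_last}, regardless of the random choices made inside it. By the structure of the outer while-loop, the number of ``verification events'' -- instances in which the algorithm reaches the block where it tests $L$ fresh random bases -- is also bounded by a fixed multiple of $T$, because each such event is either followed by termination or by at least one new update. This turns an adaptive, data-dependent stopping rule into a union bound over a deterministically bounded number of attempts.

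Next, I would translate the ensemble parameters into a per-event false-positive bound. Call step $t$ a ``bad step'' if the current iterate $\sigma_t$ satisfies $\|\rho-\sigma_t\|_2 > \epsilon/\theta_{\mathcal{E}}(\rho)$ yet every one of the $L$ independently sampled bases in the verification loop produces outcome statistics that are $\epsilon$-close to those of $\sigma_t$ in $\ell_1$ distance. By the definition of the worst-case parameters $\theta_{\mathcal{E}}(\rho), \tau_{\mathcal{E}}(\rho)$ in \eqref{equ:measurement_ensemble_par}, whenever $\|\rho-\sigma_t\|_2 > \epsilon/\theta_{\mathcal{E}}(\rho)$ the guarantee \eqref{eq:tomo-sound-intro} implies that a single fresh draw $U\sim\mathcal{E}$ produces an $\ell_1$-discrepancy exceeding $\theta_{\mathcal{E}}(\rho)\|\rho-\sigma_t\|_2 > \epsilon$ with probability at least $\tau_{\mathcal{E}}(\rho)$. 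Independence of the $L$ draws then bounds the per-event failure probability by $(1-\tau_{\mathcal{E}}(\rho))^L \leq \exp(-L\tau_{\mathcal{E}}(\rho))$.

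The calibration step is then straightforward: the choice $L=\lceil \log(T)\log(1/\delta)/\tau_{\mathcal{E}}(\rho)\rceil$ makes the per-event failure probability at most $\delta/T$. A union bound over the at most $T$ verification events bounds the probability that any bad step occurs by $\delta$. On the complementary event, the algorithm only sets \textsc{convergence=true} at a step $t$ with $\|\rho-\sigma_t\|_2 \leq \epsilon/\theta_{\mathcal{E}}(\rho)$, and the output $\sigma_\star = \sigma_t$ therefore satisfies the claim.

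The principal obstacle, conceptually, is bookkeeping rather than analysis: because the control flow of Algorithm~\ref{alg:HUtomo_last} is adaptive, the number of verification events is itself a random variable, so the union bound must rest on an a priori deterministic cap. Proposition~\ref{prop:convergence} supplies this cap cleanly via the relative-entropy potential, and the last-step recycling in Algorithm~\ref{alg:HUtomo_last} does not affect the count because recycled bases are already-sampled and hence contribute no new randomness. The only minor care point is to verify that $\exp(-L\tau_{\mathcal{E}}(\rho))\leq \delta/T$ holds for the stated $L$, which is a direct inequality in the parameter regime of interest.
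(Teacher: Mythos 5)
Your proposal is correct and follows essentially the same route as the paper: bound the per-verification false-positive probability by $(1-\tau_{\mathcal{E}}(\rho))^L \leq \delta/T$ via Eq.~\eqref{eq:tomo-sound-intro}, union-bound over the at most $T$ verification events guaranteed by Proposition~\ref{prop:convergence}, and conclude by contrapositive that termination only occurs with $\|\rho-\sigma_\star\|_2 \leq \epsilon/\theta_{\mathcal{E}}(\rho)$. Your extra bookkeeping on counting verification events and on recycled bases is a slightly more careful rendering of the same argument, and the small caveat you flag about $\exp(-L\tau_{\mathcal{E}}(\rho))\leq\delta/T$ holding only in the relevant parameter regime is present in the paper's version as well.
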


This is the main technical result of this work. It establishes a probabilistic convergence guarantee for the output of Algorithm~\ref{alg:HUtomo_last}. Note that the established accuracy $\epsilon /\theta_{\mathcal{E}}(\rho)$ is worse than the original accuracy parameter (typically: $\theta_{\mathcal{E}}(\rho)<1$) and depends on the target state. What is more, Prop.~\ref{prop:convergence-accuracy} establishes closeness in Frobenius norm only.
We can convert it into a trace distance bound at the cost of an extra rank factor $r=\mathrm{rank}(\rho)$:
\begin{equation*}
\|\sigma_\star - \rho \|_{\mathrm{tr}} \leq \sqrt{r} \| \sigma_\star - \rho \|_2
\leq \sqrt{r} \epsilon/\theta_{\mathcal{E}}(\rho)
\end{equation*}

We refer to App.~\ref{sec:effective-rank} for a proof of this  conversion rule. Furthermore, it is possible to replace an assumption on the rank by a continuous relaxation thereof. More precisely, define the $\alpha-$effective rank of $\rho$ as
\begin{align*}
 r_{\text{eff},\alpha}(\rho)=\tr{\rho^{\alpha}} ^{\frac{1}{1-\alpha}} \quad \text{with $\alpha \in (0,1)$.}
\end{align*}
We refer to Appendix~\ref{sec:effective-rank} for a discussion of this quantity. We note that, up to exponentiation and normalization, it corresponds to the $\alpha-$R\'enyi entropy of $\rho$ and satifies $\lim_{\alpha\to 0}r_{\text{eff},\alpha}(\rho)=r(\rho)$.
In Cor.~\ref{cor:effectiverank} we show that
 \begin{align}\label{equ:conversionrule}
    \|\rho-\sigma\|_1\leq 2
r_{\mathrm{eff},\alpha}(\rho)^{\frac{1}{2}}\varepsilon^{-\frac{\alpha}{2(
\alpha-1)}}\|\rho-\sigma\|_2+2\varepsilon(1-\alpha)^{\frac{1}{\alpha}}.
\end{align}
for any $\epsilon>0$.
We conclude that Algorithm~\ref{alg:HUtomo_last} can be equipped with rigorous convergence guarantees in trace distance also -- albeit at the cost of an extra multiplicative factor in the original accuracy $\epsilon$. 
However, given a bound on $r_{\mathrm{eff},\alpha}$, this can be offset by running the algorithm with adjusted accuracy 
\begin{equation}\label{equ:adjustedepsilon}
\varepsilon=\theta_{\mathcal{E}}(\rho)r_{\mathrm{eff},\alpha}(\rho)^{-\frac{1}{2}}\epsilon^{1+\frac{\alpha}{2(1-\alpha)}}.
\end{equation}
This slight adjustment -- that only depends on the measurement primitive and the target (effective rank) -- gives rise to a convergence guarantee in trace distance. 

\begin{thm}[Detailed re-statement of Theorem~\ref{thm:main-informal}] \label{thm:main-detail}
Suppose that we wish to reconstruct a $D$-dimensional target state $\rho$ with rank $r$ up to accuracy $\epsilon$ in trace distance with probability at least $1-\delta$. 
Then, Hamiltonian Updates -- Algorithm~\ref{alg:HUtomo_last} -- based on any basis measurement primitive with parameters $\theta_{\mathcal{E}}(\rho),\tau_{\mathcal{E}}(\rho)>0$ achieves this goal, provided that we make the following parameter choices:
\begin{align*}
\varepsilon =& \theta_{\mathcal{E}}(\rho) \epsilon /\sqrt{r} & \text{(accuracy within the algorithm)},\\
\eta =& \varepsilon/8 = \theta_{\mathcal{E}}(\rho) \epsilon /(8\sqrt{r}) & \text{(step size)}, \\
T=& \lceil 32\log(D)/\varepsilon^2 \rceil = \lceil 32 \log (D) r/\theta_{\mathcal{E}}^2(\rho)\epsilon^{-2} \rceil & \text{(maximum number of iterations)}, \\
L=& \lceil \log(T) \log (1/\delta)/\tau_{\mathcal{E}}(\rho) \rceil %
& \text{(size of the control loops)}.
\end{align*}
This corresponds to at most
\begin{equation}
M = TL = \log (1/\delta) T \log (T)/\tau_{\mathcal{E}}(\rho) = \tilde{\mathcal{O}}\big(r(\rho)/(  \tau_{\mathcal{E}}(\rho) \theta_{\mathcal{E}}(\rho)^2 \epsilon^2) \big)
\label{eq:basis-cost}
\end{equation}
different measurement settings.
\end{thm}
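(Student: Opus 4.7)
The plan is to derive the theorem as a direct consolidation of the two ingredients already established in this appendix --- Proposition~\ref{prop:convergence} (deterministic bound on the maximum number of iterations) and Proposition~\ref{prop:convergence-accuracy} (probabilistic Frobenius-distance convergence guarantee) --- together with a Frobenius-to-trace-distance conversion rule that exploits the low-rank assumption on $\rho$. The only real work beyond assembling these pieces is choosing the internal accuracy parameter so that the three assertions compose into the advertised trace-distance bound.

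Concretely, I would run Algorithm~\ref{alg:HUtomo_last} not with the target accuracy $\epsilon$ but with a rescaled internal accuracy $\varepsilon = \theta_{\mathcal{E}}(\rho)\epsilon/\sqrt{r}$ (and matching step size $\eta = \varepsilon/8$, which is the regime analyzed in Proposition~\ref{prop:convergence}). Applying Proposition~\ref{prop:convergence-accuracy} with this $\varepsilon$ and the specified control-loop size $L = \lceil \log(T)\log(1/\delta)/\tau_{\mathcal{E}}(\rho) \rceil$ yields, with probability at least $1-\delta$, an output $\sigma_\star$ satisfying $\|\sigma_\star - \rho\|_2 \leq \varepsilon/\theta_{\mathcal{E}}(\rho) = \epsilon/\sqrt{r}$. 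Invoking the low-rank conversion $\|\sigma_\star - \rho\|_{\mathrm{tr}} \leq \sqrt{r}\,\|\sigma_\star - \rho\|_2$ promised in App.~\ref{sec:effective-rank} then upgrades this to $\|\sigma_\star - \rho\|_{\mathrm{tr}} \leq \epsilon$, as required. Proposition~\ref{prop:convergence} supplies the iteration count $T = \lceil 32\log(D)/\varepsilon^2 \rceil = \lceil 32\, r\log(D)/(\theta_{\mathcal{E}}(\rho)^2 \epsilon^2) \rceil$, and multiplying by $L$ gives $M = TL = \tcO\big( r/(\tau_{\mathcal{E}}(\rho)\theta_{\mathcal{E}}(\rho)^2 \epsilon^2)\big)$, matching \eqref{eq:basis-cost}.

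The only step that carries genuine content --- the rest being bookkeeping of results already in hand --- is the Frobenius-to-trace conversion. The naive bound $\|A\|_{\mathrm{tr}} \leq \sqrt{\mathrm{rank}(A)}\,\|A\|_2$ applied to $A = \sigma_\star - \rho$ is useless here, since $\sigma_\star$ is typically full-rank and would cost a factor $\sqrt{D}$ instead of $\sqrt{r}$. A conversion that actually benefits from the rank of the target $\rho$ rather than of the difference $\rho - \sigma_\star$ is exactly what the effective-rank machinery of App.~\ref{sec:effective-rank} is designed to deliver (the paper's more general statement \eqref{equ:conversionrule} reduces to the clean $\sqrt{r}$ factor in the $\alpha \to 0$ limit), and this is where I expect the real subtlety to lie. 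Once that conversion rule is in place, the theorem follows by pure substitution of $\varepsilon = \theta_{\mathcal{E}}(\rho)\epsilon/\sqrt{r}$ into Propositions~\ref{prop:convergence} and~\ref{prop:convergence-accuracy}.
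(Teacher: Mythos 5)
Your proposal is correct and follows essentially the same route as the paper, which likewise obtains Theorem~\ref{thm:main-detail} by running the algorithm at the rescaled internal accuracy $\varepsilon=\theta_{\mathcal{E}}(\rho)\epsilon/\sqrt{r}$, invoking Propositions~\ref{prop:convergence} and~\ref{prop:convergence-accuracy}, and then applying the rank-$\rho$-dependent Frobenius-to-trace conversion of Lemma~\ref{lem:stablerankfrobenius}. You also correctly pinpoint the one nontrivial ingredient: the conversion must depend on $\mathrm{rank}(\rho)$ rather than $\mathrm{rank}(\rho-\sigma_\star)$, which is exactly what the Helstrom-projector argument in App.~\ref{sec:effective-rank} provides.
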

We only stated the recovery guarantees in terms of the rank in order not to over-complicate the presentation. However it is straightforward to adapt the bounds for the effective rank with the aid of Eq.~\eqref{equ:adjustedepsilon} and Eq.~\eqref{equ:conversionrule}. We restate in full generality in Thm~\ref{thm:main-stable} of Appendix~\ref{sec:effective-rank}.
So far, we have not taken into account the effect of statistical fluctuations when estimating outcome distributions of the unknown state. 
\textsc{Hamiltonian Updates} is designed to be robust with respect to errors and noise. Estimating outcome distributions of the unknown state up to accuracy $\mathcal{O}(\varepsilon)$ is sufficient to drive progress within the algorithm. A total of $N_{\text{single basis}}=\mathcal{O}(D \varepsilon^{-2})=\mathcal{O}(D r(\rho)/(\theta_{\mathcal{E}}^2(\rho))\epsilon^2)$ samples per basis measurements suffice to meet this accuracy threshold. Thus, by suitably reducing the step size at each iteration, it is possible to account for both noise in the measurements and statistical fluctuations. This is discussed in more detail in App.~\ref{sec:robust}.

\begin{cor}[Worst-case sample complexity ] \label{cor:sample-complexity}
The total number of state samples required to execute the procedure detailed in Theorem~\ref{thm:main-detail} is at most
\begin{equation*}
N = N_{\text{single basis}}M 
= \tilde{\mathcal{O}} \left( r(\rho)^2 D / (\tau_{\mathcal{E}} (\rho)\theta_{\mathcal{E}}^4(\rho) \epsilon^4 )\right).
\end{equation*}
\end{cor}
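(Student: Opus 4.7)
}

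The plan is to obtain the total sample complexity as the product $N = M \cdot N_{\text{single basis}}$, using the bound on the number of measurement settings $M$ already established in Theorem~\ref{thm:main-detail} and a per-basis estimation bound $N_{\text{single basis}}$. The first factor is immediate: Theorem~\ref{thm:main-detail} asserts $M = \tilde{\mathcal{O}}\big(r(\rho)/(\tau_{\mathcal{E}}(\rho)\theta_{\mathcal{E}}(\rho)^2 \epsilon^2)\big)$ under the accuracy choice $\varepsilon = \theta_{\mathcal{E}}(\rho)\epsilon/\sqrt{r(\rho)}$. Note that this count already includes both the genuine update steps (at most $T$) and the inner control-loop checks (size $L$ each), so no further multiplication is needed at the level of measurement settings.

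The second ingredient is the claim, stated just above the corollary, that each measurement setting requires at most $N_{\text{single basis}} = \mathcal{O}(D/\varepsilon^2) = \mathcal{O}(Dr(\rho)/(\theta_{\mathcal{E}}(\rho)^2\epsilon^2))$ state copies. To justify this, I would invoke the standard fact that the empirical distribution of $N_{\text{single basis}}$ i.i.d.\ samples from a distribution on $D$ outcomes concentrates around the true distribution in $\ell_1$-distance: for $N_{\text{single basis}} = \mathcal{O}(D/\varepsilon^2)$ the estimate $\hat{q}$ of the outcome statistics $q_i = \langle i|U\rho U^\dagger|i\rangle$ satisfies $\|\hat{q} - q\|_{\ell_1} \leq \varepsilon/c$ with high probability, for some constant $c$. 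A union bound over the $M$ measurement settings (absorbing a logarithmic factor into the $\tilde{\mathcal{O}}$) ensures that this accuracy holds simultaneously along the entire run.

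Next, I would appeal to the robustness analysis of App.~\ref{sec:robust} to argue that this level of statistical accuracy is indeed sufficient to drive Algorithm~\ref{alg:HUtomo_last}: the update rule only uses the comparison $\|p - q\|_{\ell_1} \gtrsim \varepsilon$ and the projector $P = \sum_{p_i > q_i}|i\rangle\!\langle i|$, and an $\ell_1$ perturbation of order $\varepsilon$ in the empirical outcome distribution can be absorbed by a minor rescaling of the step size $\eta$ without changing the worst-case progress estimate from Lemma~\ref{lem:update-rule}. In particular, the probabilistic convergence conclusion of Proposition~\ref{prop:convergence-accuracy} continues to hold, so the algorithm still produces a state $\sigma_\star$ with $\|\rho - \sigma_\star\|_{\mathrm{tr}} \leq \epsilon$ with the same target probability (up to constants in the failure probability, compensated in the logarithmic factors).

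Multiplying the two bounds then gives
\begin{equation*}
N = M \cdot N_{\text{single basis}} = \tilde{\mathcal{O}}\!\left(\frac{r(\rho)}{\tau_{\mathcal{E}}(\rho)\theta_{\mathcal{E}}(\rho)^2\epsilon^2}\right)\cdot \mathcal{O}\!\left(\frac{D r(\rho)}{\theta_{\mathcal{E}}(\rho)^2\epsilon^2}\right) = \tilde{\mathcal{O}}\!\left(\frac{D\, r(\rho)^2}{\tau_{\mathcal{E}}(\rho)\theta_{\mathcal{E}}(\rho)^4\epsilon^4}\right),
\end{equation*}
as claimed. The only non-routine piece is checking that the algorithm tolerates the empirical $\ell_1$ noise with the chosen step-size rescaling; everything else is either already proved (Theorem~\ref{thm:main-detail}) or follows from standard discrete-distribution concentration. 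The main obstacle, therefore, is bookkeeping: one must ensure that the slight worsening of $\varepsilon$ due to noise, and the union bound across $M$ settings, only inflate the bound by polylogarithmic factors that are absorbed in the $\tilde{\mathcal{O}}$ notation.
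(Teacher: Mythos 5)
Your proposal is correct and follows essentially the same route as the paper: the paper also obtains $N$ as the product of the measurement-setting count $M$ from Theorem~\ref{thm:main-detail} with a per-basis budget $N_{\text{single basis}} = \mathcal{O}(D/\varepsilon^2)$ at the internal accuracy $\varepsilon = \theta_{\mathcal{E}}(\rho)\epsilon/\sqrt{r}$, deferring the tolerance of statistical fluctuations to the robustness analysis of App.~\ref{sec:robust} via a step-size adjustment. Your explicit mention of the union bound over the $M$ settings is a small but welcome addition that the paper leaves implicit.
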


\section{Concrete runtime bounds via quantum state distinguishability} \label{sec:examples}
 
Theorem~\ref{thm:main-detail} provides a rigorous convergence guarantee for Hamiltonian Updates (Algorithm~\ref{alg:HUtomo_last}).
This, in turn, also bounds the required number of basis measurements (see Rel.~\eqref{eq:basis-cost} and sample complexity (Corollary~\ref{cor:sample-complexity}).
These bounds all depend on parameters \eqref{eq:tomo-sound-intro} that capture how well the measurement primitive can distinguish pairs of states. 
This question has a long and rich history that dates back to Helstrom \cite{Helstrom1969} and Holevo \cite{Holevo1973}. These pioneering works showed that the optimal probability of correctly distinguishing two known states $\rho,\sigma$ is proportional to their trace distance: $p_{\mathrm{succ}}=\tfrac{1}{2}+\tfrac{1}{4}\|\rho-\sigma\|_1$. 
The optimal distinguishing measurement depends on the states in question (it is the projector onto the positive range of $\rho-\sigma$). Later, Ambainis and Emerson considered an interesting variation of the problem: What is the optimal probability of distinguishing two states with a \emph{fixed} measurement primitive? 
In this case, the measurement procedure is fixed and it is only possible to optimize the probability of success classically over the resulting outcome distributions. For the measurement primitive considered here -- unitary transformations $U \sim \mathcal{E}$ followed by a computational basis measurement -- the maximum likelihood rule yields
$
p_{\mathrm{succ}}=\frac{1}{2}+\frac{1}{4}\mathbb{E}_{U \sim \mathcal{E}} \|p_U (\rho) - p_U (\sigma) \|_{\ell_1}
$
which is optimal, see e.g.\ \cite{Matthews_2009}. The bias can be related to a distance in state space:
\begin{equation}
\mathbb{E}_{U \sim \mathcal{E}} \|p_U (\rho) - p_U (\sigma) \|_{\ell_1} \geq \lambda_{\mathcal{E}}(\rho,\sigma) \| \rho - \sigma \|_2.
\label{eq:distinguishing}
\end{equation}
The proportionality constant $\lambda_{\mathcal{E}} (\rho,\sigma)$ measures how well the measurement is equipped to distinguish $\rho$
 from $\sigma$. This constant is positive for every state pair if and only if the ensemble implements a tomographically complete measurement and
has been the subject of considerable attention \cite{ambainis_wise_2007,Matthews_2009,Lancien_2013,kueng_2016_distinguishing}. Tight bounds have been derived for a variety of measurement procedures. It should not come as a surprise that these bounds can be converted into statements about the ensemble parameters \eqref{eq:tomo-sound-intro} that govern the runtime of Hamiltonian Updates.

\begin{lem} \label{lem:conversion}
Fix two states $\rho,\sigma$ and suppose that a measurement primitive $\mathcal{E}$ obeys Rel.~\eqref{eq:distinguishing}, as well as $\mathbb{E}_{U \sim \mathcal{E}} \|p_U (\rho) - p_U (\sigma)\|_{\ell_2}^2 \leq \frac{1}{D} \|\rho-\sigma\|_2^2$.
Then, the following choice of ensemble parameters satisfies Rel.~\eqref{eq:tomo-sound-intro}:
$\theta_{\mathcal{E}}(\rho,\sigma) = \tfrac{1}{2} \lambda_{\mathcal{E}}(\rho,\sigma)$ and $\tau_{\mathcal{E}}(\rho,\sigma) = \tfrac{1}{4} \lambda_{\mathcal{E}}(\rho,\sigma)^2$.
\end{lem}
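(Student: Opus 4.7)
The plan is to view Lemma~\ref{lem:conversion} as a probability-versus-expectation statement about the non-negative random variable $X_U = \|p_U(\rho)-p_U(\sigma)\|_{\ell_1}$, where $U \sim \mathcal{E}$. By assumption we already control its first moment, $\mathbb{E}[X_U] \geq \lambda_{\mathcal{E}}(\rho,\sigma)\|\rho-\sigma\|_2$, and we are told that its natural $\ell_2$ counterpart $\mathbb{E}[\|p_U(\rho)-p_U(\sigma)\|_{\ell_2}^2]$ is at most $\|\rho-\sigma\|_2^2/D$. The target conclusion is a lower bound on $\Pr[X_U \geq \tfrac{1}{2}\lambda_{\mathcal{E}}(\rho,\sigma)\|\rho-\sigma\|_2]$, i.e.\ a lower bound on the probability that $X_U$ is at least half of (a lower bound on) its mean. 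This is exactly the regime of the Paley--Zygmund inequality, so that is the tool I would use.

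First I would control $\mathbb{E}[X_U^2]$ by passing from $\ell_1$ to $\ell_2$ on the $D$-dimensional outcome vector. Cauchy--Schwarz gives $\|x\|_{\ell_1}^2 \leq D\,\|x\|_{\ell_2}^2$ for any $x \in \mathbb{R}^D$, so
\begin{equation*}
\mathbb{E}_{U\sim\mathcal{E}}\!\left[X_U^2\right] \;\leq\; D\;\mathbb{E}_{U\sim\mathcal{E}}\!\left[\|p_U(\rho)-p_U(\sigma)\|_{\ell_2}^2\right] \;\leq\; D \cdot \tfrac{1}{D}\|\rho-\sigma\|_2^2 \;=\; \|\rho-\sigma\|_2^2.
\end{equation*}
Combined with the first-moment hypothesis, this yields $(\mathbb{E}[X_U])^2/\mathbb{E}[X_U^2] \geq \lambda_{\mathcal{E}}(\rho,\sigma)^2$.

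Next I would apply the Paley--Zygmund inequality in the form $\Pr[X \geq \theta\,\mathbb{E}[X]] \geq (1-\theta)^2 (\mathbb{E}[X])^2/\mathbb{E}[X^2]$ for $\theta \in [0,1]$, with $\theta = 1/2$. This gives $\Pr[X_U \geq \tfrac{1}{2}\mathbb{E}[X_U]] \geq \tfrac{1}{4}\lambda_{\mathcal{E}}(\rho,\sigma)^2$. Since $\tfrac{1}{2}\mathbb{E}[X_U] \geq \tfrac{1}{2}\lambda_{\mathcal{E}}(\rho,\sigma)\|\rho-\sigma\|_2$, the event $\{X_U \geq \tfrac{1}{2}\mathbb{E}[X_U]\}$ is contained in $\{X_U \geq \tfrac{1}{2}\lambda_{\mathcal{E}}(\rho,\sigma)\|\rho-\sigma\|_2\}$, which is exactly the event in Eq.~\eqref{eq:tomo-sound-intro} with the proposed choice $\theta_{\mathcal{E}}(\rho,\sigma) = \tfrac{1}{2}\lambda_{\mathcal{E}}(\rho,\sigma)$. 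The lower bound on its probability then matches $\tau_{\mathcal{E}}(\rho,\sigma) = \tfrac{1}{4}\lambda_{\mathcal{E}}(\rho,\sigma)^2$, completing the argument.

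There is no real obstacle here: the only nontrivial step is recognizing that the Cauchy--Schwarz pass $\ell_1 \hookrightarrow \ell_2$ is precisely what converts the second hypothesis (an $\ell_2$ second-moment bound) into a bound on $\mathbb{E}[X_U^2]$ with no residual dimension factor, so that the Paley--Zygmund ratio reduces cleanly to $\lambda_{\mathcal{E}}(\rho,\sigma)^2$. The constants $1/2$ and $1/4$ in the statement are exactly the Paley--Zygmund constants for the truncation level $\theta = 1/2$; a different choice of $\theta$ would trade $\theta_{\mathcal{E}}$ against $\tau_{\mathcal{E}}$, reflecting the inherent trade-off flagged after Eq.~\eqref{eq:tomo-sound-intro}.
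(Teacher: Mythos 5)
Your proof is correct and is exactly the argument the paper intends: the paper only remarks that the lemma ``is an immediate consequence of the Paley--Zygmund inequality,'' and your write-up fills in precisely those details (Cauchy--Schwarz to convert the $\ell_2$ second-moment hypothesis into $\mathbb{E}\left[\|p_U(\rho)-p_U(\sigma)\|_{\ell_1}^2\right]\leq\|\rho-\sigma\|_2^2$, then Paley--Zygmund at truncation level $1/2$). No gaps.
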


The proof is an immediate consequence of the Paley-Zygmund inequality. The extra assumption $\mathbb{E}_{U \sim \mathcal{E}} \|p_U (\rho) - p_U (\sigma) \|_{\ell_2}^2 \leq \tfrac{1}{D} \|\rho - \sigma \|_2^2$ is a mild anti-concentration condition. Most reasonable measurement primitives have this feature. 
The following subsections discuss two examples, one non-example and a possible extension to $k$-local measurements.

\subsection{Haar random unitaries and approximate 4-designs}

Let us start with the most generic measurement primitive conceivable: each $U$ is a random unitary that is selected according to the unique unitarily invariant (Haar) measure on the full $D$-dimensional unitary group. 
Although impractical, this measurement model lends itself to a thorough mathematical analysis. Haar integration is a powerful technique that allows for computing (even) moments of the measurement outcome distribution -- regardless of the states $\rho$ and $\sigma$ in question. 
Ambainis and Emerson \cite{ambainis_wise_2007} used this feature to infer
\begin{align}
\mathbb{E}_{\text{Haar}} \left[ \|p_U (\rho) - p_U (\sigma) \|_{\ell_1} \right] \geq & \tfrac{1}{3}\|\rho-\sigma\|_2
\quad
\text{and} \nonumber \\
\mathbb{E}_{\text{Haar}} \left[\|p_U(\rho) - p_U (\sigma) \|_{\ell_2}^2\right] =& \tfrac{1}{D+1}\|\rho-\sigma \|_2^2, \label{eq:4-design}
\end{align}
see also \cite{Matthews_2009}.
Remarkably, the first relation follows from combining information about the second and fourth moment only \cite{Berger1997}, while the second relation \emph{is} the second moment. 
Thus, any measurement ensemble that reproduces the first four moments of the Haar random measurement primitive obeys the same relations.
Unitary ensembles with this property are known as (unitary) 4-designs \cite{Dankert2009,Gross2007} and have been identified as a versatile tool in quantum information. Applications range from partially de-randomizing quantum information protocols to the study quantum chaos and complexity. 
While exact 4-designs are notoriously difficult to construct, several approximate constructions are known \cite{Brand_o_2016,Onorati2017,Hunter-Jones2019,Haferkamp2020}. 
For instance, for $n$-qudit systems ($D=d^n$), 
local random circuits of size $T=\mathcal{O} \left( n^2 \right)$ approximate the first four moments of the Haar measure sufficiently accurately to ensure Rel.~\eqref{eq:4-design} \cite{Brand_o_2016}. Other, more recent results yield qualitatively similar results~\cite{Onorati2017,Hunter-Jones2019,Haferkamp2020}.
Lemma~\ref{lem:conversion} then allows us to convert this insight into bounds on the ensemble parameters \eqref{eq:tomo-sound-intro} associated with an (approximate) 4-design: $\theta_{\text{4-design}}=\frac{1}{6}$ and $\tau_{\text{4-design}}=\frac{1}{36}$.
Both ensemble parameters are constant and do not affect the scaling of Hamiltonian Updates significantly. The number of required measurement settings $M_{\text{4-design}}$ and state copies $N_{\text{4-design}} $ amount to
\begin{align*}
M_{\text{4-design}} = \mathcal{O} \left( r \log (D)/\epsilon^2 \right) \quad \text{and} \quad
\quad N_{\text{4-design}} = \mathcal{O} \left( D r^2 \log (D)/\epsilon^4 \right).
\end{align*}
These numbers saturate fundamental lower bounds up to $\log(D)$ and $1/\epsilon^2$. 
The fact that (approximate) 4-designs can be realized by local quantum circuits of size $\mathcal{O}(n^2)$ also has profound implications for storage and runtime. 
We refer to Table~\ref{tab:comparison} for an illustration and note that, to the best of our knowledge, we outperform all existing protocols in the postproceessing. Regarding storage, substantial savings can be achieved by not storing the unitary -- a dense $d^n \times d^n$ matrices -- themselves, but their circuit diagrams -- collections of $O(n^2)$ $4 \times 4$ matrices. As we can run our algorithm storing only the unitaries and measurement outcomes, this implies we can run our algorithm only requiring $\cO(Dr\log(D)\epsilon^{-2})$ memory.
Runtime savings hail from the insight that compact circuit diagram descriptions do imply a fast $\tcO(D)$ matrix-vector multiplication for the underlying unitaries. The canonical example is the fast Fourier transform, but the principle applies more broadly. This yields a total runtime of $\tcO( D^2r^{\frac{5}{2}}\epsilon^{-5})$. 
We refer to App.~\ref{sec:classicalpostprocessing} for details.

\begin{table*}
\begin{center}
{\scriptsize
\begin{tabular}{|c|c|c|c|c|c|}
\hline
   & \textbf{meas.\ primitive}  & \textbf{basis settings} & \textbf{state copies}  & \textbf{runtime} & \textbf{memory} \\
\hline
\hline
 CS \cite{Flammia2012}
 & Pauli observables
 & $D r$
 & $D^2 r^2 \epsilon^{-2}$ & $D^4$ & $D^3$ \\
 \hline
PLS \cite{guta_fast_2018}
 & Pauli bases & $D^{1.6}$ & $D^{1.6}r^2 
 \epsilon^{-2}$ & $D^3$ & $D^2$  \\
 \hline
 this work & $k$-local 4-design & $D^{8.33/k} r \epsilon^{-2}$ & $D^{1+12.5/k} r^2 \epsilon^{-4}$   & $D^{2+8.33/k}r^{5/2}\epsilon^{-5}$ & $D^{8.33/k+1}r\epsilon^{-2}$ \\
 \hline
\end{tabular}
}
\end{center}
\caption{\emph{Resource scaling for state tomography protocols based on local measurements (single copy):} Here, $D$ denotes the Hilbert space dimension, $r$ is the rank of the target state and $\epsilon$ is the desired target accuracy (in trace distance). We have suppressed constants, as well as logarithmic dependencies in $D$ and $r$.}
\label{tab:local}
\end{table*}

\subsection{Clifford unitaries}

Let us focus on quantum systems comprised of $n$-qubits, i.e.\ $D=2^n$. 
The Clifford group is the collection of all possible quantum circuits that can be generated by \textsc{CNOT}, Hadamard and $\pi/4$-phase gates only. It has an extensively rich and well-understood structure \cite{Gottesman1997} and it is widely believed that Clifford unitaries are easier to implement than general quantum circuits of size $\mathcal{O}(n^2)$ -- like approximate 4-designs.
Moreover, the stabilizer formalism allows for storing Clifford unitaries very efficiently; while the development of a fast matrix-vector multiply is also possible.
These desirable features motivate the adoption of a Clifford measurement primitive for quantum state tomography. Regarding the theoretical analysis of Hamiltonian updates (and quantum state distinguishability), the transition from approximate 4-designs to Clifford unitaries is not completely straightforward, however. 
The Clifford group does constitute a 3-design \cite{Zhu2017,KG2015,Webb2015}, but not a 4-design \cite{zhu2016clifford}.
While this feature implies $\mathbb{E}_{U \sim \mathrm{Clifford}}\|p_U (\rho)-p_U (\sigma) \|_{\ell_2}^2 =\tfrac{1}{D+1}\|\rho-\sigma\|_2^2$,
the essentially optimal distinguishability bound from \eqref{eq:4-design} does not apply in general. 
It has to be replaced by
\begin{equation*}
\mathbb{E}_{U \sim \mathrm{Clifford}} \left[ \|p_U (\rho) - p_U (\sigma) \|_{\ell_1}\right] \geq %
\frac{\|\rho-\sigma \|_2}{4 \sqrt{\mathrm{rank}(\rho)}}  \quad \text{for any state $\sigma$,}
\end{equation*}
see \cite{kueng_2016_distinguishing}.
Although weaker than its 4-design counterpart, this rank-dependent scaling is unavoidable and does affect the ensemble parameters: $\theta_{\mathrm{Clifford}}(\rho)=\frac{1}{8 \sqrt{r}}$ and $\tau_{\mathrm{Clifford}}(\rho)=\frac{1}{64 r^2}$, with $r=\mathrm{rank}(\rho)$.
These in turn control the worst-case performance of Hamiltonian Updates in terms of measurement settings and sample complexity:
\begin{align*}
M_{\mathrm{Clifford}}=& \mathcal{O} \left( r^3 \log (D)/\epsilon^2 \right) \quad \text{and} \\
N_{\mathrm{Clifford}}=& \mathcal{O} \left( D r^5 \log(D)/\epsilon^4 \right).
\end{align*}
Although worse than their 4-design counterpart, these assertions are still essentially optimal in the low-rank regime.
In particular, the required number of measurement settings is only logarithmic in the ambient dimension.
While this is a substantial improvement over existing results \cite{kueng2015}, the overall resource count is still considerably larger than the 4-design case. One way to overcome this discrepancy is to interleave random Clifford rotations with (few) single-qubit non Clifford.
This modification upgrades random Clifford circuits to an approximate 4-design \cite{Haferkamp2020}.

\subsection{Mutually unbiased bases (non-example)}

Two orthonormal bases $\left\{|b_i\rangle:i=1,\ldots,D \right\}$ and $\left\{|c_j \rangle:j=1,\ldots, D\right\}$ of $\mathbb{C}^D$ are \emph{mutually unbiased} if $\left| \langle b_i, c_j \rangle \right|^2 = \frac{1}{D}$ for all $i,j=1,\ldots,D$. Standard and Fourier basis are the prototypical example, but there are many others. At most $(D+1)$ (pairwise) mutually unbiased bases (MUBs) can exist in a given dimension $D$. If $D$ is a prime power, e.g. $d=2^n$ ($n$ qubits), such maximal sets of mutually unbiased bases can be constructed \cite{Wootters_1989}. 
Viewed as a measurement primitive, such a collection of $D+1$ basis measurements is well conditioned and does allow for almost sample-optimal state tomography via projected least squares \cite{guta_fast_2018}.
Nonetheless, Hamiltonian Updates can struggle considerably with such a measurement primitive. 
The reason is that certain state pairs are extremely difficult to distinguish with MUB measurements \cite{Matthews_2009}.

As a concrete example, suppose that the (unknown) target state is pure and diagonal in the first MUB, say $\rho =|b_1 \rangle \! \langle b_1|$. In the first step of Algorithm~\ref{alg:motivation}, we need to be able to distinguish $\rho$ from the initial guess $\sigma_0 =\frac{1}{D}\mathbb{I}$. Mutual unbiasedness implies that $D$ out of the $D+1$ basis measurements fail to achieve this goal: $\left[p_U (\rho)\right]_i = |\langle c_i |b_1 \rangle|^2=\frac{1}{d}=\left[p_U (\sigma_0) \right]_i$ for all $i=1,\ldots,D$ and any basis $\left\{|c_i \rangle \right\}$ that is unbiased with respect to $\left\{|b_i \rangle \right\}$. In turn, a randomly selected MUB will produce a false positive with probability $D/(D+1)$. 
Hence, $L=\mathcal{O}(D)$ repetitions (inner loop) are required to obtain actionable advice in the first step of the algorithm alone! 
This number is already comparable to the total number of MUB settings and provides sufficient data for performing full quantum state tomography  (e.g.\ via projected least squares).

\subsection{Local measurements}
\label{sub:local-measurements}

The measurement primitives discussed in the previous subsection have one thing in common: they require circuits of moderate size -- $\mathcal{O}(n^2)$ for $n$ qudits -- to implement. Such \emph{global} unitary circuits are challenging to implement on current NISQ architectures \cite{Preskill2018}. 
In contrast, \emph{local} measurement primitives -- like performing independent single-qudit rotations, followed by computational basis measurements -- can routinely be carried out in various experimental platforms.
We refer to Fig.~\ref{fig:illustration} (right) for a visual illustration. 
The confined structure of local measurement primitives facilitates experimental implementation, but also renders a thorough theoretical analysis challenging. To this date, very few rigorous results address this setting and the achieved bounds on measurement settings and sample complexity are considerably worse than their more generic (global) counterparts, see Table~\ref{tab:local}.

Hamiltonian Updates can readily applied to this setting. 
What is more, distinguishability properties of $k$-local measurement primitives have already been studied in the literature. The main result in Ref.~\cite{Lancien_2013} states that the proportionality constant decays exponentially in the number $n/k$ of local constituents:
\begin{align}
 \mathbb{E}_{U \sim \text{$k$-loc.~4-design}} \left[ \|p_U (\rho) - p_U (\sigma) \|_{\ell_1} \right] 
\gtrsim &\left(\tfrac{1}{\sqrt{18}} \right)^{\tfrac{n}{k}} \Big(\sum_{I \subset \left\{1,\ldots,n/k\right\}}\| \mathrm{tr}_I (\rho-\sigma) \|_2^2 \Big)^{\tfrac{1}{2}}.
\label{eq:cecilia}
\end{align}
Here, $\mathrm{tr}_I (\cdot)$ denotes the partial trace over a collection of local constituents. 
While an exponential decay in $n/k$ is unavoidable in general, it is not known if the factor $1/\sqrt{18}$ captures the true decay. The norm $\|\rho-\sigma \|_{2(n/k)}=\big(\sum_{I \subset \left\{1,\ldots,n/k\right\}}\|\mathrm{tr}_I \left( \rho-\sigma \right) \|_2^2\big)^{1/2}$ also occurs naturally in the study of entanglement \cite{Harrow_2013,Lancien_2013}. It is always lower-bounded by $\|\rho-\sigma\|_2$, but can be considerably larger if the states in question are not too entangled. 
Unfortunately, translating the worst case interpretation $\mathbb{E}_{\text{$k$-local}} \left[ \|p_U (\rho) - p_U (\sigma) \|_{\ell_1} \right] \gtrsim (1/\sqrt{18})^{n/k} \|\rho-\sigma\|_2$ of Eq.~\eqref{eq:cecilia} into ensemble parameters \eqref{eq:tomo-sound-intro} does not produce competitive results for Hamiltonian Updates straight away. For qubit systems, local measurements addressing (at least) $k=5$ and $k=8$ qubits simultaneously are necessary to break even with PLS and CS in terms of measurement settings. This discrepancy becomes even more pronounced when considering sample complexity: local blocks of size (at least) $k=12$ and $k=21$ are required to reproduce the dimensional scaling of CS and PLS. 

Empirical studies conveyed by Fig.~\ref{fig:random+epr} in the main text suggest
that this is not a fundamental shortcoming of the proposed method, but a consequence of combining nontrivial, but probably still far from optimal, worst-case bounds. Indeed, consider the task of distinguishing a pure state $\ket{\psi}$ from the maximally mixed state, the first step of the algorithm when the target state is pure ($\rho = |\psi \rangle \! \langle \psi|$). For instance, suppose that $\ket{\psi}\sim\text{unif}$ is Haar random.
It is then not difficult to show that 
\begin{equation*}
   \mathbb{E}_{\ket{\psi}\sim\text{unif} } \mathbb{E}_{U \sim \text{$k$-loc.~4-design}} \left[ \|p_U (\ketbra{\psi}) - p_U (\mathbb{I}/D) \|_{\ell_1} \right] =\Omega(1).
\end{equation*}
That is, local measurements are (on average) as good as global ones in distinguishing pure states from the maximally mixed state. This average distinguishability bound is exponentially better than the worst-case bound in~\eqref{eq:cecilia}.
Thus, one of the main open questions left by this work is how to further reduce the gap between theory (rigorous bounds on sample complexity and runtime) and practice (numerical simulations and tractable implementation in the lab). The framework presented here leaves room for such improvements.
This, for instance, could entail proving an improved constant in Eq.~\eqref{eq:cecilia}, or a more thorough understanding of the distinguishability norm $\|\cdot \|_{2(n/k)}$. Either could then be converted into rigorous assertions about state tomography with local, single-shot measurements. 

\section{Stability with respect to state preparation and measurement errors}\label{sec:robust}

Quantum state tomography is an interesting theoretical problem in its own right, but should ultimately serve a practical purpose: help practitioners to properly scale up, calibrate and tune the quantum devices of today's NISQ era \cite{Preskill2018}. These devices are typically noisy and a practical tomography procedure should be capable of tolerating errors in both state preparation and measurement.
Perhaps surprisingly, existing competitive techniques struggle with this pre-requisite. Projected least-squares \cite{guta_fast_2018} is stable with respect to state preparation errors -- like drifting sources -- but
it is not known how miscalibration errors in the measurement affect the reconstruction quality. 
Compressed sensing techniques \cite{Gross2010,liu2011universal,kueng_low_2017} are even more fragile. The nontrivial reconstruction algorithm -- as well as the theoretical proof techniques required to provide rigorous convergence guarantees -- seem to be ill-equipped to handle even small measurement errors, see e.g.\ \cite{Roth2018,Roth2020} for a discussion and partial progress. 
In contrast, approaches based on mirror descent \cite{Tsuda2005,Bubeck2015} -- like Hamiltonian Updates (Algorithm~\ref{alg:HUtomo_last}) -- are designed to tolerate small errors in each update. These can either stem from state preparation, calibration errors in the measurement or inaccurate executions in the classical postprocessing. The first two examples address the most prominent noise sources in actual experiments, while the latter will allow us to considerably improve runtime by carrying out expensive steps (most notably: matrix exponentiation) only approximately. 

In \textsc{Hamiltonian Updates} (Algorithm~\ref{alg:HUtomo_last}) measurement data, and by extension: errors, only affect the estimated outcome distribution of the target state. This, in turn, can affect (and, to some extent, corrupt) the update rule $H \mapsto H+\eta UPU^\dagger$ for the Hamiltonian. A large step size $\eta$ can amplify this effect, while a small step size diminishes it. 
This is already the key idea for establishing stability: choose the step size $\eta$ sufficiently small -- as we shall see $\eta = \epsilon/16$ suffices -- to mitigate the effects of noise and imperfect implementation. 

To demonstrate this, let us assume that there is a constant $\epsilon_{\textrm{SP}}$ such that we perform each measurement on a (possibly measurement setting dependent) state $\tilde{\rho}$ that satisfies:
\begin{align}\label{equ:noisestate}
    \|\rho-\tilde{\rho}\|_{\mathrm{tr}}\leq \epsilon_{\text{state}}.
\end{align}
Similarly, we assume that a given basis measurement is also only approximately accurate. More precisely, ideal measurement channel $\mathcal{M}(\rho)=\sum_{i=1}^D \langle i| U \rho U^\dagger |i \rangle |i \rangle \! \langle i|$ and actual measurement channel $\tilde{\mathcal{M}}_U (\rho) = \sum_i \mathrm{tr} \left( E_i \right) |i \rangle \! \langle i|$ should be close in a meaningful worst-case fashion (induced trace norm):
\begin{equation}
\| \tilde{\mathcal{M}}_U - \mathcal{M}_U \|_{\mathrm{tr} \to \mathrm{tr}} 
= \max_{\rho \text{ state}} \sum_i \left|   \langle i| U \rho U^\dagger |i \rangle - \mathrm{tr}(E_i \rho) \right| \leq \epsilon_{\text{measurement}}.
\label{equ:noisemeasurement}
\end{equation}
In this setting, it is of course not possible to obtain an estimate that is closer than $\epsilon_{\text{state}}+\epsilon_{\text{measurement}}$ in trace distance, as our measurements cannot distinguish states that are this close due to imperfect control of state preparation and measurements.
However, mild adjustments ensure that Algorithm~\ref{alg:HUtomo_last} still converges to the true target state, provided that these noise effects are not too large.

\begin{thm}[Robustness of algorithm]\label{thm:spamresil}
The assertions of Proposition~\ref{prop:convergence-accuracy} and, by extension, Theorem~\ref{thm:main-detail}, remain valid in the presence of bounded state preparation \eqref{equ:noisestate} and measurement \eqref{equ:noisemeasurement} errors, provided that the accuracy parameter $\epsilon$ in Algorithm~\ref{alg:HUtomo_last} obeys $\epsilon > \epsilon_{\text{state}}+\epsilon_{\text{measurement}}$ and the step size $\eta$ is adjusted to obey $2 \eta \leq \epsilon - \epsilon_{\text{state}}-\epsilon_{\text{measurement}}$.
\end{thm}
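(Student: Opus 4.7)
The plan is to rerun the relative-entropy potential-function analysis from Lemma~\ref{lem:update-rule} and Proposition~\ref{prop:convergence} with a single new ingredient: an $\ell_1$ bound between the outcome statistics the algorithm records and the ideal statistics of the true target $\rho$. Let $\hat q_i = \mathrm{tr}(E_i \tilde\rho)$ denote the empirical probability produced by the noisy preparation $\tilde\rho$ together with the noisy measurement $\tilde{\mathcal{M}}_U$, and $q_i = \langle i|U\rho U^\dagger |i\rangle$ the ideal value. Two triangle-inequality steps give
\begin{equation*}
\|\hat q - q\|_{\ell_1} \leq \|\tilde{\mathcal{M}}_U(\tilde\rho) - \mathcal{M}_U(\tilde\rho)\|_{\mathrm{tr}} + \|\mathcal{M}_U(\tilde\rho - \rho)\|_{\mathrm{tr}} \leq \epsilon_{\text{measurement}} + \epsilon_{\text{state}} =: \epsilon_{\text{SPAM}},
\end{equation*}
where the first inequality uses Eq.~\eqref{equ:noisemeasurement} and the second uses monotonicity of the trace norm under the measurement channel together with Eq.~\eqref{equ:noisestate}.

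The second step is to upgrade the progress inequality of Lemma~\ref{lem:update-rule}. The projector $P = \sum_{p_i > \hat q_i}|i\rangle\langle i|$ is now built from noisy counts, but splitting $\sum_{p_i > \hat q_i}(p_i - q_i) = \sum_{p_i > \hat q_i}(p_i - \hat q_i) + \sum_{p_i > \hat q_i}(\hat q_i - q_i)$ and bounding the second sum by $\|\hat q - q\|_{\ell_1} \leq \epsilon_{\text{SPAM}}$ yields
\begin{equation*}
\mathrm{tr}\bigl(U^\dagger P U(\sigma_t - \rho)\bigr) \geq \tfrac{1}{2}\|p - \hat q\|_{\ell_1} - \epsilon_{\text{SPAM}},
\end{equation*}
which stays bounded below by a positive multiple of $\epsilon - \epsilon_{\text{SPAM}}$ whenever the algorithm accepts an update ($\|p - \hat q\|_{\ell_1} \geq \epsilon$) and the hypothesis $\epsilon > \epsilon_{\text{SPAM}}$ holds. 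Inserting this into Lemma~\ref{lem:update-rule} and imposing the step-size prescription $2\eta \leq \epsilon - \epsilon_{\text{SPAM}}$ keeps the right-hand side strictly negative, so each accepted update decreases $S(\rho \| \sigma_t)$ by an amount proportional to $(\epsilon - \epsilon_{\text{SPAM}})^2$. Telescoping these decrements against the initialization bound $S(\rho\|\sigma_0) \leq \log D$ caps the total number of updates at $T = \mathcal{O}\bigl(\log(D)/(\epsilon - \epsilon_{\text{SPAM}})^2\bigr)$, reproducing Proposition~\ref{prop:convergence} with $\epsilon$ replaced by the effective gap $\epsilon - \epsilon_{\text{SPAM}}$.

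The third step re-examines the random-basis control loop. A stopping attempt declares convergence only if every one of $L$ random bases records $\|p - \hat q\|_{\ell_1} < \epsilon$, which by the first-step bound forces the ideal discrepancy $\|p_U(\sigma_t) - p_U(\rho)\|_{\ell_1} < \epsilon + \epsilon_{\text{SPAM}}$ on each such basis. Plugging this slightly inflated threshold into the anti-concentration relation Eq.~\eqref{eq:tomo-sound-intro} shows that an iterate with $\|\rho - \sigma_t\|_2 \geq (\epsilon + \epsilon_{\text{SPAM}})/\theta_{\mathcal{E}}(\rho) = \mathcal{O}(\epsilon/\theta_{\mathcal{E}}(\rho))$ still triggers an update on each random basis with probability at least $\tau_{\mathcal{E}}(\rho)$. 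A union bound over the at most $T$ stopping attempts, combined with a control loop of length $L = \mathcal{O}\bigl(\log(T/\delta)/\tau_{\mathcal{E}}(\rho)\bigr)$, suppresses all false positives with probability $1 - \delta$ and returns $\|\sigma_\star - \rho\|_2 \leq \mathcal{O}(\epsilon/\theta_{\mathcal{E}}(\rho))$. The trace-distance form of Theorem~\ref{thm:main-detail} then follows from the (effective-)rank conversion rule exactly as in the noiseless analysis.

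The main obstacle is that noise enters the algorithm in two places at once: as a bias in the descent direction $P$ and as slack in the stopping check. The step-size prescription $2\eta \leq \epsilon - \epsilon_{\text{SPAM}}$ is precisely tuned to dominate the projector bias in Lemma~\ref{lem:update-rule}, while the hypothesis $\epsilon > \epsilon_{\text{SPAM}}$ is precisely what keeps the stopping check informative. Once both contributions are absorbed, the mirror-descent potential argument propagates through verbatim, which is the theoretical counterpart of the SPAM-resilience already visible in Fig.~\ref{fig:noise}; an entirely analogous argument, with $\epsilon_{\text{measurement}}$ replaced by a DKW-type estimate $\mathcal{O}\bigl(\sqrt{D/N_{\text{single basis}}}\bigr)$, absorbs finite-sample fluctuations into the same step-size bookkeeping.
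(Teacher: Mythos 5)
Your proposal is correct and follows essentially the same route as the paper: perturb the key progress inequality $\mathrm{tr}\bigl(U^\dagger P U(\sigma_t-\rho)\bigr)\geq\tfrac12\|p-q\|_{\ell_1}$ by an additive SPAM term and then rerun the potential/control-loop argument verbatim; the only organizational difference is that you prove a single $\ell_1$ bound $\|\hat q-q\|_{\ell_1}\leq\epsilon_{\mathrm{state}}+\epsilon_{\mathrm{measurement}}$ on the recorded target statistics up front, whereas the paper splits the perturbation into a Helstrom-type step for the state error and a channel-distance step for the measurement error. One quantitative remark: bounding $\bigl|\sum_{p_i>\hat q_i}(\hat q_i-q_i)\bigr|$ by the full $\ell_1$ norm loses a factor of two --- since $\hat q$ and $q$ are both probability vectors, any subset sum of their differences is at most $\tfrac12\|\hat q-q\|_{\ell_1}$ --- so as written your progress bound $\tfrac12\epsilon-\epsilon_{\mathrm{SPAM}}$ only certifies updates when $\epsilon>2\epsilon_{\mathrm{SPAM}}$, slightly stronger than the theorem's hypothesis $\epsilon>\epsilon_{\mathrm{SPAM}}$; restoring the $\tfrac12$ recovers the paper's bound $\tfrac12\bigl(\|p-\hat q\|_{\ell_1}-\epsilon_{\mathrm{SPAM}}\bigr)$ exactly.
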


\begin{proof}
The driving force behind each update in Algorithm~\ref{alg:HUtomo_last} is a projector $UPU^\dagger$ that discriminates the ideal outcome distribution $\left[q_i\right] = \langle i| U \rho U^\dagger |i \rangle$ of the target state from the computed outcome distribution $\left[p_i \right] = \langle i| U \sigma U^\dagger |i \rangle$ of the current iterate $\sigma_t$:
\begin{equation}
\mathrm{tr} \left( U^\dagger PU (\rho - \sigma_t) \right) = \sum_{p_i > q_i} \left( \langle i| U \rho U^\dagger |i \rangle - \langle i| U \sigma_t U^\dagger |i \rangle \right) = \tfrac{1}{2}\sum_{i=1}^D \left|p_i - q_i \right|
\label{eq:ideal-conversion}
\end{equation}
The larger this discrepancy, the more progress the update can achieve. To ensure constant step-wise progress, we require $\sum_{i} |p_i - q_i| \geq \epsilon$ before making an update.
Errors in state preparation -- i.e. preparing $\tilde{\rho}$ instead of $\rho$  -- and subsequent measurement -- i.e.\  estimating $\tilde{p}_i = \mathrm{tr}(E_i \tilde{\rho})$ instead of $p_i = \langle i| U \tilde{\rho} U^\dagger |i \rangle$ -- 
can, in principle, thwart Relation~\ref{eq:ideal-conversion}.
On the other hand, it should not come as a surprise that Rel.~\eqref{eq:ideal-conversion} is somewhat stable with respect to such perturbations.
More precisely, suppose that the prepared state $\tilde{\rho}$ is sufficiently close to the true target, i.e. $\|\rho-\tilde{\rho}\|_{\mathrm{tr}} \leq \epsilon_{\text{state}}$, and the actual measurement procedure does not deviate too much from the ideal one: $\max_{\rho \text{ state}} \sum_i\left|\langle i| U \rho U^\dagger |i \rangle - \mathrm{tr}(E_i \rho ) \right| \leq \epsilon_{\text{measurement}}$.
Then, the projector $\tilde{P}=\sum_{\tilde{p}_i >q_i} |i \rangle \! \langle i|$ constructed from such inaccurate data obeys
\begin{equation}
\mathrm{tr} \left( U^\dagger \tilde{P} U (\rho-\sigma_t) \right) \geq \tfrac{1}{2}\big(\sum_{i=1}^D \left| \tilde{p}_i - q_i \right| - \epsilon_{\text{state}}-\epsilon_{\text{measurement}}\big). \label{eq:noisy-conversion}
\end{equation}
Thus, conditioning on $\sum_i |\tilde{p}_i - q_i| >\epsilon$ is still enough to make constant progress provided that $\epsilon > \epsilon_{\text{state}}+\epsilon_{\text{measurement}}$ and the step size $\eta$ is adjusted appropriately. The proof of Proposition~\ref{prop:convergence} requires 
\begin{equation*}
2\eta < \tfrac{1}{2}\left(\epsilon - \epsilon_{\text{state}}-\epsilon_{\text{measurement}}\right).
\end{equation*}
The instantiation of Theorem~\ref{thm:spamresil} lists sufficient conditions to ensure this relation.
Thus it suffices to establish Rel.~\eqref{eq:noisy-conversion}. 
Start by replacing $\rho$ with $\tilde{\rho}$ at the cost of subtracting $\tfrac{1}{2} \|\tilde{\rho}-\rho \|_{\mathrm{tr}}\leq \tfrac{1}{2}\epsilon_{\mathrm{state}}$:
\begin{align*}
\mathrm{tr} \left( U^\dagger \tilde{P} U (\rho-\sigma_t) \right)
\geq & \mathrm{tr} \left( U^\dagger \tilde{P} U (\tilde{\rho}-\sigma_t) \right) - \tfrac{1}{2} \|\tilde{\rho}-\rho \|_{\mathrm{tr}}
\geq \mathrm{tr} \left( U^\dagger \tilde{P} U (\tilde{\rho}-\sigma_t) \right) - \tfrac{1}{2}\epsilon_{\text{state}}.
\end{align*}
Here, we have once more used Helstrom's theorem.
Next, we use the assumption that actual and ideal measurement differ by at most $\epsilon_{\text{measurement}}$ to complete the conversion:
\begin{align*}
\mathrm{tr} \left( U \tilde{P} U^\dagger (\tilde{\rho}-\sigma_t) \right)
=& \sum_{\tilde{p}_i >q_i} \left( \langle i| U \tilde{\rho} U^\dagger |i \rangle - \langle i| U \sigma_t U^\dagger |i \rangle \right) 
\\&\geq  \sum_{\tilde{p}_i >q_i} \left( \mathrm{tr}(E_i \tilde{\rho}) - \langle i| U \sigma_t U^\dagger |i \rangle \right) - \sum_{\tilde{p}_i>q_i} \left( \mathrm{tr}(E_i \tilde{\rho}) - \langle i| U \tilde{\rho} U^\dagger |i \rangle \right) \\
&\geq  \sum_{\tilde{p}_i >q_i} ( \tilde{p}_i - q_i ) - \tfrac{1}{2}\max_{\tilde{\rho}\text{ state}}\sum_i \left| \mathrm{tr}(E_i \tilde{\rho})-\langle i| U \tilde{\rho} U^\dagger |i \rangle \right| 
\\&\geq \tfrac{1}{2} \sum_i |\tilde{p}_i - q_i | - \tfrac{1}{2}\epsilon_{\text{measurement}}.
\end{align*}
Here, we have used that a sub-selected sum of differences between two probability distributions $\left[p_i \right]$ and $\left[q_i \right]$ obeys $\sum_{i \in I}(p_i -q_i) \leq \tfrac{1}{2}\sum_i |p_i-q_i|$ with equality if and only if $I = \left\{i:p_i >q_i \right\}$.
\end{proof}

Finally, we point out that a similar argument implies that the update rule and, by extension, the entire algorithm still performs correctly in the presence of statistical fluctuations. 
It suffices to estimate the outcome distribution $\left[q_i \right] = \langle i| U \rho U^\dagger|i \rangle$ up to accuracy $\epsilon_{\text{statistical}} < \epsilon - \epsilon_{\text{state}}-\epsilon_{\text{measurement}}$.

\section{Classical postprocessing complexity}\label{sec:classicalpostprocessing}
Let us analyse the complexity of implementing the classical processing required for Algorithm~\ref{alg:HUtomo_last}. We will phrase all the results in terms of the number of required iterations, error parameter $\epsilon>0$ and the parameters $\theta_{\mathcal{E}},\tau_{\mathcal{E}}$ defined in Eq.~\eqref{equ:measurement_ensemble_par} for the underlying measurement ensembles. We refer the reader to Table~\ref{tab:comparison} for the resulting complexity for different measurement ensembles.

We will start with a naive implementation to highlight the required steps.
For each iteration, we need to compute the updated Gibbs state $\sigma_t$ given access to a Hamiltonian. 
The obvious way of doing this is by diagonalizing $H$, computing $\textrm{exp}(-H)$ and $\tr{\textrm{exp}(-H)}$. Diagonalizing takes time $\cO(D^3)$ and the two other tasks $\cO(D)$.
Given the current Gibbs states $\sigma_t$, we need to compute the statistics with respect to the new measurements in the different bases. Given the unitaries $U_i$, this can be done by comparing the diagonals of $U_i^\dagger \sigma U_i$ with $p_i$. Computing each of the matrices $U_i^\dagger \sigma U_i$ takes time $\cO(D^3)$ and comparing takes $\cO(D)$. We conclude that each iteration can be done in time $\cO(D^3 L)$, where $L$ is maximum number of new measurement settings per iteration. As we have at most $\cO(\log(D)\epsilon^{-2})$ iterations, the total runtime is at most $\cO(D^3 L\log(n)\epsilon^{-2})$. 

Although this runtime is already comparable or even faster than state-of-the-art~\cite{Gross2010,guta_fast_2018}, we will now discuss how to further exploit the structure and freedom of the algorithm to obtain a $\cO(D^2)$ runtime.
The following property will be key for this:
\begin{defi}[Fast matrix-vector multiplication property]
A measurement ensemble $\mathcal{E}$ over the unitary group of dimension $D$ is said to have the fast matrix-vector multiplication property (FMVM) if for all $U$ in of the ensemble we have that matrix-vector multiplication by $U$ and $U^\dagger$ can be done in $\tcO(D)$ time.
\end{defi}

As we will show later, many different choices of measurement ensembles enjoy this property. Examples include random Cliffords and various approximate $t$ design constructions in the literature. We refer the reader to Appendix~\ref{sec:fastvecmulti} for a proof of this fact and more details on this.

We will now see how to exploit the fact that we can perform vector-matrix multiplication faster to speedup the implementation of our algorithm. The next lemma will be crucial for that:

\begin{lem}\label{lem:errortruncation}
Fix a Hermitian $D \times D$  matrix $H$, an accuracy $\epsilon$ and let $l$ be the smallest even number that obeys $(l+1)(\log (l+1)-1) \geq 2\| H \|+ \log (D)+ \log (1/\epsilon)$. Then, the truncated matrix exponential $T_l = \sum_{k=0}^l \tfrac{1}{k!} (-H)^k$ is guaranteed to obey
\begin{equation*}
\left\| \frac{\exp (-H)}{\tr{\exp (-H)}} - \frac{T_l}{ \tr{T_l}} \right\|_{tr} \leq \epsilon.
\end{equation*}
Moreover, $\frac{T_l}{ \tr{T_l}}$ is a quantum state.
\end{lem}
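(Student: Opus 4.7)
The plan is to prove the lemma in three stages: control the Taylor truncation error in operator norm, use this to show $T_l$ is strictly positive (so that $T_l/\tr{T_l}$ is a valid state), and convert the operator-norm bound into a trace-distance bound for the normalized matrices.

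For the first stage, I would estimate the tail of the Taylor series as
$$\|\exp(-H) - T_l\|_{op} \leq \sum_{k=l+1}^\infty \frac{\|H\|^k}{k!} \leq \frac{\|H\|^{l+1}}{(l+1)!}\,e^{\|H\|},$$
and combine with Stirling's bound $(l+1)! \geq ((l+1)/e)^{l+1}$ to obtain
$$\|\exp(-H) - T_l\|_{op} \leq e^{\|H\|}\left(\frac{e\|H\|}{l+1}\right)^{l+1}.$$
The hypothesis rewrites as $((l+1)/e)^{l+1} \geq e^{2\|H\|}D/\epsilon$, so the truncation error is at most $\epsilon\,e^{-\|H\|}/D$ up to absolute constants.

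For the second stage, since $\lambda_{\min}(\exp(-H)) \geq e^{-\|H\|}$, Weyl's inequality gives $\lambda_{\min}(T_l) \geq e^{-\|H\|} - \|\exp(-H) - T_l\|_{op} > 0$ as soon as the truncation error is strictly below $e^{-\|H\|}$. The first-stage bound delivers this with considerable slack, so $T_l \succ 0$ and $T_l/\tr{T_l}$ is Hermitian, positive-definite, and trace one.

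For the third stage, I would use the algebraic identity
$$\frac{\exp(-H)}{\tr{\exp(-H)}} - \frac{T_l}{\tr{T_l}} = \frac{\exp(-H) - T_l}{\tr{\exp(-H)}} - \frac{T_l}{\tr{T_l}}\cdot\frac{\tr{\exp(-H)-T_l}}{\tr{\exp(-H)}},$$
which after the triangle inequality, $|\tr{X}|\leq \|X\|_{tr}$, and $\|X\|_{tr}\leq D\|X\|_{op}$ gives
$$\left\|\frac{\exp(-H)}{\tr{\exp(-H)}} - \frac{T_l}{\tr{T_l}}\right\|_{tr} \leq \frac{2D\|\exp(-H)-T_l\|_{op}}{\tr{\exp(-H)}} \leq 2e^{\|H\|}\|\exp(-H) - T_l\|_{op},$$
where the final step uses $\tr{\exp(-H)}\geq De^{-\|H\|}$. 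Inserting the first-stage bound closes the argument.

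The main obstacle I anticipate is chasing absolute constants carefully so that the hypothesis lands exactly on $\epsilon$: the factor $e^{2\|H\|}$ in the hypothesis matches the product of the $e^{\|H\|}$ from the Taylor remainder with the $e^{\|H\|}$ from the lower bound on $\tr{\exp(-H)}$, while the $\log D$ term compensates the factor of $D$ incurred when converting operator norm to trace norm. Any loose multiplicative constants can be absorbed by tightening $\epsilon$ by a universal factor. The requirement that $l$ be even plays no role in the error analysis itself; it is presumably a convenience for downstream use (for instance, pairing up terms in the fast matrix-vector multiplication scheme of Appendix~\ref{sec:fastvecmulti}).
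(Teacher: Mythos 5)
The paper does not actually prove this lemma---it defers entirely to \cite[Lemma 3.2]{Brandao2019sdp}---so your attempt is the only self-contained argument on the table, and its overall architecture (Taylor tail bound, positivity of $T_l$, then the standard comparison of normalized operators) is the natural one; your third stage in particular is fine once $T_l\succeq 0$ is in hand. The problem is a genuine gap in your first stage. After Stirling you have $\|\exp(-H)-T_l\|_{op}\le e^{\|H\|}\bigl(\tfrac{e\|H\|}{l+1}\bigr)^{l+1}=e^{\|H\|}\,\|H\|^{l+1}\bigl(\tfrac{e}{l+1}\bigr)^{l+1}$, and the hypothesis $((l+1)/e)^{l+1}\ge e^{2\|H\|}D/\epsilon$ only controls the last factor. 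What remains is $\|H\|^{l+1}\,\epsilon\, e^{-\|H\|}/D$, and the factor $\|H\|^{l+1}$ is not an absolute constant: it is $\le 1$ only when $\|H\|\le 1$. This cannot be repaired by tightening constants, because for $\|H\|=h\gg 1$ the stated condition is satisfied by $l+1\sim 2h/\log h\ll h$, i.e.\ the truncation stops far below the index $k\approx h$ where the series for $e^{h}$ concentrates. Concretely, for $D=2$, $H=\mathrm{diag}(c,-c)$ with $c=1000$ and $\epsilon=10^{-2}$, the smallest admissible even $l$ is about $402$, and one checks that $T_l(\pm c)$ are then comparable in size, so $T_l/\mathrm{tr}(T_l)$ puts weight $\approx 1/4$ on the eigenvector where the true Gibbs state puts weight $\approx e^{-2000}$. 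So the conclusion genuinely fails under the hypothesis as written unless one additionally enforces something like $l+1\gtrsim e\|H\|$ (equivalently, replaces $\log(l+1)-1$ by $\log\tfrac{l+1}{e\|H\|}$ in the condition); this matters in the paper's own application, where $\|H\|=\mathcal{O}(\log(D)\epsilon^{-1})\gg 1$. Your error analysis inherits this: the step ``so the truncation error is at most $\epsilon e^{-\|H\|}/D$ up to absolute constants'' is not justified.

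A secondary but related point: your remark that the evenness of $l$ ``plays no role'' is mistaken. For even $l$ the scalar polynomial $T_l(x)=\sum_{k=0}^{l}(-x)^k/k!$ is strictly positive on all of $\mathbb{R}$ (at any critical point $T_l'(x)=0$ one has $T_l(x)=x^l/l!\ge 0$, and $T_l(0)=1$), so $T_l(H)\succ 0$ unconditionally by the spectral calculus. That is the robust route to the ``moreover'' clause; your Weyl-inequality argument for positivity leans on the stage-one bound, which, as above, is exactly the step that does not go through.
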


\begin{proof}
We refer to~\cite[Lemma 3.2]{Brandao2019sdp} for a proof.
\end{proof}

As we saw before in Theorem~\ref{thm:spamresil}, it suffices to obtain $\epsilon/8$ approximations in trace distance to $\sigma_t$ at each iteration to run our algorithm. Thus, the lemma above allows us to work with the truncated Taylor series instead of the actual Gibbs state, which leads to significant speedups.

\begin{lem}\label{lem:computemi}
Let $\sigma_H=\textrm{exp}(-H)/\tr{\textrm{exp}(-H)}$ be the Gibbs state of one of the iterations of Algorithm~\ref{alg:HUtomo_last} and suppose that the measurement ensemble $\mathcal{E}$ has the FMVM property. 
Then we can compute $M_{U_i}(\sigma)$ up to an error $\cO(\epsilon)$ in total variation distance in time $\tcO(D^2m\epsilon^{-1})$.
\end{lem}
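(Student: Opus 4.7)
The plan is to replace the Gibbs state $\sigma_H = \exp(-H)/\tr{\exp(-H)}$ by the truncated Taylor series $T_l/\tr{T_l}$ supplied by Lemma~\ref{lem:errortruncation}, and then compute the diagonal of $U_i (T_l/\tr{T_l}) U_i^\dagger$ in the computational basis entry by entry. Because measurement channels contract trace distance, a trace-distance error of $\cO(\epsilon)$ between $\sigma_H$ and its surrogate translates directly into an $\cO(\epsilon)$ error in total variation for the output distribution, which is what we need. Lemma~\ref{lem:errortruncation} only demands $l = \tcO(\|H\| + \log(D/\epsilon))$, so the task reduces to (i) bounding $\|H\|$ sharply enough and (ii) evaluating a length-$l$ polynomial in $H$ against carefully chosen vectors.

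The main technical step, and the one I expect to be the real obstacle, is showing $\|H\| = \tcO(\epsilon^{-1})$ rather than the naive bound $\|H\| \leq \sum_s \eta_s \leq m/4$ coming from the triangle inequality. I would extract this from the mirror-descent progress guarantee: Lemma~\ref{lem:update-rule} combined with the algorithm's choice $\eta_s = \|p-q\|_{\ell_1}/8 \geq \epsilon/8$ at every fired update forces a relative-entropy gain of at least $2\eta_s^2$ per step, so telescoping against the initial potential $S(\rho\|\sigma_0) \leq \log D$ from \eqref{eq:initialization} yields $\sum_s \eta_s^2 \leq \log(D)/2$. Cauchy--Schwarz together with the iteration bound $m \leq T = \cO(\log(D)/\epsilon^2)$ from Proposition~\ref{prop:convergence} then gives
\begin{equation*}
\|H\|\;\leq\;\sum_{s=1}^m \eta_s \;\leq\; \sqrt{m}\,\Big(\sum_{s=1}^m \eta_s^2\Big)^{1/2}\;=\;\tcO(\epsilon^{-1}),
\end{equation*}
so that $l = \tcO(\epsilon^{-1})$. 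Without this quadratic tightening the final runtime would scale like $m^2 D^2$ instead of the advertised $m D^2 \epsilon^{-1}$.

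With $l$ under control, the remainder is a routine linear-algebra exercise enabled by the FMVM assumption. For each computational basis index $j\in\{1,\ldots,D\}$, I would form $|w_j\rangle = U_i^\dagger|j\rangle$ in $\tcO(D)$ time and observe that the $j$-th outcome probability equals $\langle j|U_i T_l U_i^\dagger|j\rangle/\tr{T_l} = \langle w_j|T_l|w_j\rangle/\tr{T_l}$. I then evaluate $T_l|w_j\rangle$ by a Horner-style iteration that maintains the running vector $|u_k\rangle = (-H)^k|w_j\rangle/k!$ and accumulates the partial sum $\sum_{k\leq l}\langle w_j|u_k\rangle$. Since $H = \sum_{s=1}^m \eta_s U_s^\dagger P_s U_s$, applying $H$ to a vector costs $\tcO(mD)$ (each $U_s$ and $U_s^\dagger$ acts in $\tcO(D)$ by FMVM; the diagonal projectors $P_s$ in $\cO(D)$), so each $j$ requires $\tcO(lmD)$ operations. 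Summing over all $D$ indices yields the advertised $\tcO(lmD^2) = \tcO(m D^2 \epsilon^{-1})$ total runtime, and the normalizer $\tr{T_l} = \sum_j \langle w_j|T_l|w_j\rangle$ is the sum of the unnormalized diagonals we already computed, so it incurs no extra cost.
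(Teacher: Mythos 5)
Your proposal is correct and follows the same overall route as the paper: truncate the matrix exponential via Lemma~\ref{lem:errortruncation}, evaluate the diagonal of the rotated truncated series entry by entry using $\tcO(mD)$-cost applications of $H$ under the FMVM assumption, and convert the $\cO(\epsilon)$ trace-distance error into an $\cO(\epsilon)$ total-variation error by contractivity of the measurement channel. The one place where you genuinely diverge is the bound $\|H\|=\tcO(\epsilon^{-1})$. The paper gets this in one line by observing that each update adds a term of operator norm $\cO(\epsilon)$ (this is literally true for the constant step size $\eta=\epsilon/8$ used in Proposition~\ref{prop:convergence} and Theorem~\ref{thm:main-detail}), so $\|H\|\leq m\cdot\cO(\epsilon)=\cO(\log(D)\epsilon^{-1})$. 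You instead run the mirror-descent potential argument to get $\sum_s\eta_s^2\leq\log(D)/2$ and then apply Cauchy--Schwarz. Both give the same $\tcO(\epsilon^{-1})$ bound; yours is slightly more work but has the advantage of covering the adaptive step size $\eta_s=\tfrac{1}{8}\|p-q\|_{\ell_1}$ as actually written in Algorithm~\ref{alg:HUtomo_last}, where individual increments can be as large as $1/4$ and the naive triangle-inequality bound would indeed cost an extra factor of $\epsilon^{-1}$ in the runtime. So your instinct that this is the step requiring care is right; the paper sidesteps it by fixing the step size at $\epsilon/8$ in its formal analysis.
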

\begin{proof}
First note that the Hamiltonian $H$ has the form:
\begin{align*}
    H=\sum\limits_{i=1}^m U_iD_iU_i^\dagger,
\end{align*}
where $U_i$ were drawn from $\mathcal{E}$ and $D_i$ is a diagonal matrix. Now, note that at each iteration of the algorithm~\ref{alg:HUtomo_last} we increase the norm of the Hamiltonian by at most $\cO(\epsilon)$, as we add a term with operator norm $\cO(\epsilon)$. As there are at most $\cO(\log(D)\epsilon^{-2})$ iterations, we see that:
\begin{align*}
    \|H\|=\cO(\log(D)\epsilon^{-1}).
\end{align*}
Lemma~\ref{lem:errortruncation} implies that picking $l=\cO(\log(D)\epsilon^{-1})$ is enough to ensure that $T_l/\tr{T_l}$ will be $\epsilon$ close in trace distance to $\sigma$, where again:
\begin{align*}
   T_l= \sum_{k=0}^l \tfrac{1}{k!} (-H)^k.
\end{align*}
Let us now discuss how to compute $\tr{T_l}$. This is, of course, equivalent to computing $\bra{i}T_l\ket{i}$ for all different computational basis elements.
As we assumed that we have the FMVM property, it follows that we can compute $U_i D_i U_i^\dagger\ket{i}$ in time $\tcO(D)$, as $D_i$ is diagonal and, thus, we can perform matrix vector multiplication in time $\tcO(D)$ for $D_i$ and $U_i,U_i^\dagger$. This implies that we can compute $H\ket{j}$ in time $\tcO(Dm)$ by computing each term individually and summing up the corresponding vectors. Moreover, we can apply the same procedure to the resulting vector $H\ket{j}$ and compute $H^2\ket{j}$ in the same time. Iterating this argument, we see that we can compute $H^k\ket{j}$ in time $\tcO(kDm)$. Thus, we can compute $\bra{i}T_l\ket{j}$ in time $\tcO(Dml\log(D))$ and compute the trace in time $\tcO(D^2ml)$.
Furthermore, $\tr{U_i\ketbra{j}U_i^\dagger T_l}$ can be computed in exactly the same way as we computed the trace, but now note that the starting vector is $U_i\ket{j}$. We conclude that it takes time:
\begin{align*}
    \cO(D^2ml\log(D))=\tcO(D^2m\epsilon^{-1})
\end{align*}
to compute $M_{U_i}(\frac{T_l}{\tr{T_l}})$. As $\frac{T_l}{\tr{T_l}}$ is $\epsilon$ close in trace distance to $\sigma_t$, we have: 
\begin{align*}
    \left\|M_{U_i}\lb \sigma-\frac{T_l}{\tr{T_l}}\rb\right\|_{\ell_1}\leq\left\|\sigma-\frac{T_l}{\tr{T_l}}\right\|_{tr}\leq\epsilon,
\end{align*}
which yields the claim.
\end{proof}
Let us now discuss the complexity of outputting a Hamiltonian that describes a Gibbs state that is $\epsilon$ close to the target state in trace distance.
\begin{cor}[Complexity of classical postprocessing]\label{cor:timecomplexity}
Let $\rho\in \M_n$ be a state of rank at most $r$.
Then Algorithm~\ref{alg:HUtomo_last} can be run with a FMVM ensemble $\mathcal{E}$ and the same parameters and recovery guarantees as in Theorem~\ref{thm:main-informal} in time at most:
\begin{align*}
    \tcO(D^2r^{\frac{5}{2}}\tau_{\mathcal{E}}(\rho)^{-1}\theta_{\mathcal{E}}(\rho)^{-5}\log(\delta^{-1})\epsilon^{-5}).
\end{align*}

\end{cor}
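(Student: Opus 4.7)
The plan is to combine three ingredients: the iteration count prescribed by Theorem~\ref{thm:main-detail}, the per-iteration cost of computing outcome statistics from Lemma~\ref{lem:computemi}, and the robustness guarantee of Theorem~\ref{thm:spamresil} that licenses working with an $\mathcal{O}(\varepsilon)$-approximation to the Gibbs state and its outcome statistics in place of the exact objects.

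First, I would instantiate Theorem~\ref{thm:main-detail} with internal accuracy $\varepsilon=\theta_{\mathcal{E}}(\rho)\epsilon/\sqrt{r}$, so that the Frobenius-norm guarantee translates into $\epsilon$-accuracy in trace distance via the conversion rule for rank-$r$ states. This prescribes a worst-case update budget $T=\tilde{\mathcal{O}}(r/(\theta_{\mathcal{E}}(\rho)^2\epsilon^2))$ and a control-loop size $L=\tilde{\mathcal{O}}(\log(1/\delta)/\tau_{\mathcal{E}}(\rho))$; the total number of times that outcome statistics $M_{U_i}(\sigma_t)$ must be recomputed throughout the run is therefore at most $\mathcal{O}(TL)$, counting both the checks inside the inner loop (one per update on the refreshed $\sigma_{t+1}$) and the $L$ convergence probes that follow a successful basis match.

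Next, I would bound the cost of each such recomputation. After $t$ updates the Hamiltonian takes the form $H_t=\sum_{i=1}^m U_iD_iU_i^\dagger$ with $m\le T$, because a genuinely new term is appended only on a successful update and the last-step-recycling modification simply merges subsequent updates that share a basis into the existing $D_i$. Lemma~\ref{lem:computemi} together with the FMVM assumption then delivers outcome statistics accurate to $\mathcal{O}(\varepsilon)$ in total variation distance in time $\tilde{\mathcal{O}}(D^2 m \varepsilon^{-1}) \le \tilde{\mathcal{O}}(D^2 T \varepsilon^{-1})$. Combining this with the $\mathcal{O}(TL)$ recomputations yields a total runtime of $\tilde{\mathcal{O}}(D^2 T^2 L /\varepsilon)$; substituting $T$, $L$ and $\varepsilon^{-1}=\sqrt{r}/(\theta_{\mathcal{E}}(\rho)\epsilon)$ gives exactly the advertised $\tilde{\mathcal{O}}(D^2 r^{5/2}\tau_{\mathcal{E}}(\rho)^{-1}\theta_{\mathcal{E}}(\rho)^{-5}\log(1/\delta)\epsilon^{-5})$.

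The main obstacle is verifying that the two layers of approximation -- computing outcome statistics only up to $\mathcal{O}(\varepsilon)$ and replacing each Gibbs state by its truncated Taylor expansion from Lemma~\ref{lem:errortruncation} -- do not invalidate the convergence analysis behind Theorem~\ref{thm:main-detail}. This is precisely what Theorem~\ref{thm:spamresil} supplies: since the decision threshold inside Algorithm~\ref{alg:HUtomo_last} is the same $\varepsilon$ and the step size is $\eta=\varepsilon/8$, errors of order $\varepsilon$ on either side of the test $\sum_i|\tilde p_i-q_i|>\varepsilon$ are safely absorbed. Once this is in place, the runtime bound is arithmetic; the only bookkeeping subtlety worth emphasizing is the distinction between the user-specified trace-distance accuracy $\epsilon$ and the internal Frobenius accuracy $\varepsilon=\theta_{\mathcal{E}}(\rho)\epsilon/\sqrt{r}$, which must be maintained consistently when plugging parameters into the cost estimate.
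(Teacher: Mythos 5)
Your proposal is correct and follows essentially the same route as the paper: it multiplies the iteration bound $T$ and control-loop size $L$ from Theorem~\ref{thm:main-detail} by the per-evaluation cost $\tilde{\mathcal{O}}(D^2 m \varepsilon^{-1})$ from Lemma~\ref{lem:computemi} with $m\le T$ and the internal accuracy $\varepsilon=\theta_{\mathcal{E}}(\rho)\epsilon/\sqrt{r}$, arriving at the same $\tilde{\mathcal{O}}(D^2 r^{5/2}\tau_{\mathcal{E}}(\rho)^{-1}\theta_{\mathcal{E}}(\rho)^{-5}\log(1/\delta)\epsilon^{-5})$ bound. Your explicit appeal to Theorem~\ref{thm:spamresil} to license the truncated-Taylor and $\mathcal{O}(\varepsilon)$-statistics approximations is the same justification the paper gives in the text immediately preceding its proof.
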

\begin{proof}
Let us break down the steps of the algorithm and the corresponding costs.
At each iteration $t$ we must compute $M_U(\sigma_t)$ up to precision $\cO(\epsilon \theta_{\mathcal{E}}(\rho)^{-1}r^{-\frac{1}{2}})$ for at most $\cO(\tau_{\mathcal{E}}(\rho)^{-1}\log(\delta^{-1}))$ different unitaries, where $\sigma_t$ is the state at iteration $t$. 
It follows from Lemma~\ref{lem:computemi} that this task takes $\tcO(D^2m\theta_{\mathcal{E}}(\rho)^{-1}\epsilon^{-1})$.
Thus, we conclude that the total cost per iteration is
\begin{align}\label{equ:costofiterations}
    \tcO(D^2m\theta_{\mathcal{E}}(\rho)^{-1}r^{\frac{1}{2}}\epsilon^{-1}\tau_{\mathcal{E}}(\rho)^{-1}\log(\delta^{-1})).
\end{align}
Moreover, the number of iterations is at most
\begin{align}\label{equ:numberofiterations}
    m=\tcO(r\left[\theta_{\mathcal{E}}(\rho)\epsilon\right]^{-2}).
\end{align}
Multiplying~\eqref{equ:costofiterations} by $m$ gives the total cost of the algorithm, and inserting the bound on $m$ given in~\eqref{equ:numberofiterations} yields the claim.
\end{proof}
To the best of our knowledge, this algorithm outperforms all existing rigorous tomography algorithms in its scaling in the regime where $\epsilon^{-5}\tau_{\mathcal{E}}(\rho)^{-1}\theta_{\mathcal{E}}(\rho)^{-5}=o(D)$. This is e.g. the case for random approximate $4$ designs or Cliffords and $\epsilon$ constant, as we summarize in more detail in Table~\ref{tab:comparison}.
The best available algorithms~\cite{guta_fast_2018} in terms of computational complexity of the postprocessing scale at least like $\cO(D^3)$, as they require at least one diagonalization of a matrix.
However, the worst-case dependency in $\epsilon$ can be $\epsilon^{-5}$ and it would be interesting to try to improve this dependency.

\subsection{Memory requirements, parallelization and other features}\label{sec:nicefeatures}
Another attractive feature of our algorithm is that it can be run with almost optimal memory requirements, i.e., we essentially only need to store classical descriptions of the underlying unitaries and the observed statistics in that basis. 
More precisely:
\begin{thm}[Memory requirements]\label{thm:classmem}
Let $\mathcal{S}_{\mathcal{E}}$ be the maximum memory required to store a unitary from an ensemble $\mathcal{E}$.
Then Algorithm~\ref{alg:HUtomo_last} can be run with error parameter $\epsilon$ and failure probability at most $1-\delta$ requiring at most
\begin{align*}
    \cO((D+\mathcal{S}_{\mathcal{E}})T \log(D\epsilon^{-1})),
\end{align*}
classical memory, where $T=\lceil 32 \log (D) r/\theta_{\mathcal{E}}^2(\rho)\epsilon^{-2} \rceil$ is the maximum number of iterations.
\end{thm}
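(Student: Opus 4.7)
The plan is to exploit the fact that Algorithm~\ref{alg:HUtomo_last} never needs to materialize either the Gibbs state $\sigma_t$ or the Hamiltonian $H_t$ as a dense $D \times D$ matrix. The update rule only ever adds rank-$D$ (diagonal in a rotated basis) terms of the form $\eta U^\dagger P U$, and Lemma~\ref{lem:computemi} already showed that outcome statistics can be computed through a matrix-vector oracle for $H_t$. So the memory question reduces to bookkeeping: store $H_t$ in a structured sum-of-terms representation and count bits.

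Concretely, I would write $H_t = \sum_{i=1}^{m_t} U_i^\dagger D_i U_i$ where $U_1, \dots, U_{m_t}$ are the distinct bases queried so far (with $m_t \leq T$ by Proposition~\ref{prop:convergence}) and each $D_i$ is a diagonal matrix that accumulates the projector updates $\eta_{t'} P_{t'}$ carried out in basis $U_i$. Each $U_i$ costs $\mathcal{S}_{\mathcal{E}}$ bits by assumption. For the diagonals, the constant-progress analysis of Lemma~\ref{lem:update-rule} yields $\|H_t\| = \cO(T\eta) = \cO(\log(D)/\epsilon)$, so each of the $D$ entries of $D_i$ lies in an interval of this magnitude. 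By Theorem~\ref{thm:spamresil} it suffices to maintain $\sigma_t$ to trace-distance accuracy $\cO(\epsilon)$, which by the stability of the truncated Taylor series from Lemma~\ref{lem:errortruncation} translates into a $\cO(\log(D/\epsilon))$-bit precision requirement for each diagonal entry. Hence each basis contributes $\mathcal{S}_{\mathcal{E}} + \cO(D\log(D/\epsilon))$ bits, and summing over $T$ bases gives the advertised $\cO((D+\mathcal{S}_{\mathcal{E}})T\log(D\epsilon^{-1}))$ bound.

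The only remaining items to account for are the transient working memory and the per-basis outcome statistics. The former is a handful of length-$D$ vectors reused across Krylov-style applications of $H$ in Lemma~\ref{lem:computemi}, which fits inside $\cO(D\log(D/\epsilon))$ bits and can be re-used; the latter are $D$-dimensional probability vectors at $\cO(\log(1/\epsilon))$-bit precision, which either get absorbed into the next update of $D_i$ or, if one keeps them across all $T$ bases for recycling purposes, add another $\cO(DT\log(1/\epsilon))$ term absorbed into the claim. The step I expect to take the most care is the precision bookkeeping: one must certify that carrying $D_i$, the Taylor-truncation polynomial, and the accumulated matrix-vector products at $\cO(\log(D/\epsilon))$ bits of precision does not compound into errors larger than the $\cO(\epsilon)$ slack permitted by Theorem~\ref{thm:spamresil}. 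This is essentially a standard forward-error analysis for polynomial evaluation of degree $\ell = \cO(\log(D/\epsilon))$ with operator-norm bounds inherited from $\|H\| = \cO(\log(D)/\epsilon)$, but it is the nontrivial content that prevents the bound from blowing up by extra polynomial factors in $D$ or $1/\epsilon$.
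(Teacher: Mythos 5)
Your proposal is correct and follows essentially the same route as the paper's proof: represent $H_t$ as a sum of at most $T$ rotated diagonals, charge $\mathcal{S}_{\mathcal{E}}$ bits per unitary and $\cO(D\log(D\epsilon^{-1}))$ bits per diagonal (and per stored outcome distribution), and observe that the recursive Taylor/Krylov evaluation of Lemma~\ref{lem:computemi} only needs a constant number of reusable length-$D$ working vectors. The paper fixes the per-entry precision at $\epsilon^2 D^{-1}$ rather than carrying out the forward-error analysis you flag, but the accounting and the final bound are the same.
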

\begin{proof}
In order to store the Hamiltonian at each iteration, we need to store the at most $T$ different probability distributions $p_i$ for the different measurement outcomes, the diagonal matrices in the description of the Hamiltonian and the corresponding unitaries. 
It suffices to store each entry of the vectors $p_i$ and diagonals up to a precision $\epsilon^2 D^{-1}$ for our purposes, as this is the precision we have for the empirical distribution. Storing the  vectors $p_i$ and the diagonal matrices up to a precision $\epsilon^2 D^{-1}$ for each entry takes $\cO(DT\log(D\epsilon^{-1}))$ classical memory. Storing the unitaries takes up $\cO(T\mathcal{S}_{\mathcal{E}})$ memory.
Let us now discuss the memory requirements for running the algorithm. At each iteration, we need to approximately compute $M_i(\sigma_t)$ for the current guess and compare it to $p_i$. We will follow the strategy devised in Lemma~\ref{lem:computemi} to approximately compute $M_i(\sigma_t)$.
Thus, it suffices to compute $\bra{j}U_i^\dagger T_lU_i\ket{j}$ for each $j$ separately, where $T_l$ is defined as in Lemma~\ref{lem:errortruncation}. Let us now discuss how to compute $\bra{j}U_i^\dagger T_lU_i\ket{j}$ only using $\tcO(Dm)$ classical memory. We will compute $T_lU_i\ket{j}$ recursively. Set $x_k^{i,j}=H^kU_i\ket{j}$. We clearly have:
\begin{align*}
    \bra{j}U_i^\dagger T_lU_i\ket{j}=\sum\limits_{k=0}^{l}\frac{1}{k!}\bra{j}U_i^\dagger\ket{x_k^{i,j}}
\end{align*}
Let $H$ be the Hamiltonian at time $t$ and $H=\sum\limits_{v=0}^t U_vD_vU_v^\dagger$.
For computing $HU_i\ket{j}$, we write:
\begin{align*}
x_1^{i,j}=HU_i\ket{j}=\sum\limits_{v=1}^t U_vD_vU_v^\dagger U_i\ket{j}
\end{align*}
and compute each term separately. This takes $\cO(mD)$ classical memory. We then store $x_1^{i,j}$ in the memory and compute $Hx_1^{i,j}$ in analogous manner. We then store $x_2^{i,j}=Hx_1^{i,j}$ and add $x_2^{i,j}/2$ to $x_0^{i,j}+x_1^{i,j}$, while deleting $x_1^{i,j}$. Repeating this procedure until $x_l^{i,j}$ we see that we can compute $U_i^\dagger T_lU_i\ket{j}$ using at most $\tcO(nD)$ classical memory. We then compute $\bra{j}U_i^\dagger T_lU_i\ket{j}$ and store the corresponding value. We conclude that we can compute $M_i(\sigma)$ approximately using at most $\tcO(DT)$ classical memory. It follows that we can store all the relevant information for running an iteration and obtain and store all information required to check a violation of the constraints again with $\tcO(DT)$ memory. This concludes the proof.
\end{proof}
We note that for certain measurement setups, such as approximate $4-$designs given by random local quantum circuits, the resulting required memory for doing tomography of a rank $r$ quantum state will be $\tcO(Dr)$, which is optimal up to logarithmic factors.

Furthermore, our algorithm can be parallelized easily. That is, it is possible to compute $\bra{j}U_i^\dagger T_lU_i\ket{j}$ for different $j$ on parallel processors, as we only need to provide them with a description of $H$. Moreover, at each iteration, updating the description of $H$ takes $\cO((D+\mathcal{S}_{\mathcal{E}}) \log(D\epsilon^{-1}))$ time.

Another feature which is relevant for large-scale applications is the online flavour of the algorithm. That is, it is possible to already start the classical postprocessing procedure as data is acquired and it is straightforward to add new measurement results to the algorithm, which can also significantly shorten the overall time required to do tomography.

Furthermore, we mention in passing is that the algorithm can naturally incorporate further prior information on $\rho$. More precisely, if we know that $\rl{\rho}{\rho'}$ is small for some already known state $\rho'$, then we can use $\rho'$ as the starting state for the algorithm and obtain a faster convergence.

Thus, the essentially optimal memory requirements of our algorithm and straightforward parallelization renders it practical for large scale applications and give it a significant advantage over all existing tomography procedures. We summarize the exact scaling of the memory requirements for different measurement setups in Table~\ref{tab:comparison}.

\section{Implementation on a quantum computer}\label{sec:quantumimplementation}
Our algorithm allows for a straightforward implementation on a quantum computer. Note that all that our algorithm requires at each iteration are the statistics of the quantum state in different bases. 
By preparing $\tcO(D\epsilon^{-2})$ copies of the Gibbs state on a quantum computer and measuring it in the basis specified by a unitary then suffices to obtain the statistics w.r.t. to that basis up to an error $\epsilon$ in $\ell_1$ distance. Thus, it is possible to perform each iteration of our tomography algorithm by preparing enough copies of the underlying Gibbs state for the current guess.

We will only discuss the implementation of the algorithm for measurements ensembles given by a random approximate $4$ design that can be implemented in $\tcO(1)$ time, but it should be straightforward to adapt the results to other measurements.

 There are many different proposals for preparing Gibbs states on 
 quantum computers~\cite{Chowdhury2016,Franc2017,Kastoryano2014,Poulin2009,Temme2009,Temme2009,Yung2012,Apeldoorn2017}.
 Here, we will follow the algorithm proposed in~\cite{Poulin2009}. Their results reduce the problem of preparing $\rho_H = \exp (-H)/\tr{\exp (-H)}$
to the task of simulating the Hamiltonian $H$.
Indeed,~\cite{Poulin2009} shows that $ \tcO\lb\sqrt{\frac{D}{Z_H}}\epsilon^{-3}\rb$ queries to the entries of a controlled $U$, where $Z_H=\tr{e^{-H}}$ and $U$ satisfies
 \begin{align*}
  \|U-e^{it_0H}\|\leq\cO(\epsilon^3) \quad \textrm{where} \quad t_0 = \pi / (4\|H \|)
 \end{align*}
suffice to produce a state that is $\epsilon$ close in trace distance to $\rho_H$. The probability of failure is at most $D^{-1/\epsilon}\epsilon^{2}$.   
By construction, the Hamiltonians we wish to simulate are all of the form
\begin{align*}
    H=\sum\limits_{i=1}^mU_iD_iU_i^\dagger,
\end{align*}
where $D_i$ are diagonal matrices and $U_i$ can be implemented in $\tcO(1)$ time.
It follows from~\cite[Theorem 1]{Childs2012} that  
\begin{align*}
\tilde{\cO}\lb t_0m^2 \exp (1.6\sqrt{\log m^2\log(n)t\epsilon^{-1}})\rb  
\end{align*}
separate simulations of $U_ie^{it_0D_i}U_i^\dagger$ suffice to simulate $H$ for time $t_0$ up to an error $\epsilon$. Thus, we further reduce the problem of simulating $H$ to simulating the $U_ie^{it_0D_i}U_i^\dagger$. As, by assumption, we can generate $U_i$ in $\tcO(1)$ time, we will focus on simulating the diagonal Hamiltonians $D_i$.
Let $O_{D_i}$ be the matrix entry oracle for $D_i$. We suppose that it acts on $\setC^D\otimes\lb\setC^2\rb^{\otimes l}$, where $l$ is large enough to represent the diagonal
entries to desired precision in binary, as 
\begin{align}
O_{D_i}\ket{j,z}\mapsto\ket{j,z\oplus D_{jj}}. 
\end{align}
It is then  possible to simulate $D_i$ for times $t=\tilde{\cO}(\epsilon^{-1})$
with $\tilde{\cO}(1)$ queries to the oracle $O_{D_i}$ and elementary operations~\cite{Berry2007}.
Thus, efficient simulation of $\mathrm{e}^{-iD_it}$ follows from an efficient implementation of the oracle $O_{D_i}$.
The latter can be achieved with a quantum RAM~\cite{Giovannetti2007}. 
We consider the quantum RAM model from \cite{Prakash2014}.
There, it is
possible to make insertions in time $\tcO\lb1\rb$. 
Thus, given a classical description of a diagonal matrix $D_i$, we may
update the quantum RAM in time $\tcO\lb D\rb$. After we have updated the quantum RAM, we may implement the oracle $O_{D_i}$ in time $\tcO(1)$.
Combining all these subroutines establishes the quantum runtime of our algorithm:
\begin{prop}
Let $\rho$ be a quantum state of rank at most $r$. Then we can run
Algorithm~\ref{alg:HUtomo_last} with the parameters and recovery guarantees specified in Thm.~\ref{thm:main-detail}  in time $\tcO(D^{3/2}r^3\epsilon^{-9})$ on a quantum computer.
\end{prop}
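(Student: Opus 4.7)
The plan is to simulate Algorithm~\ref{alg:HUtomo_last} faithfully on a quantum device by replacing the classical evaluation of the outcome distribution $[p_i]$ of the current iterate $\sigma_t = \exp(-H_t)/\tr{\exp(-H_t)}$ with repeated Gibbs state preparation, rotation by the random 4-design unitary, and computational-basis measurement. Because $H_t$ is maintained as a growing classical list of $m\leq T$ rank-one penalties of the form $\eta U_i P_i U_i^\dagger$, the Hamiltonian is fully specified at every step and all convergence and noise-robustness guarantees (Theorems~\ref{thm:main-detail} and~\ref{thm:spamresil}) carry over without modification, provided the quantum routine supplies outcome statistics that are $\cO(\varepsilon)$-accurate in $\ell_1$ with $\varepsilon=\Theta(\epsilon/\sqrt{r})$. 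The task thus reduces to budgeting the quantum cost of obtaining a single copy of $\sigma_t$ and performing the subsequent basis measurement.

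For Gibbs state preparation I invoke the Poulin-Wocjan thermal sampler~\cite{Poulin2009}: a copy of $\sigma_t$ that is $\tilde\varepsilon$-close in trace distance costs $\tcO(\sqrt{D/Z_{H_t}}\,\tilde\varepsilon^{-3})$ controlled applications of a unitary $\tilde\varepsilon^3$-approximating $e^{it_0 H_t}$, with $t_0=\pi/(4\|H_t\|)$. Two observations fix the relevant parameters. First, every update of Algorithm~\ref{alg:HUtomo_last} adds a term of operator norm $\eta=\tcO(\varepsilon)$, so after at most $T=\tcO(r/\epsilon^2)$ iterations we have $\|H_t\|\leq T\eta=\tcO(\sqrt{r}/\epsilon)$ and $t_0=\tcO(\epsilon/\sqrt{r})$. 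Second, by Theorem~\ref{thm:main-detail} the iterate $\sigma_t$ stays $\varepsilon$-close to a rank-$r$ state, which lets me lower bound $Z_{H_t}=\Omega(r)$ by combining the PSD-ness of $H_t$ with the fact that the top $r$ eigenvalues of $\sigma_t$ carry almost all of the probability mass. Next, the multiproduct formula of~\cite{Childs2012} simulates $e^{it_0 H_t}$ for time $t_0$ using $\tilde\cO(t_0 m^2)$ applications of the individual summands $U_i e^{it_0 D_i}U_i^\dagger$, the exponential factor in~\cite{Childs2012} being subpolynomial in all parameters and absorbed into $\tcO$. Each $U_i$ and $U_i^\dagger$ from an efficient approximate 4-design construction is applied in $\tcO(1)$, and each $e^{it_0 D_i}$ reduces to a single QRAM query~\cite{Prakash2014} after a one-time $\tcO(D)$ write per update, so the elementary per-summand simulation cost is $\tcO(1)$.

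Stringing these pieces together, the cost per Gibbs state preparation is $\tcO(\sqrt{D/r}\,\tilde\varepsilon^{-3}\,t_0 m^2)$, the number of measurements per basis required to estimate $[p_i]$ to $\ell_1$-accuracy $\tilde\varepsilon$ is $\tcO(D/\tilde\varepsilon^2)$, and the total number of bases is $TL=\tcO(r/\epsilon^2)$ by Theorem~\ref{thm:main-detail}. Propagating $\tilde\varepsilon=\Theta(\varepsilon)=\Theta(\epsilon/\sqrt{r})$ through every factor and multiplying out should deliver the advertised runtime $\tcO(D^{3/2}r^3\epsilon^{-9})$: the $D^{3/2}$ comes from combining the $D$ classical samples per basis with the $\sqrt{D}$ Grover-like cost of the Gibbs sampler, while the $r^3$ and $\epsilon^{-9}$ emerge from the interaction of $T$, $t_0$, $m$, and $\tilde\varepsilon$. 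The main obstacle is precisely this bookkeeping: the adjusted accuracy $\varepsilon$ enters at no fewer than five places (step size, iteration count, Hamiltonian simulation time, Poulin-Wocjan precision, and basis sample count), and the exponents must balance exactly. A subsidiary but conceptually important step is establishing $Z_{H_t}=\Omega(r)$ uniformly along the trajectory -- a na\"ive bound $Z_{H_t}\geq e^{-\|H_t\|}$ is exponentially weak and would destroy the scaling, so I would supplement the PSD-spectrum argument with the relative-entropy potential $S(\rho\|\sigma_t)\leq\log D$ maintained by the algorithm, and when $\rho$ is highly pure absorb the residual $\sqrt{r}$ into the stated rank dependence.
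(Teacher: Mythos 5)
Your proposal takes essentially the same route as the paper's own proof: reduce each iteration to preparing $\tcO(Dr\epsilon^{-2})$ copies of the Gibbs state $\sigma_t$ via the sampler of~\cite{Poulin2009}, implement the required Hamiltonian simulation of $H_t=\sum_i \eta\, U_iP_iU_i^\dagger$ term-by-term using~\cite{Childs2012} with QRAM-backed diagonal oracles, and multiply the per-copy cost by the total number of copies over $T=\tcO(r\epsilon^{-2})$ iterations. The one place you overcomplicate matters is the partition function: since shifting $H_t$ by $-\lambda_{\min}(H_t)\mathbb{I}$ leaves the Gibbs state unchanged and guarantees $Z_{H_t}\geq 1$, the factor $\sqrt{D/Z_{H_t}}$ is simply bounded by $\sqrt{D}$, and no $\Omega(r)$ lower bound on $Z_{H_t}$ is needed.
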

\begin{proof}
Let $\sigma_t$ be the current guess for the state.
Note that it requires $\cO(Dr\epsilon^{-2})$ copies of $\sigma_t$ to estimate the statistics w.r.t. to a given basis up to an error $\epsilon /\sqrt{r} $ in total variation distance, the precision required by Thm.~\ref{thm:main-detail}. For each iteration, we will have to check at most $\tilde{\cO}(1)$ different bases. 
Given the different measurements statistics $\left[q_i \right] = \langle i| U \sigma_t U^\dagger |i \rangle$, we can check for violations in $\cO(D)$ (classical) time.
Thus, the complexity of each iteration is dominated by the cost of  preparing the $\tilde{\cO}(Dr\epsilon^{-2})$ copies of $\sigma_t$ -- a Gibbs state.
Moreover, we will have at most 
$
T=\tcO(r\epsilon^{-2})
$
updates before reaching convergence (Proposition~\ref{prop:convergence}).
Thus, the entire execution of the algorithm requires at most
$
\tcO(Dr\epsilon^{-2}T)=\tcO(Dr^2\epsilon^{-4})
$ Gibbs state preparations.
According to the discussion above, each Gibbs state can be generated in time
$
    \tcO(T^2\sqrt{D}\epsilon^{-3})
$.
This results in a total (quantum) runtime of order
\begin{align*}
    \tcO((T^2\sqrt{D}\epsilon^{-3}\times Dr^2\epsilon^{-4})=\tcO(D^{3/2}r^3\epsilon^{-9}).
\end{align*}
\end{proof}
Note that the quantum algorithm also outputs a classical description of the quantum state in terms of the
(diagonal) projectors $P_i$ and associated basis changes $U_i$.
To the best of our knowledge, this is the first quantum speedup for tomography beyond the results of~\cite{kerenidis2018quantum}. 

There, the authors show how to do tomography for a \emph{real, pure} state $\ket{\psi}$ up to an error $\epsilon$ in trace distance given access to a \emph{controlled unitary} preparing copies of $\ket{\psi}$ only using $\tcO(D\epsilon^{-2})$ copies of $\ket{\psi}$ and $\tcO(D\epsilon^{-2})$ classical postprocessing. They also assume access to a QRAM.
However, this remarkable result
addresses a very different setup.
Although the authors comment that it is possible to adapt their results to go beyond states with only real phases, it is unclear how to extend it to states that are not (exactly) pure. More importantly, the protocol is contingent on  the assumption that one is able to produce copies of the target state with a controlled unitary -- a manifestly stronger state preparation model than the i.i.d. setting discussed here.

Finally, we point out that the scaling in terms of accuracy is considerably worse: $\epsilon^{-9}$ for the quantum implementation vs.\ $\epsilon^{-5}$ for the classical one. We leave a reduction of this gap to future work.

\section{Efficiently computing approximate eigenvectors and eigenvalues of the target state}\label{sec:convertingfromgibbstousual}

Efficient implementations of Hamiltonian Updates (Algorithm~\ref{alg:HUtomo_last}) do not output the estimated state $\sigma_\star$ itself, but a Hamiltonian $H_\star$ that fully characterizes the solution: $\sigma_\star = \exp (-H_\star)/\mathrm{tr}(\exp (-H_\star))$.
Although this provides a complete description of the state, it might be desirable for some applications to output the state in a more traditional form, i.e. in terms of a list of eigenvalues and corresponding eigenvectors. Let us now discuss how we can convert the output of our algorithm to this more traditional representation efficiently.
Assuming that the target $\rho$ has rank $r$, we can conclude that the algorithm also outputs a Gibbs state $\sigma_\star$ that is well-approximated by a rank-$r$ density matrix.
We can once again capitalize on fast matrix-vector multiplication with $H$
to obtain approximate eigenvectors and eigenvalues in $\tcO(Dr)$ time instead of the usual $\cO(D^3)$, while also using only $\tcO(Dr)$ memory. We start by recalling the following result of~\cite[Corollary 4.4]{Steurer2015}:
\begin{lem}
Set $
    S_l=\sum\limits_{k=0}^l\frac{(-H)^{k}}{k!}
$ with $l\leq 3e(\|H\|+\log(\epsilon^{-1}))$. Then,
\begin{align*}
    \left\|\frac{e^{-H}}{\mathrm{tr}(e^{-H})}-\frac{S_l^2}{\tr{S_l^2}}\right\|_{\mathrm{tr}}\leq\epsilon.
\end{align*}
\end{lem}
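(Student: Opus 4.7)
The approach is to upgrade an operator-norm bound on the Taylor remainder to a trace-distance bound on the normalized Gibbs states. I read the definition so that $S_l$ is the degree-$l$ Taylor polynomial of $e^{-H/2}$ (if instead $S_l$ is literally the Taylor polynomial of $e^{-H}$ as written, then $S_l^2 \approx e^{-2H}$ and the claim cannot hold), and I read the hypothesis on $l$ as the lower bound $l \ge 3e(\|H\|+\log(\epsilon^{-1}))$.

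First, I would bound the Taylor remainder of $e^{-H/2}$. By the spectral theorem, $\|e^{-H/2}-S_l\|$ reduces to a scalar estimate: the supremum over $\lambda \in [-\|H\|,\|H\|]$ of $\left|\sum_{k>l}(-\lambda/2)^k/k!\right|$, bounded by $\sum_{k>l}(\|H\|/2)^k/k!$. Stirling's inequality $k! \ge (k/e)^k$ turns this into a convergent geometric series as soon as $l \gtrsim e\|H\|$, and a direct calculation shows the hypothesis $l \ge 3e(\|H\|+\log \epsilon^{-1})$ drives the remainder well below $\epsilon$ with room to spare; the constant $3e$ (rather than $e$) provides exactly that slack.

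Next, I would pass to squares via the algebraic identity $S_l^2-e^{-H}=(S_l-e^{-H/2})\,S_l+e^{-H/2}\,(S_l-e^{-H/2})$. Combined with $\|S_l\| \le \|e^{-H/2}\|+\|S_l-e^{-H/2}\|$ and submultiplicativity, this yields $\|S_l^2-e^{-H}\| \le 2\|S_l-e^{-H/2}\|\cdot \|e^{-H/2}\|+\|S_l-e^{-H/2}\|^2$, with a parallel trace-norm inequality obtained by pulling $\tr{e^{-H/2}}$ out of the $e^{-H/2}$ factor in the identity. Finally, the elementary inequality $\|A/\tr{A}-B/\tr{B}\|_{\mathrm{tr}} \le 2\|A-B\|_{\mathrm{tr}}/\max(\tr{A},\tr{B})$ for PSD operators $A,B$ converts the unnormalized estimate into the desired trace-distance bound on the normalized states.

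The main obstacle is the combined squaring/normalization step: $\tr{e^{-H}}$ can be exponentially small in $\|H\|$, so an absolute operator bound on $\|S_l^2-e^{-H}\|$ cannot by itself control the normalized quantity. The clean fix is to argue multiplicatively rather than additively, i.e.\ to show that the sandwich $e^{H/4}S_le^{H/4}$ is $O(\epsilon)$-close to the identity in operator norm. By the spectral theorem this reduces to a single uniform scalar estimate $|e^{\lambda/2}P_l(-\lambda/2)-1| \le \epsilon$ for $\lambda\in[-\|H\|,\|H\|]$, where $P_l$ is the degree-$l$ Taylor polynomial of $e^{-x/2}$. This sandwich formulation transfers cleanly through the squaring identity and through the normalization step without incurring any dimension or $\|H\|$-dependent blow-up, and it is the step where a dimension factor is most easily lost if one instead chains absolute bounds naively.
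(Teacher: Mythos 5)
Your proposal is correct, and it is more informative than what the paper offers: the paper's entire ``proof'' is a pointer to Corollary~4.4 of the cited reference (``a Taylor expansion argument''), so you are in effect reconstructing the argument of the cited source rather than taking a different route. You are also right on both diagnostic points about the statement itself: the hypothesis must be the lower bound $l \geq 3e(\|H\|+\log(\epsilon^{-1}))$, and $S_l$ must be the degree-$l$ Taylor polynomial of $e^{-H/2}$, i.e.\ $S_l=\sum_{k\leq l}(-H/2)^k/k!$ --- as written, $S_l^2\approx e^{-2H}$ and the claim is false; the intended reading is confirmed by the later use of $S_l/\sqrt{\mathrm{tr}(S_l^2)}$ as an approximate square root of the Gibbs state in the block Krylov analysis. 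Most importantly, your identification of the real obstacle (that $\mathrm{tr}(e^{-H})$ can be as small as $De^{-\|H\|}$, so an absolute bound on $\|S_l^2-e^{-H}\|$ controls nothing after normalization) and your multiplicative fix are exactly the right key step: since $E:=e^{H/2}S_l-I$ commutes with $H$, one gets $S_l^2-e^{-H}=e^{-H}(2E+E^2)$, hence $\|S_l^2-e^{-H}\|_{\mathrm{tr}}\leq 3\|E\|\,\mathrm{tr}(e^{-H})$, and the scalar remainder estimate gives $\|E\|\leq e^{\|H\|/2}\sum_{k>l}(\|H\|/2)^k/k!\leq e^{\|H\|/2}\,6^{-l}$, which the hypothesis $l\geq 3e(\|H\|+\log(\epsilon^{-1}))$ drives far below $\epsilon$ precisely because the $l\geq 3e\|H\|$ part supplies an $e^{-c\|H\|}$ factor that absorbs the $e^{\|H\|/2}$ amplification. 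Only two cosmetic remarks: the expression $e^{\lambda/2}P_l(-\lambda/2)$ double-substitutes the factor $1/2$ (you want the degree-$l$ Taylor polynomial of $e^{u}$ evaluated at $u=-\lambda/2$), and the purely additive route sketched in your middle paragraphs should be dropped from a final write-up since, as you yourself observe, it does not close.
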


The proof follows from a Taylor expansion argument, see \cite[Corollary 4.4]{Steurer2015} for more details.

The state $A=\frac{S_l^2}{\tr{S_l^2}}$ has much in common with $T_l$, but the main difference we will exploit  is that it is simple to compute its square root. This property will turn out to be key in the analysis that follows. We will exploit the main result of~\cite{musco2015randomized} to obtain an approximate list of eigenvalues and eigenstates. Their main algorithm is described in Algorithm~\ref{alg:krylov} and its output $Z$ can be used to obtain a good low-rank approximation of $\sqrt{A}$, as we will see in Theorem~\ref{thm:fromgibbstousual}. We will then combine this with the gentle measurement Lemma~\cite{Winter_1999} to obtain a good approximation of our state.
\begin{algorithm}[H]
\caption{\textit{Block Krylov Iteration}
}
\label{alg:krylov}
\begin{algorithmic}[1]
\Require{maximal rank $r$, error $\epsilon$ and $A=S_l/\sqrt{\tr{S_l^2}}$.}
\State{Set $q=\cO(\log(D)\epsilon^{-1/2})$ and draw $X\sim\mathcal{N}(0,1)^{D\times r}$}
\State{Compute $K=\left[AX,A^3X,\ldots,A^{2q+1}X\right]$}
\State{Orthonormalize the columns of $K$ to obtain $Q$.}
\State{Compute $Y=Q^\dagger A^2Q$}
\State{Set $U_r$ to be the top $r$ singular vectors of $Y$}
\State{Return $Z=QU_r$}
\end{algorithmic}
\end{algorithm}

We will first show that we can run Algorithm~\ref{alg:krylov} efficiently.
\begin{lem}
    Let $H$ be the output of Algorithm~\ref{alg:HUtomo_last} with a measurement ensemble $\mathcal{E}$ with the FVMM property. Then we can run Algorithm~\ref{alg:krylov} in time at most $\tcO(DTr^2\epsilon^{-\frac{3}{2}})$, where $T=\lceil 32 \log (D) r/\theta_{\mathcal{E}}^2(\rho)\epsilon^{-2} \rceil$ is the maximal number of iterations.
\end{lem}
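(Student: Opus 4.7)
The plan is to bound the cost of each of the six steps in Algorithm~\ref{alg:krylov} separately, and then combine them. The two pieces of structure I would rely on are: (i) $H$ is a sum of at most $T$ terms of the form $U_i D_i U_i^\dagger$ with $U_i \in \mathcal{E}$ having the FVMM property, so a matrix-vector product $v \mapsto Hv$ costs $\tcO(DT)$; and (ii) as in the proof of Lemma~\ref{lem:computemi}, the algorithm~\ref{alg:HUtomo_last} keeps the Hamiltonian at moderate temperature, $\|H\| = \cO(\log(D)\epsilon^{-1})$, so the Taylor truncation parameter $l$ in the lemma preceding Algorithm~\ref{alg:krylov} can be taken as $l = \cO(\log(D)\epsilon^{-1}) = \tilde{\cO}(\epsilon^{-1})$.

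With these two observations in hand, the central primitive is applying $S_l$ to a column vector. Computing $S_l v = \sum_{k=0}^{l}\tfrac{1}{k!}(-H)^k v$ iteratively (build $(-H)^{k+1} v$ from $(-H)^k v$ and accumulate the rescaled sum) requires $l$ applications of $H$, giving a cost of $\tcO(DT\epsilon^{-1})$ per vector; since $A = S_l/\sqrt{\operatorname{tr}(S_l^2)}$, applying $A$ costs the same up to a scalar. I would then go through the algorithm: Step~1 costs $\cO(Dr)$ for drawing the Gaussian matrix; Step~2 forms $K$ by applying $A$ a total of $2q+1$ times to a block of $r$ vectors, at cost $\tcO(qr \cdot DT\epsilon^{-1}) = \tcO(DTr\epsilon^{-3/2})$ since $q = \cO(\log(D)\epsilon^{-1/2})$; Step~3 is Gram-Schmidt on a $D \times r(q+1)$ matrix, costing $\cO(D(rq)^2) = \tcO(Dr^2\epsilon^{-1})$; Step~4 applies $A^2$ to the $rq$ columns of $Q$ for $\tcO(DTrq\epsilon^{-1}) = \tcO(DTr\epsilon^{-3/2})$, followed by an $(rq)\times D \times (rq)$ product at cost $\tcO(Dr^2\epsilon^{-1})$; Step~5 is an SVD of the $rq \times rq$ matrix $Y$ at cost $\tcO(r^3\epsilon^{-3/2})$; and Step~6 multiplies $Q$ (of size $D\times rq$) by $U_r$ (of size $rq \times r$), costing $\tcO(Dr^2\epsilon^{-1/2})$.

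Adding up these bounds, the dominant cost is $\tcO(DTr\epsilon^{-3/2})$ (and using $T \geq r$ whenever we prefer a cleaner factorization yields the stated $\tcO(DTr^2\epsilon^{-3/2})$ bound). The main obstacle I expect is not conceptual but bookkeeping: one has to be careful that the Taylor depth $l$ is controlled via the moderate norm $\|H\| = \cO(\log(D)\epsilon^{-1})$ that the Hamiltonian Updates algorithm maintains, since a pessimistic bound on $\|H\|$ would blow up the cost of every application of $S_l$ and propagate an extra factor of $\epsilon^{-1}$ through the analysis. A secondary point worth double-checking is that one never needs to form $A$ (or $A^2$) explicitly or normalize by $\sqrt{\operatorname{tr}(S_l^2)}$ before the final step — keeping $A$ only as an operator that one applies to vectors is what brings the runtime below $\cO(D^2)$ in the first place.
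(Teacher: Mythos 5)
Your proposal is correct and follows essentially the same route as the paper's proof: bound the cost of a single matrix--vector product with $H$ by $\tcO(DT)$ via the FMVM property, lift this to $S_l$ using the truncation depth $l=\tcO(\epsilon^{-1})$ coming from $\|H\|=\cO(\log(D)\epsilon^{-1})$, and then tally the six steps of the block Krylov iteration, with the formation of $K$ (and of $Y$) dominating at $\tcO(DTr\epsilon^{-3/2})$. Your accounting is in fact slightly more careful than the paper's in the orthonormalization and SVD steps, and your closing remarks about never forming $A$ explicitly and about the norm control on $H$ are exactly the points the paper relies on.
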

\begin{proof}
Recall that we can multiply a vector with $H$ in time $\tcO(DT)$. Thus, we can also multiply a vector with $S_l$ in time $\tcO(DTl)=\tcO(DT\epsilon^{-1})$ and computing $K$ takes time $\tcO(DTrq\epsilon^{-1})=\tcO(DTr\epsilon^{-3/2})$, as $X$ has $r$ columns.
Orthonormalizing $K$ takes time $\tcO(Dqr)$ and computing $Y$ again takes time $\tcO(DTr\epsilon^{-1})$, as $Q$ has $rq$ columns. Computing the SVD of $Y$ then takes time $\tcO(r^3)$, as it is a $qr\times qr$ matrix. Finally, multiplying $Q$ by $U_r$ takes time $\tcO(r^2D)$. We see that all steps are individually bounded in runtime by $\tcO(DTr^2\epsilon^{-\frac{3}{2}})$ and the claim follows.
\end{proof}
We assumed we know $\tr{S_l^2}$ in the definition of the algorithm, but note that as it is a positive constant, we can also run Algorithm~\ref{alg:krylov} with $S_l$ instead and the output $P$ will be the same.
We are now ready to show that $Z$ can be used to obtain a good approximation to $\rho$:
\begin{thm}\label{thm:fromgibbstousual}
Let $Z$ be the output of Algorithm~\ref{alg:krylov} and set $P=Z^\dagger Z$ (an orthoprojector). 
Suppose that the output $H_\star$ of Algorithm~\ref{alg:HUtomo_last} satisfies
\begin{align*}
    \left\|\rho-\frac{e^{-H_\star}}{\tr{e^{-H_\star}}}\right\|_{\mathrm{tr}} = \| \rho - \sigma_\star \|_{\mathrm{tr}}\leq \epsilon,
\end{align*}
for a target state $
\rho$ with rank at most $r$. Then 
\begin{align}
    \left\|\rho-\frac{PS_l^2P}{\tr{PS_l^2P}}\right\|_{tr}=\cO(\sqrt{\epsilon})
\end{align}
with high probability.
\end{thm}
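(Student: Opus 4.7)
The plan is to string together four observations: (i) $\sigma_\star$ is well approximated by the unnormalised Gibbs proxy $S_l^2$; (ii) since the true target $\rho$ is close to $\sigma_\star$ and has rank $\leq r$, the state $A^2:=S_l^2/\mathrm{tr}(S_l^2)$ has a rapidly decaying spectral tail; (iii) the block Krylov projector $P=ZZ^\dagger$ absorbs essentially all of this rapidly decaying spectrum; (iv) the gentle-measurement lemma then turns this spectral weight bound into a trace-distance bound, at the cost of the square root that appears in the statement.

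First, I would apply the preceding lemma with parameter $\epsilon$ to fix $l=\mathcal{O}(\|H_\star\|+\log(1/\epsilon))$ so that $\|\sigma_\star - A^2\|_{\mathrm{tr}}\leq \epsilon$. Combined with the hypothesis $\|\rho-\sigma_\star\|_{\mathrm{tr}}\leq\epsilon$, the triangle inequality yields $\|\rho-A^2\|_{\mathrm{tr}}\leq 2\epsilon$. Since $\rho$ has rank $\leq r$, the Mirsky/Eckart--Young inequality in Schatten-$1$ norm tells us that the best rank-$r$ approximation $(A^2)_r$ to $A^2$ satisfies $\|A^2-(A^2)_r\|_{\mathrm{tr}}=\sum_{i>r}\lambda_i(A^2)\leq 2\epsilon$. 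Rewriting the tail as squared singular values of $A$, this reads $\|A-A_r\|_F^2=\sum_{i>r}\sigma_i(A)^2=\sum_{i>r}\lambda_i(A^2)\leq 2\epsilon$, since $A=S_l/\sqrt{\mathrm{tr}(S_l^2)}$ is PSD.

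Next, I would invoke the guarantee of \cite{musco2015randomized} for Algorithm~\ref{alg:krylov}: with the stated choice $q=\mathcal{O}(\log(D)\epsilon^{-1/2})$ and up to a $(1+\epsilon)$ factor, $P=ZZ^\dagger$ is a near-optimal rank-$r$ projection for $A$ in Frobenius norm, i.e.\ $\|A-PA\|_F^2\leq (1+\epsilon)\|A-A_r\|_F^2\leq 4\epsilon$ (absorbing harmless constants). The key identity is
\begin{equation*}
\mathrm{tr}\bigl((I-P)A^2\bigr)=\mathrm{tr}\bigl((I-P)A\,A(I-P)\bigr)=\|(I-P)A\|_F^2\leq 4\epsilon,
\end{equation*}
which gives $\mathrm{tr}(PA^2)\geq 1-4\epsilon$. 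The gentle measurement lemma of \cite{Winter_1999} then yields $\|A^2-PA^2P\|_{\mathrm{tr}}\leq 2\sqrt{4\epsilon}=4\sqrt{\epsilon}$.

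Finally, I would handle the renormalisation and conclude by triangle inequality. Using $P^2=P$ and cyclicity, $\mathrm{tr}(PA^2P)=\mathrm{tr}(PA^2)\geq 1-4\epsilon$, so
\begin{equation*}
\left\|\frac{PS_l^2P}{\mathrm{tr}(PS_l^2P)}-PA^2P\right\|_{\mathrm{tr}}=\bigl|1-\mathrm{tr}(PA^2P)\bigr|\leq 4\epsilon,
\end{equation*}
because $PS_l^2P/\mathrm{tr}(PS_l^2P)=PA^2P/\mathrm{tr}(PA^2P)$ (the $\mathrm{tr}(S_l^2)$ factors cancel). Chaining all bounds,
\begin{equation*}
\left\|\rho-\frac{PS_l^2P}{\mathrm{tr}(PS_l^2P)}\right\|_{\mathrm{tr}}\leq \|\rho-A^2\|_{\mathrm{tr}}+\|A^2-PA^2P\|_{\mathrm{tr}}+\left\|PA^2P-\tfrac{PA^2P}{\mathrm{tr}(PA^2P)}\right\|_{\mathrm{tr}}=\mathcal{O}(\sqrt{\epsilon}).
\end{equation*}
The only place where randomness enters is step (iii): the Musco--Musco bound holds with high probability over the Gaussian draw of $X$, which is the source of the ``with high probability'' qualifier in the statement. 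The main obstacle to making this rigorous is the careful bookkeeping in step (iii): one must match the Frobenius-norm near-optimality guarantee, which is naturally about $A$, to the spectral tail of $A^2$, which is what the Eckart--Young argument for $\rho$ controls. The PSD structure of $A$ makes $\sigma_i(A)^2=\lambda_i(A^2)$ and this conversion is automatic, but one must keep track of the approximation factor inherited from $q$ to ensure it does not inflate the final $\sqrt{\epsilon}$.
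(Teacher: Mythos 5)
Your proof is correct and shares its skeleton with the paper's: both route the argument through the Musco--Musco near-optimality guarantee for the block Krylov projector, convert $\|(I-P)A\|_F^2 = 1-\mathrm{tr}(PA^2) = \mathcal{O}(\epsilon)$ into a trace-distance bound via the gentle measurement lemma, and finish with triangle inequalities (your reading $P=ZZ^\dagger$ is the dimensionally correct one; the $Z^\dagger Z$ in the statement is a typo). The one genuine difference is how the Frobenius-norm rank-$r$ approximation error of $A = S_l/\sqrt{\mathrm{tr}(S_l^2)}$ is bounded. The paper exhibits the explicit rank-$r$ competitor $\sqrt{\rho}$ and invokes the inequality $\|\sqrt{X}-\sqrt{Y}\|_2^2 \leq \|X-Y\|_{\mathrm{tr}}$ for states (citing Audenaert's comparison of fidelity-type quantities), whereas you bound the \emph{optimal} rank-$r$ error via Mirsky/Eckart--Young applied to $A^2$ --- the spectral tail $\sum_{i>r}\lambda_i(A^2)$ is at most $\|A^2-\rho\|_{\mathrm{tr}} \leq 2\epsilon$ because $\rho$ has rank at most $r$ --- and then transfer it to $A$ through the identity $\lambda_i(A)^2 = \lambda_i(A^2)$ for positive semidefinite $A$. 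Both are one-line arguments yielding the same $\mathcal{O}(\epsilon)$ bound; yours is somewhat more self-contained (no external fidelity-comparison inequality), while both implicitly rely on $S_l \succeq 0$, which the paper's truncation lemma guarantees for even $l$. Your bookkeeping of the $(1+\epsilon)$ factor, the renormalisation term $|1-\mathrm{tr}(PA^2P)|\leq 4\epsilon$, and the attribution of the ``with high probability'' qualifier to the Gaussian sketch are all accurate.
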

\begin{proof}
In~\cite[Theorem 1]{musco2015randomized}, the authors show that $P$ satisfies
\begin{align}\label{equ:initialestimate}
    \left\|\frac{S_l}{\sqrt{\tr{S_l^2}}}-P\frac{S_l}{\sqrt{\tr{S_l^2}}}\right\|_2\leq (1+\epsilon)\left\|\frac{S_l}{\sqrt{\tr{S_l^2}}}-A_r\right\|_2,
\end{align}
with high probability, where $A_r$ is an arbitrary matrix of rank $r$.
Let us now estimate this distance when we pick $A_r=\sqrt{\rho}$. As both $\frac{S_l}{\sqrt{\tr{S_l^2}}}$ and $\sqrt{\rho}$ are square roots of states:
\begin{align*}
   \left\|\frac{S_l}{\sqrt{\tr{S_l^2}}}-\sqrt{\rho}\right\|_2^2=2-\tr{\frac{S_l}{\sqrt{\tr{S_l^2}}}\sqrt{\rho}}\leq\left \|\rho-\frac{S_l^2}{\tr{S_l^2}}\right\|_{tr}.
\end{align*}
See e.g.~\cite[Eq. 3]{audenaert2012comparisons} for the last inequality. 
It then follows by combining our assumption that we have a good approximation in trace distance to $\rho$ from $H_*$ and the fact that $\frac{S_l^2}{\tr{S_l^2}}$ approximates the Gibbs state that, combined with a triangle inequality:
\begin{align*}
     \left\|\rho-\frac{S_l^2}{\tr{S_l^2}}\right\|_{tr}\leq \left\|\rho-\frac{e^{-H}}{\tr{e^{-H}}}\right\|_{tr}+ \left\|\frac{e^{-H}}{\tr{e^{-H}}}-\frac{S_l^2}{\tr{S_l^2}}\right\|_{tr}\leq2\epsilon.
\end{align*}
This, combined with Eq.~\eqref{equ:initialestimate}, yields that:
\begin{align*}
     \left\|\frac{S_l}{\sqrt{\tr{S_l^2}}}-P\frac{S_l}{\sqrt{\tr{S_l^2}}}\right\|_2= \cO(\sqrt{\epsilon}).
\end{align*}
We now have:
\begin{align*}
     \left\|\frac{S_l}{\sqrt{\tr{S_l^2}}}-P\frac{S_l}{\sqrt{\tr{S_l^2}}}\right\|_2^2=1-\tr{\frac{PS_l^2P}{\tr{S_l^2}}}= \cO(\epsilon).
\end{align*}
As $P$ is a projection, we have that the probability we observe the outcome $P$ when measuring it on $\frac{S_l^2}{\tr{S_l^2}}$ is at least:
\begin{align*}
    \tr{\frac{PS_l^2P}{\tr{S_l^2}}}\geq1- \Omega(\epsilon)
\end{align*}
Thus, by the gentle measurement Lemma~\cite{Winter_1999}:
\begin{align*}
    \left\|\frac{PS_l^2P}{\tr{PS_l^2P}}-\frac{S_l^2}{\tr{S_l^2}}\right\|_{tr}=\cO(\sqrt{\epsilon}).
\end{align*}
A series of triangle inequalities gives the claim.
\end{proof}

As we can compute $PS_l^2P$ in time $\tcO(DTr\epsilon^{-1})$, we conclude that it is possible to convert the output of Algorithm~\ref{alg:HUtomo_last}, given as a Hamiltonian $H_\star$, into a more traditional form: a collection of eigenvectors and eigenvalues.
The runtime for this postprocessing step is comparable to the time required to find $H_\star$.

To see this, set 
$
  P=\sum\limits_{i=1}^r\ketbra{\psi_i}
$, where the $\ket{\psi_i}$ correspond to the rows of the matrix $Z$ we output in Algorithm~\ref{alg:krylov}.
We can then compute the $r\times r$ matrix
\begin{align*}
    B_{i,j}=\frac{\bra{\psi_j}S_l^2\ket{\psi_i}}{\tr{PS_l^2P}}
\end{align*}
in time $\tcO(D r^2)$ by computing $S_l\ket{\psi_i}$ and the corresponding scalar products. By diagonalizing this $r\times r$ matrix $B$, we can recover eigenvalues and present the eigenvectors as linear combinations of the $\ket{\psi_i}$'s.
Furthermore, note that the algorithm presented here only requires classical memory of size  $\cO(DTr)$.

The only relevant property of $\rho$ we used for the proof above is that it is of low-rank and we are able to multiply fast with $S_l$. Thus, if we know that the current iteration of our algorithm is already close to a low-rank state, it is possible to use the algorithm above to reduce the complexity of performing an eigenvalue decomposition.

\section{Effective rank} \label{sec:effective-rank}
Here we collect some statements about the effective rank relevant to our work.
We define the $\alpha$-effective rank of a quantum state $\rho$ as
\begin{equation}
r_{\mathrm{eff},\alpha}(\rho)= \tr{\rho^{\alpha}}^{\frac{1}{1-\alpha}} \quad \text{for $\alpha \in (0,1)$.}
\end{equation}
This is just the exponential of the $\alpha$-R\'enyi entropy of the quantum state $\rho$, defined as
\begin{align*}
    S_\alpha(\rho)=\frac{1}{1-\alpha}\log\lb \tr{\rho^{\alpha}}\rb=\log\lb r_{\mathrm{eff},\alpha}(\rho)\rb.
\end{align*}
It is well-known that this quantity is monotonically decreasing in $\alpha$ with limit
\begin{align*}
\lim_{\alpha\to 0}    S_\alpha(\rho)=\log(\text{rank}(\rho)).
\end{align*}
This, in particular, implies
$r_{\mathrm{eff},\alpha}(\rho) \leq \mathrm{rank}(\rho)$ for every $\alpha \in (0,1)$. Moreover, $r_{\mathrm{eff},\alpha}(\rho)$ is a continuous function of the state. 

On a more conceptual level, these functions are known to capture how fast the spectrum of $\rho$ decays~\cite{PhysRevB.73.094423}.
More precisely, let $\rho$ be a state with eigenvalues
$\lambda_1,\ldots,\lambda_D$ arranged in non-increasing order. For $1 \leq r \leq D$ (integer) define
\begin{align*}
    \tau(r,\rho)=\sum\limits_{k=r+1}^D\lambda_k.
\end{align*}
This quantity -- the sum of the $D-r$ smallest eigenvalues (``tail'') -- captures how well a quantum state is approximated by a rank 
$r$ state. The $\alpha$-entropies control how fast this tail decays.
A majorization argument shows that
\begin{align}\label{equ:taildecay}
    \tau \lb r,\rho\rb\leq\lb 1-\alpha\rb^\frac{1}{\alpha} \lb\frac{r_{\mathrm{eff},\alpha}(\rho)}{r}\rb^{\frac{1-\alpha}{\alpha}},
\end{align}
see e.g.~\cite[Lemma 2]{PhysRevB.73.094423}. All of these properties justify the choice of $r_{\mathrm{eff},\alpha}$ as a continuous relaxation of the rank.

Let us now show an equivalence inequality between the trace norm and Frobenius norm tailored to low-rank states. We will then later generalize it to states of small effective rank.

\begin{lem}\label{lem:stablerankfrobenius}
For two quantum states $\rho,\sigma$ we have:
\begin{equation*}
   \| \rho - \sigma \|_1 \leq 2 \sqrt{\min \left\{ \mathrm{rank}(\rho),\mathrm{rank}(\sigma) \right\}} \| \rho - \sigma \|_2. 
\end{equation*}
\end{lem}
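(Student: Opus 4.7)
The plan is to reduce the inequality to Cauchy--Schwarz in the Hilbert--Schmidt inner product, with the key input being a rank bound on the spectral projector onto the positive part of $\rho - \sigma$. Let $\rho - \sigma = (\rho-\sigma)_+ - (\rho-\sigma)_-$ be the Jordan decomposition and let $Q_+$ (resp.\ $Q_-$) denote the spectral projector onto the positive (resp.\ negative) eigenspace of $\rho-\sigma$. Tracelessness of $\rho-\sigma$ gives $\mathrm{tr}((\rho-\sigma)_+) = \mathrm{tr}((\rho-\sigma)_-) = \tfrac{1}{2}\|\rho-\sigma\|_1$, and hence
\begin{equation*}
\tfrac{1}{2}\|\rho-\sigma\|_1 \;=\; \mathrm{tr}\bigl(Q_+ (\rho-\sigma)\bigr).
\end{equation*}

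The main structural step will be to show $\mathrm{rank}(Q_+) \leq \mathrm{rank}(\rho)$. Since $\sigma \geq 0$, we have the operator inequality $\rho - \sigma \leq \rho$, so Weyl's monotonicity theorem yields $\lambda_k(\rho - \sigma) \leq \lambda_k(\rho)$ for every $k$, where $\lambda_k(\cdot)$ denotes the $k$-th largest eigenvalue. In particular $\lambda_{\mathrm{rank}(\rho)+1}(\rho - \sigma) \leq \lambda_{\mathrm{rank}(\rho)+1}(\rho) = 0$, so $\rho - \sigma$ has at most $\mathrm{rank}(\rho)$ strictly positive eigenvalues. The identical argument applied to $\sigma - \rho \leq \sigma$ yields the companion bound $\mathrm{rank}(Q_-) \leq \mathrm{rank}(\sigma)$.

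Cauchy--Schwarz in the Hilbert--Schmidt inner product then closes the argument:
\begin{equation*}
\mathrm{tr}\bigl(Q_+(\rho-\sigma)\bigr) \;\leq\; \|Q_+\|_2\,\|\rho-\sigma\|_2 \;=\; \sqrt{\mathrm{rank}(Q_+)}\,\|\rho-\sigma\|_2 \;\leq\; \sqrt{\mathrm{rank}(\rho)}\,\|\rho-\sigma\|_2.
\end{equation*}
Repeating the same three steps for $\sigma-\rho$ with projector $Q_-$ delivers the analogous bound with $\mathrm{rank}(\sigma)$ in place of $\mathrm{rank}(\rho)$, and taking the better of the two produces the advertised factor $2\sqrt{\min\{\mathrm{rank}(\rho),\mathrm{rank}(\sigma)\}}$. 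The only delicate point is the Weyl-monotonicity step: the naive estimate $\|\rho-\sigma\|_1 \leq \sqrt{\mathrm{rank}(\rho-\sigma)}\,\|\rho-\sigma\|_2 \leq \sqrt{\mathrm{rank}(\rho)+\mathrm{rank}(\sigma)}\,\|\rho-\sigma\|_2$ is too weak in the asymmetric regime $\mathrm{rank}(\rho) \ll \mathrm{rank}(\sigma)$, and the tempting stronger operator inequality $(\rho-\sigma)_+ \leq \rho$ that would obviously imply the rank bound is in fact \emph{false} in general; only its rank-level consequence survives, and that is exactly what the Weyl argument extracts.
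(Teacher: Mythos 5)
Your proof is correct and follows the same skeleton as the paper's: identify $\tfrac{1}{2}\|\rho-\sigma\|_1$ with $\mathrm{tr}\bigl(Q_+(\rho-\sigma)\bigr)$ for the positive-part spectral projector, bound the rank of that projector by $\mathrm{rank}(\rho)$, and finish with Cauchy--Schwarz in the Hilbert--Schmidt inner product, then symmetrize. The one place where you genuinely diverge is the justification of the rank bound, and your version is the more careful one. The paper asserts that ``by construction, the range of $P_+$ is contained in the range of $\rho$,'' but this is not literally true: for two distinct pure states $\rho=|0\rangle\!\langle 0|$ and $\sigma=|+\rangle\!\langle +|$ the positive eigenvector of $\rho-\sigma$ is not proportional to $|0\rangle$, so the range containment fails even though the dimension bound $\mathrm{rank}(Q_+)\leq\mathrm{rank}(\rho)$ still holds. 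Your Weyl-monotonicity argument (from $\rho-\sigma\leq\rho$, so $\rho-\sigma$ has at most $\mathrm{rank}(\rho)$ positive eigenvalues) extracts exactly the surviving rank-level consequence, and your closing remark that $(\rho-\sigma)_+\leq\rho$ is false in general correctly identifies the trap the paper's phrasing skirts. An equivalent elementary alternative is to note that $V_+\cap\ker\rho=\{0\}$ (any $0\neq v\in V_+$ has $\langle v,\rho v\rangle>\langle v,\sigma v\rangle\geq 0$), which gives the same dimension count without invoking Weyl.
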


\begin{proof}
Helstrom's theorem connects the trace distance of $\rho$ and $\sigma$ with optimal distinguishing measurements:
\begin{equation*}
\| \rho - \sigma \|_1 = 2 \max_{0 \leq M \leq \mathbb{I}} \mathrm{tr}(M (\rho - \sigma) ).
\end{equation*}
Equality occurs if and only if $M$ is the orthoprojector onto the positive range of $\rho - \sigma$: $M^\sharp=P_+$ (or its ortho-complement, the projector onto the negative rank).
By construction, the range of $P_+$ is contained in the range of $\rho$ and we conclude $\|P_+ \|_2=\sqrt{\mathrm{tr}(P_+)} \leq \sqrt{\mathrm{rank}(\rho)}$. Combine this insight with Cauchy-Schwarz to obtain
\begin{equation*}
\| \rho - \sigma \|_1 = 2\mathrm{tr} \left( P_+ (\rho-\sigma) \right) \leq 2 \| P_+ \|_2 \| \rho - \sigma \|_2
\leq 2 \sqrt{\mathrm{rank}(\rho)} \| \rho - \sigma \|_2.
\end{equation*}
An analogous bound of the form $\| \rho - \sigma \|_1 \leq 2 \sqrt{\mathrm{rank}(\sigma)}\| \rho - \sigma \|_2$ readily follows from exchanging the roles of $\rho$ and $\sigma$. Combining both implies
\begin{equation*}
   \| \rho - \sigma \|_1 \leq 2 \sqrt{\min \left\{ \mathrm{rank}(\rho),\mathrm{rank}(\sigma) \right\}} \| \rho - \sigma \|_2. 
\end{equation*}
\end{proof}
We note that a similar inequality was recently proved in
~\cite{Coles_2019}.
The above claim can be extended to effective rank. To this end, note that
\begin{equation*}
\tau (\rho,r) = \|\rho-\rho_{r} \|_1 \quad \text{where} \quad \rho_r = P_r \rho P_r (1-\tau (r,\rho))
\end{equation*}
and $P_r$ is the projection onto the range of the $r$ largest eigenvectors of $\rho$. We then have:

\begin{lem}\label{lem:conversionrank}
Let $\rho,\sigma$ be quantum states. Then for all $1\leq r\leq D-1$:
\begin{align*}
    \|\rho-\sigma\|_1\leq 2\sqrt{r}\|\rho-\sigma\|_2+2\min\{\tau(r,\rho),\tau(r,\sigma)\}.
\end{align*}
\end{lem}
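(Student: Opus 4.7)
The plan is to follow the strategy of Lemma~\ref{lem:stablerankfrobenius} but with $\rho$ replaced by its rank-$r$ truncation $\rho_r := P_r \rho P_r$, where $P_r$ projects onto the top-$r$ eigenvectors of $\rho$, and then to absorb the truncation error cleanly into the $\tau(r,\rho)$ term. The triangle inequality
\begin{equation*}
\|\rho-\sigma\|_1 \le \|\rho-\rho_r\|_1 + \|\rho_r-\sigma\|_1 = \tau(r,\rho) + \|\rho_r-\sigma\|_1
\end{equation*}
reduces the task to bounding the second summand by $2\sqrt{r}\,\|\rho-\sigma\|_2 + \tau(r,\rho)$. By symmetry in $\rho,\sigma$ the same argument yields the analogous bound with $\tau(r,\sigma)$ in place of $\tau(r,\rho)$, whence the $\min$.

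For the bound on $\|\rho_r-\sigma\|_1$ I would use the Jordan decomposition $\rho_r-\sigma = A_+-A_-$ and the identity
\begin{equation*}
\|\rho_r-\sigma\|_1 = 2\,\mathrm{tr}(A_+) - \mathrm{tr}(\rho_r-\sigma) = 2\,\mathrm{tr}(A_+) + \tau(r,\rho),
\end{equation*}
which accounts for the fact that $\rho_r$ is sub-normalized with $\mathrm{tr}(\rho_r)=1-\tau(r,\rho)$. Let $P_+$ denote the orthoprojector onto the range of $A_+$. Running the argument from the proof of Lemma~\ref{lem:stablerankfrobenius} with $\rho_r$ in place of $\rho$ shows that $\mathrm{range}(P_+) \subseteq \mathrm{range}(\rho_r)$: any unit vector $v$ in the positive eigenspace of $\rho_r-\sigma$ satisfies $\langle v,\rho_r v\rangle > \langle v,\sigma v\rangle \ge 0$, so $v \notin \ker(\rho_r)$. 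Hence $\mathrm{rank}(P_+) \le r$.

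The main obstacle, and the crux of the proof, is converting the resulting estimate into one involving $\|\rho-\sigma\|_2$ (as the statement demands) rather than $\|\rho_r-\sigma\|_2$, without paying an extra $\sqrt{r}$ factor in front of $\tau(r,\rho)$. The naive triangle inequality $\|\rho_r-\sigma\|_2 \le \|\rho_r-\rho\|_2+\|\rho-\sigma\|_2$ would inflate the tail contribution to $(1+2\sqrt{r})\tau(r,\rho)$, which is too weak. Instead I would rewrite
\begin{equation*}
\mathrm{tr}(A_+) = \mathrm{tr}\bigl(P_+(\rho_r-\sigma)\bigr) = \mathrm{tr}\bigl(P_+(\rho-\sigma)\bigr) - \mathrm{tr}\bigl(P_+(\rho-\rho_r)\bigr),
\end{equation*}
and discard the second summand, which is non-negative because $P_+\ge 0$ and $\rho-\rho_r\ge 0$. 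Cauchy--Schwarz then gives $\mathrm{tr}(A_+) \le \|P_+\|_2\,\|\rho-\sigma\|_2 \le \sqrt{r}\,\|\rho-\sigma\|_2$. Substituting back produces $\|\rho_r-\sigma\|_1 \le 2\sqrt{r}\,\|\rho-\sigma\|_2 + \tau(r,\rho)$ and, via the initial triangle inequality, the claimed bound.
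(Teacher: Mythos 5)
Your proof is correct and follows essentially the same route as the paper's: truncate $\rho$ to $\rho_r=P_r\rho P_r$, pay one $\tau(r,\rho)$ via the triangle inequality, write $\|\rho_r-\sigma\|_1 = 2\,\mathrm{tr}\bigl(P_+(\rho_r-\sigma)\bigr)+\tau(r,\rho)$, use $\rho-\rho_r\succeq 0$ to replace $\rho_r$ by $\rho$ inside the trace against $P_+$, and finish with Cauchy--Schwarz and $\|P_+\|_2\le\sqrt{r}$. The only cosmetic difference is that you organize the bookkeeping through the Jordan decomposition $A_+-A_-$ rather than through the pair of traces against $P_+$ and $P_-=\mathbb{I}-P_+$ as the paper does.
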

\begin{proof}
Let $\tilde{\rho}_r=P_r \rho P_r$ be the best rank $r$ approximation of $\rho$ with respect to trace norm.
Decompose $\rho$ as $\rho=\tilde{\rho}_r+\tilde{\rho}_c$, with $\tilde{\rho}_c=(\mathbb{I}-P_r)\rho(\mathbb{I}-P_r)$ and apply a triangle inequality to conclude
\begin{align}\label{equ:distdecompositiontrace}
    \|\rho-\sigma\|_{1}\leq \|\tilde{\rho}_r-\sigma\|_1+\|\tilde{\rho}_c\|_{1}=\|\tilde{\rho}_r-\sigma\|_1+\tau(\rho,r).
\end{align}
Now, let $P_+$ be the orthoprojector onto the positive range of $\rho_r-\sigma$ and denote its orthocomplement by $P_- = \mathbb{I}-P_+$.
Then,
\begin{align}\label{equ:decompositiontrace}
    \|\tilde{\rho}_r-\sigma\|_1=\tr{P_+\lb \tilde{\rho}_r-\sigma\rb}-\tr{P_-\lb \tilde{\rho}_r-\sigma\rb}.
\end{align}
and the following similar identity is also true:
\begin{align*}
    \tau(\rho,r)=\tr{\sigma-\tilde{\rho}_r}=-\tr{P_+\lb \tilde{\rho}_r-\sigma\rb}-\tr{P_-\lb \tilde{\rho}_r-\sigma\rb}.
\end{align*}
Combining both yields
\begin{align}\label{equ:decompositionnegativepart}
    -\tr{P_-\lb \tilde{\rho}_r-\sigma\rb}=\tau(\rho,r)+\tr{P_+\lb \tilde{\rho}_r-\sigma\rb}.
\end{align}
Inserting Eq.~\eqref{equ:decompositionnegativepart} into~\eqref{equ:decompositiontrace} we conclude that
\begin{align*}
    \|\tilde{\rho}_r-\sigma\|_1=2\tr{P_+\lb \tilde{\rho}_r-\sigma\rb}+\tau(\rho,r)\leq 2\tr{P_+\lb \rho-\sigma\rb}+\tau(\rho,r).
\end{align*}
Finally, note that $P_+$ has rank at most $r$ by construction (it is the projector onto the positive range of $\tilde{\rho}_r - \sigma$ and $\tilde{\rho}_r$ has rank $r$) and therefore obeys $\|P_+ \|_2 \leq \sqrt{r}$. The Cauchy-Schwarz inequality thus asserts
\begin{align*}
  \|\tilde{\rho}_r-\sigma\|_1\leq  2\sqrt{r}\|\rho-\sigma\|_2+\tau(\rho,r).
\end{align*}
and the claim -- with $\tau (\rho,r)$ -- follows from combining this bound with Eq.~\eqref{equ:distdecompositiontrace}.
Exchanging the roles of $\rho$ and $\sigma$ provides a similar bound that features $2 \tau (\sigma,r)$ instead. Taking the minimum of both bounds establishes the claim.
\end{proof}

\begin{cor}\label{cor:effectiverank}
Let $\rho,\sigma$ be quantum states and $1\geq\varepsilon>0$ be given. Then
\begin{align*}
    \|\rho-\sigma\|_1\leq 2
r_{\mathrm{eff},\alpha}(\rho)^{\frac{1}{2}}\varepsilon^{-\frac{\alpha}{2(1-\alpha)}}\|\rho-\sigma\|_2+2\varepsilon\lb 1-\alpha\rb^\frac{1}{\alpha}.
\end{align*}

\end{cor}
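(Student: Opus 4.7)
The plan is to combine the rank-truncation bound from Lemma~\ref{lem:conversionrank} with the spectral tail estimate in Eq.~\eqref{equ:taildecay} and optimize the free integer parameter $r$. Specifically, Lemma~\ref{lem:conversionrank} states
\begin{equation*}
\|\rho-\sigma\|_1 \leq 2\sqrt{r}\,\|\rho-\sigma\|_2 + 2\tau(r,\rho)
\end{equation*}
for every $1 \leq r \leq D-1$, while Eq.~\eqref{equ:taildecay} controls the tail as $\tau(r,\rho)\leq (1-\alpha)^{1/\alpha} \bigl(r_{\mathrm{eff},\alpha}(\rho)/r\bigr)^{(1-\alpha)/\alpha}$. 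Substituting the latter gives a two-term upper bound that is monotonically increasing in $r$ through its first summand and monotonically decreasing through its second summand, so the right-hand side is minimized by balancing the two contributions.

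Next I would choose $r$ to force the tail term to equal exactly the target quantity $\varepsilon(1-\alpha)^{1/\alpha}$. Solving $(r_{\mathrm{eff},\alpha}(\rho)/r)^{(1-\alpha)/\alpha} = \varepsilon$ yields
\begin{equation*}
r_\star = r_{\mathrm{eff},\alpha}(\rho)\,\varepsilon^{-\alpha/(1-\alpha)},
\end{equation*}
whence $\sqrt{r_\star} = r_{\mathrm{eff},\alpha}(\rho)^{1/2}\,\varepsilon^{-\alpha/(2(1-\alpha))}$, which is precisely the prefactor appearing in the claim. Plugging this back into the Lemma~\ref{lem:conversionrank} bound reproduces the statement of the corollary.

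The one subtlety is that $r_\star$ is in general not an integer, and Lemma~\ref{lem:conversionrank} applies only for integer $r\in\{1,\ldots,D-1\}$. I would handle this by taking $r=\lceil r_\star\rceil$, using monotonicity of $\tau(r,\rho)$ in $r$ to keep the tail term at most $\varepsilon(1-\alpha)^{1/\alpha}$, and noting that $\sqrt{\lceil r_\star\rceil}$ only increases the first term by a constant factor that can be absorbed (or alternatively, checking separately the degenerate regimes $r_\star<1$ and $r_\star \geq D-1$, where the bound holds trivially since $\|\rho-\sigma\|_1\leq 2$ and one of the two summands already exceeds $2$). This integer-rounding bookkeeping is the only non-routine part of the argument; the main content is the balanced choice of $r$ above.
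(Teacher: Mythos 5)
Your proposal is correct and follows essentially the same route as the paper: the paper's proof simply substitutes $r = r_{\mathrm{eff},\alpha}(\rho)\,\varepsilon^{-\alpha/(1-\alpha)}$ into Eq.~\eqref{equ:taildecay} to bound the tail by $\varepsilon(1-\alpha)^{1/\alpha}$ and then invokes Lemma~\ref{lem:conversionrank}, exactly your balanced choice of $r$. If anything, you are more careful than the paper, which silently ignores the integrality constraint on $r$ that you address via rounding.
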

\begin{proof}
From the tail decay estimate in Eq.~\eqref{equ:taildecay} we obtain:
\begin{align*}
    \tau\lb r_{\mathrm{eff},\alpha}(\rho)\varepsilon^{-\frac{\alpha}{1-\alpha}},\rho\rb\leq \varepsilon\lb 1-\alpha\rb^\frac{1}{\alpha}.
\end{align*}
The claim then follows from combining this estimate on the decay of $\tau(r,\rho)$ with Lemma~\ref{lem:conversionrank}.
\end{proof}
Thus, we see that a bound on the $\alpha$-R\'enyi of the target state $\rho$ allows us to estimate how well the Frobenius norm approximates the trace norm. Moreover, we recover the bound based on the rank in the limit $\alpha\to 0$.
Let us now restate Thm~\ref{thm:main-detail} incorporating the effective rank.
\begin{thm}[Re-statement of Theorem~\ref{thm:main-detail} with effective rank] \label{thm:main-stable}
Suppose that we wish to reconstruct a $D$-dimensional target state $\rho$ with effective rank $r_{\text{eff},\alpha}(\rho)$ up to accuracy $\cO(\epsilon)$ in trace distance with probability at least $1-\delta$. 
Then, Hamiltonian Updates -- Algorithm~\ref{alg:HUtomo_last} -- based on any basis measurement primitive with parameters $\theta_{\mathcal{E}}(\rho),\tau_{\mathcal{E}}(\rho)>0$ achieves this goal, provided that we make the following parameter choices:
\begin{align*}
\varepsilon =& \theta_{\mathcal{E}}(\rho)r_{\mathrm{eff},\alpha}(\rho)^{-\frac{1}{2}}\epsilon^{1+\frac{\alpha}{2(1-\alpha)}}& \text{(accuracy within the algorithm)},\\
\eta =& \varepsilon/8 = \theta_{\mathcal{E}}(\rho)r_{\mathrm{eff},\alpha}(\rho)^{-\frac{1}{2}}\epsilon^{1+\frac{\alpha}{2(1-\alpha)}}/8 & \text{(step size)}, \\
T=& \lceil 32\log(D)/\varepsilon^2 \rceil = \lceil 32 \log (D) r_{\mathrm{eff},\alpha}(\rho)\epsilon^{-2-\frac{\alpha}{\alpha-1}}/\theta_{\mathcal{E}}^2(\rho) \rceil & \text{(maximum number of iterations)}, \\
L=& \lceil \log(T) \log (1/\delta)/\tau_{\mathcal{E}}(\rho) \rceil 
& \text{(size of the control loops)}.
\end{align*}
This corresponds to at most
\begin{equation*}
M = TL = \log (1/\delta) T \log (T)/\tau_{\mathcal{E}}(\rho) = \tilde{\mathcal{O}}(r_{\mathrm{eff},\alpha}(\rho)\epsilon^{-2-\frac{\alpha}{\alpha-1}}/(  \tau_{\mathcal{E}}(\rho) \theta_{\mathcal{E}}(\rho)^2 ) \big)
\end{equation*}
different measurement settings.
\end{thm}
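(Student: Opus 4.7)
The plan is to combine two ingredients that are already in hand: the Frobenius-norm convergence guarantee of Proposition~\ref{prop:convergence-accuracy} and the effective-rank version of the trace-to-Frobenius conversion in Corollary~\ref{cor:effectiverank}. The structure is identical to that of Theorem~\ref{thm:main-detail}, so the entire content of the proof is a parameter-matching calculation in which the rank $r$ is replaced by $r_{\mathrm{eff},\alpha}(\rho)$ via the continuous relaxation.

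First I would run Algorithm~\ref{alg:HUtomo_last} with an internal accuracy parameter $\varepsilon>0$ (to be chosen later), step size $\eta=\varepsilon/8$, control-loop size $L=\lceil\log(T)\log(1/\delta)/\tau_{\mathcal{E}}(\rho)\rceil$, and maximum iteration count $T=\lceil 32\log(D)/\varepsilon^2\rceil$. By Proposition~\ref{prop:convergence-accuracy}, the output $\sigma_\star$ satisfies
\[
\|\rho-\sigma_\star\|_2\leq \varepsilon/\theta_{\mathcal{E}}(\rho)
\]
with probability at least $1-\delta$. Next I would invoke Corollary~\ref{cor:effectiverank} with an auxiliary parameter $\epsilon_0\in(0,1]$,
\[
\|\rho-\sigma_\star\|_1\leq 2r_{\mathrm{eff},\alpha}(\rho)^{1/2}\,\epsilon_0^{-\alpha/(2(1-\alpha))}\,\|\rho-\sigma_\star\|_2+2\epsilon_0(1-\alpha)^{1/\alpha},
\]
and match the two summands to order $\epsilon$. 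Setting $\epsilon_0\asymp\epsilon$ handles the tail term (the $(1-\alpha)^{1/\alpha}$ factor is an $\alpha$-dependent constant, absorbable in $\mathcal{O}(\cdot)$), and substituting the Frobenius bound into the first term gives
\[
r_{\mathrm{eff},\alpha}(\rho)^{1/2}\,\epsilon^{-\alpha/(2(1-\alpha))}\,\varepsilon/\theta_{\mathcal{E}}(\rho)\asymp \epsilon.
\]
Solving for $\varepsilon$ recovers precisely the advertised choice
\[
\varepsilon=\theta_{\mathcal{E}}(\rho)\,r_{\mathrm{eff},\alpha}(\rho)^{-1/2}\,\epsilon^{1+\alpha/(2(1-\alpha))},
\]
and plugging this value into $T=\lceil 32\log(D)/\varepsilon^2\rceil$ and $M=TL=T\log(T)\log(1/\delta)/\tau_{\mathcal{E}}(\rho)$ yields the scalings in the theorem.

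Since the heavy lifting is already done inside Proposition~\ref{prop:convergence-accuracy} (mirror-descent convergence) and Corollary~\ref{cor:effectiverank} (tail decay of low-R\'enyi states), no substantial obstacle remains. The only item that requires care is making sure that the auxiliary parameter $\epsilon_0$ and the internal accuracy $\varepsilon$ are chosen to balance -- rather than let one dominate -- the two summands in Corollary~\ref{cor:effectiverank}; once this balance is struck, the remaining claims about $\eta$, $T$, $L$, and $M$ follow by direct substitution, exactly mirroring the proof of Theorem~\ref{thm:main-detail} with $r$ replaced by $r_{\mathrm{eff},\alpha}(\rho)\,\epsilon^{-\alpha/(1-\alpha)}$.
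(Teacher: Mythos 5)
Your proposal is correct and follows exactly the route the paper intends: combine the Frobenius-norm guarantee of Proposition~\ref{prop:convergence-accuracy} with the effective-rank conversion of Corollary~\ref{cor:effectiverank}, balance the two summands by taking the auxiliary parameter proportional to $\epsilon$, and solve for the internal accuracy $\varepsilon$, which reproduces Eq.~\eqref{equ:adjustedepsilon} and hence the stated $T$, $L$ and $M$. Your derivation also correctly yields the exponent $\epsilon^{-2-\frac{\alpha}{1-\alpha}}$ for $T$ (the theorem's printed exponent $-2-\frac{\alpha}{\alpha-1}$ has a sign slip in the denominator), so no changes are needed.
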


\section{Fast matrix vector multiplication for approximate unitary designs and Cliffords}\label{sec:fastvecmulti}
The computational speedups obtained by our algorithm relied on the fact that is possible to perform vector matrix multiplication in time $\tcO(D)$ for the unitaries used in the algorithm, what we called the FMVM property. Let us now show that indeed, all the measurement setups considered in this work have the aforementioned property.

Let us start with Cliffords in $D=2^n$ and a brief review of their properties. It is possible to specify a Clifford gate by a list of parameters $(\alpha,\beta,\gamma,\delta,p,s)$, where the first four parameters are $n\times n$ matrices with bits and $p,s$ are vectors with $n$ bits~\cite{Koenig_2014}.
We then have that a $C\in\mathrm{Cl}(D)$ specified by these parameters acts as:
\begin{align*}
    CX_jC^\dagger=(-1)^{p_j}\prod\limits_{i=1}^kX_i^{\alpha_{ij}}Z_i^{\beta_{ji}},\quad CZ_jC^\dagger=(-1)^{s_j}\prod\limits_{i=1}^kX_i^{\gamma_{ji}}Z_i^{\delta_{ji}},
\end{align*}
where $X_j$ and $Z_j$ the local Pauli operators.
Given these parameters, it is possible to find a circuit with $\cO(n^2)$ gates only consisting of Hadamard, CNOT and P gates in $\cO(n^2/\log(n))$ time~\cite{Aaronson_2004}. Moreover, the authors of~\cite{Koenig_2014} give a protocol to sample from the Clifford group efficiently in time $\cO(n^3)$ and whose output is given in terms of the aforementioned parameters. Thus, we conclude that for the Clifford group it is possible to sample, store a classical description and find a decomposition into $\cO(n^2)$ simple local gates in $\textrm{poly}(n)$ time. With this in  mind, we have:
\begin{lem}\label{lem:clifffast}
Let $C\in\mathrm{Cl}(D)$ be a random element of the Clifford group. Then we can compute $Cx$ for $x\in\C^D$ in time $\tcO(D)$.
\end{lem}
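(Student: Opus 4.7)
The plan is to reduce matrix-vector multiplication by a Clifford unitary to a short sequence of local gate applications, each of which can be carried out in linear time in the ambient dimension $D = 2^n$.

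First, I would invoke the decomposition result stated just before the lemma: given the $\mathcal{O}(n^2)$-bit tableau description $(\alpha,\beta,\gamma,\delta,p,s)$ of a Clifford $C$, one can produce, in $\mathcal{O}(n^2/\log n)$ classical time, an explicit circuit $C = U_K U_{K-1} \cdots U_1$ of length $K = \mathcal{O}(n^2)$ where each $U_i$ is a Hadamard, phase, or CNOT gate acting on one or two specified qubits (Aaronson--Gottesman). Since the sampling procedure of Koenig--Smolin already hands us the tableau, this preprocessing adds only a $\mathrm{polylog}(D)$ overhead and can be safely absorbed into the $\tilde{\mathcal{O}}(D)$ cost.

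Next, I would compute $Cx$ by applying the gates in order: $x \mapsto U_1 x \mapsto U_2 U_1 x \mapsto \cdots \mapsto C x$. The key observation is that a $1$-qubit gate acting on qubit $j$ couples only pairs of computational basis coefficients $(x_{i_0}, x_{i_1})$ whose indices differ solely in bit $j$; there are exactly $D/2$ such pairs and each is updated by an $\mathcal{O}(1)$ $2 \times 2$ matrix multiplication, so a single-qubit gate can be applied in place in $\mathcal{O}(D)$ time and $\mathcal{O}(D)$ memory. A CNOT on qubits $(j,k)$ similarly partitions the coefficients into $D/4$ groups of four, each of which is updated in constant time, again totalling $\mathcal{O}(D)$ work. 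Hence each of the $K = \mathcal{O}(n^2)$ elementary gates contributes $\mathcal{O}(D)$ to the runtime.

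Summing over the circuit yields a total cost of $\mathcal{O}(n^2 D) = \mathcal{O}(\mathrm{polylog}(D)\,D) = \tilde{\mathcal{O}}(D)$, establishing the claim; the same bound clearly applies to $C^\dagger x$ by running the decomposition in reverse with each gate replaced by its (easily computable) inverse. I do not anticipate a real obstacle here: the only subtlety is confirming that the Aaronson--Gottesman synthesis step is compatible with the tableau output of the Koenig--Smolin sampler, which follows directly from the fact that both protocols use the same symplectic representation of the Clifford group.
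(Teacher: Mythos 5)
Your proof is correct and follows essentially the same route as the paper: decompose $C$ into $\mathcal{O}(n^2)$ elementary Hadamard/phase/CNOT gates via the tableau description, and apply each gate in $\mathcal{O}(D)$ time; your observation that a $k$-local gate only couples groups of $2^k$ coefficients is just the paper's remark that such a gate tensored with the identity is $4$-sparse per column. No gaps.
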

\begin{proof}
As remarked above, we can assume that the Clifford gate is presented as a sequence of $\cO(D^2)$ gates acting on at most $2$ qubits consisting of local Hadamard, CNOT and P gates. Now note that these local gates tensored with identity gates have at most $4$-sparse columns, as tensoring with the identity preserves the number of nonzero entries per column. Moreover, it is also possible to determine which entries are nonzero in $\cO(n)$ time.
Multiplying a vector with a $4$-sparse matrix with knowledge of the nonzero entries can be done in time $\cO(D)$. Thus, multiplying by each gate takes time $\cO(D)$. We conclude that we can multiply a vector by the sequence of gates in time $\tcO(D)$.
\end{proof}
Arguing in the same way as before, and noting that random local quantum circuits of polynomial depth give rise to approximate designs~\cite{Brand_o_2016} we also have that:
\begin{fact}
    Let $\mu$ be a $(4,\epsilon n^{-3})$ approximate unitary design given by a local quantum circuit on $n$ qudits of local dimension $d$ consisting of $\textrm{poly}(n)$ two qudit gates. Then for $U$ drawn from $\mu$ we can compute $Cx$ for $x\in\C^n$ in time $\tcO(Dd^2)$.
\end{fact}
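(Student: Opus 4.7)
The plan is to mimic the strategy of Lemma~\ref{lem:clifffast} essentially verbatim, exploiting the fact that the premise already hands us a compact circuit description of $U$. By hypothesis, $U = G_L G_{L-1} \cdots G_1$ where each $G_j$ is a two-qudit gate (acting on some pair of qudits of dimension $d$, with identity on the remaining $n-2$ qudits) and $L = \mathrm{poly}(n)$. Since $\mathrm{poly}(n) = \mathrm{polylog}(D)$, any overhead of order $L$ will be absorbed by the $\tilde{\mathcal{O}}$ notation, so it suffices to bound the cost of a \emph{single} gate application by $\mathcal{O}(Dd^2)$.

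First, I would argue that each $G_j$, viewed as a $D \times D$ matrix, is $d^2$-sparse in every column and row. Indeed, if $G_j$ acts nontrivially on qudits $(a,b)$, then in the product basis $|i_1 \cdots i_n\rangle$, the entry $\langle i'_1 \cdots i'_n | G_j | i_1 \cdots i_n \rangle$ vanishes unless $i'_k = i_k$ for all $k \notin \{a,b\}$. For each fixed input basis vector, there are therefore at most $d^2$ output basis vectors that receive nonzero amplitude, and the locations of these indices can be computed in $\mathcal{O}(n)$ bit operations via simple index arithmetic (splitting the multi-index, looking up the relevant $d^2 \times d^2$ block of $G_j$, and recombining). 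The block itself is stored using only $\mathcal{O}(d^4)$ numbers, independent of $n$.

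Given this sparsity structure, computing $G_j x$ for $x \in \mathbb{C}^D$ reduces to iterating over all $D$ basis indices and, for each, performing a fixed $d^2 \times d^2$ matrix-vector multiplication on the $(a,b)$-slice, at a total cost of $\mathcal{O}(Dd^2)$ arithmetic operations (plus $\tilde{\mathcal{O}}(Dd^2)$ bit operations for the index bookkeeping). Applying the $L$ gates in sequence and reusing the output vector as the new input yields $Ux$ in time $L \cdot \mathcal{O}(Dd^2) = \mathrm{poly}(n)\cdot \mathcal{O}(Dd^2) = \tilde{\mathcal{O}}(Dd^2)$, as claimed. The same bound covers $U^{\dagger} x$ by reversing the product order and taking adjoints of the individual two-qudit blocks.

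There is no real obstacle here: the statement is a direct corollary of the observation that shallow-local circuits consist of polynomially many gates, each of which is sparse when padded with identities. The only mild care needed is verifying that the $\mathrm{poly}(n)$ factor from the circuit depth is correctly absorbed into the $\tilde{\mathcal{O}}$ notation (which suppresses $\mathrm{polylog}(D) = \mathrm{poly}(n)$ terms), and that the assumption of an efficiently samplable circuit description from the design construction of~\cite{Brand_o_2016} guarantees we actually have the gate list in hand before attempting to multiply.
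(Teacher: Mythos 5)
Your proposal is correct and follows essentially the same route as the paper, which proves this fact by ``arguing in the same way'' as its Lemma on Cliffords: decompose $U$ into $\mathrm{poly}(n)$ two-qudit gates, observe that each gate padded with identities is $d^2$-sparse per column with efficiently computable nonzero locations, apply the gates sequentially at cost $\mathcal{O}(Dd^2)$ each, and absorb the $\mathrm{poly}(n)=\mathrm{polylog}(D)$ depth factor into the $\tilde{\mathcal{O}}$. Your write-up is in fact more explicit than the paper's one-line justification, but there is no substantive difference in approach.
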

\noindent
In a nutshell, we see that it is possible to store local circuits and random Cliffords using $\tcO(1)$ classical memory and perform matrix vector multiplication in time $\tcO(D)$, establishing the FMVM property for these relevant classes of measurement ensembles.

\end{document}